\ifdefined\pdfoutput\pdfoutput=1\fi
\documentclass[10pt, letterpaper]{IEEEtran}
\usepackage{setspace} 
\usepackage{cite}
\usepackage{graphicx}
\usepackage{caption}
\usepackage{subcaption} 
\usepackage[cmex10]{amsmath}
\usepackage{mathtools}
\usepackage{amsthm}
\usepackage{amsfonts}
\usepackage[font=small,skip=0pt]{caption}
\usepackage{amssymb}
\usepackage{textcomp}
\usepackage{gensymb}
\usepackage{xcolor}
\def\BibTeX{{\rm B\kern-.05em{\sc i\kern-.025em b}\kern-.08em
    T\kern-.1667em\lower.7ex\hbox{E}\kern-.125emX}}
\usepackage[linesnumbered,ruled,vlined]{algorithm2e}
\usepackage{bm,xstring}
\usepackage{tikz}
\usepackage{hyperref}
\usetikzlibrary{matrix}
\usetikzlibrary{shapes,arrows,positioning}
\usetikzlibrary{arrows.meta, positioning, quotes}
\usetikzlibrary{decorations.pathreplacing}

\DeclareMathOperator*{\argmin}{argmin}
\usepackage{color}

\newcommand{\ignore}[1]{}

\newtheorem{lemma}{Lemma}

\newtheorem{theorem}{Theorem}
\newtheorem{corollary}{Corollary}
\theoremstyle{definition}

\newtheorem{definition}{Definition}

\newtheorem{remark}{Remark}

\begin{document}

\title{Online Scheduling for Throughput Maximization of
Time-varying Markovian Channels with Unknown Statistics}

\author{Tasmeen~Zaman~Ornee,~\IEEEmembership{Member,~IEEE,}
        Clement~Kam,~\IEEEmembership{ Member,~IEEE,}
        and~Ness~B.~Shroff,~\IEEEmembership{Fellow,~IEEE}
\thanks{A part of this manuscript has been accepted by {\it IEEE INFOCOM}, 2026 \cite{orneeinfocom2026}.}
\thanks{{T. Z. Ornee and N. B. Shroff are with the Department of Electrical and Computer Engineering, The Ohio State University, Columbus, OH, 43210 USA (email: ornee.1@osu.edu, shroff.11@osu.edu)}}
\thanks{C. Kam is with the U.S. Naval Research Laboratory, Washington, DC, 20375 USA (e-mail: clement.kam@nrl.navy.mil).}
\thanks{This work has been supported in part by the Army Research Laboratory and was accomplished under Cooperative Agreement Number W911NF-23-2-0225, by the U.S. National Science Foundation under the grants: NSF AI Institute (AI-EDGE) 2112471, CNS-2312836, CNS-2225561, and CNS-2239677, and Office of Naval Research under grant N00014-24-1-2729. The views and conclusions contained in this document are those of the authors and should not be interpreted as representing the official policies, either expressed or implied, of the Army Research Laboratory or the U.S. Government. The U.S. Government is authorized to reproduce and distribute reprints for Government purposes notwithstanding any copyright notation herein.}
}

\maketitle

\begin{abstract} 

We consider a wireless scheduling problem in downlink wireless networks with unknown channel statistics, where a Base Station (BS) sends data to multiple users. The scheduling performance relies heavily on accurate Channel State Information (CSI), which is often costly to acquire. In this paper, CSI is obtained from ACK/NACK feedback, only after each scheduled transmission. Due to limited wireless channel resources, all users cannot be scheduled for transmission simultaneously. Hence, the most recently observed CSI can be outdated.  
The traditional approach to solve scheduling problems using outdated CSI is to utilize belief states, which are calculated using the time correlation statistics of channels. However, channel statistics are often unknown; consequently, belief states can be uncountable and this approach becomes infeasible. In this paper, we introduce a new sufficient statistics for the wireless scheduling problem. Towards this effort, we characterize the CSI staleness by the Age of Channel State Information (AoCSI) and show that the latest observed CSI and its AoCSI is a sufficient statistic of the history to make the scheduling decisions. Accordingly, we are able to reduce the state space for online learning. Our goal is to develop an online scheduling algorithm that maximizes the expected sum throughput of all users over a finite time-horizon while satisfying a channel resource constraint. The formulated problem is a Restless Multi-armed Bandit (RMAB). We develop an online Maximum Gain First (Online-MGF) policy, which achieves sub-linear regret on the number of episodes. For a special case of ON/OFF channels, we are able to prove indexability and derive a closed-form expression of the Whittle index. Numerical results demonstrate that the Online-MGF policy converges to MGF and Whittle index policies with known statistics within a very few episodes. In addition, Online-MGF outperforms Maximum AoCSI First (MAF) and random policies.

\end{abstract}


\section{Introduction}

6G and Future-Generation (Future-G) wireless networks aims to support a massive number of users and a broad range of applications, such as interconnected Internet of Things (IoT) devices, vehicle to vehicle and vehicle to infrastructure communications, and Unmanned Aerial Vehicle (UAV) deployments.  
A fundamental challenge in these applications is the efficient allocation of limited wireless resources such as spectrum, power, and time slots among a massive number of users. This difficulty is further increased by the highly dynamic nature of the wireless environment; high-mobility nodes (e.g., UAVs and vehicles) and the use of higher-frequency bands (mmWave and Terahertz) cause wireless channel states to fluctuate rapidly due to blockage, Doppler shifts, and fast fading, which yields the static resource allocation techniques ineffective. 

One strategy to address the difficulty that arises from fluctuating channel conditions is to exploit the time correlation of the channel state information (CSI) to predict the current CSI based on outdated CSI \cite{Ouyang_Infocom, Ouyang_TMC, Javidi_TIT, Neely_2010, zhao2008myopic, liu2002opportunistic, Liu_TIT}. For example, a BS can anticipate the future capacity of a channel by understanding its time-varying characteristics. 
However, accurate time correlation statistics for a wireless channel are often unknown in practice. The lack of complete channel statistical knowledge makes the resource allocation problem highly complicated. 

Motivated by these practical limitations, we investigate an online wireless scheduling problem for throughput maximization under unknown channel statistics and imperfect CSI in this paper.
Our system consists of a BS that transmits data to $N$ users and operates in a discrete-time environment. Due to limited communication resources, the BS selects $M$ out of $N$ users for transmission at any time-slot $t$. To model the time correlations in fading channels, we assume that the wireless channel between the BS and each scheduled user evolves as a discrete-time, finite-state Markov chain. 
The CSI can be modeled by Signal-to-Noise Ratio (SNR), Bit Error Rate (BER), received signal power, etc. 
The BS obtains the CSI from an ACK/NACK feedback after each scheduled transmission. Therefore, the BS utilizes the past available observations to predict the CSI in the current time-slot before making the scheduling decisions. Moreover, the channel statistics are unknown to the BS in our model. 

Therefore, in order to maximize the throughput, the BS requires to estimate the channel statistics and the current CSI based on the available information prior to making the scheduling decisions. 
To that end, we answer the following research question in this study:

\begin{quote}
{\it How to design an efficient online scheduling algorithm for maximizing throughput while satisfying an instantaneous resource constraint with unknown channel statistics and imperfect CSI?
} 
\end{quote}

The contributions of this study are summarized below:

\begin{itemize}

\item The optimal scheduling problem for maximizing throughput can be formulated as an RMAB \cite{whittle_restless}, where each user (and its channel) is represented as an arm and each arm is described as an MDP \cite{whittle_restless}. 
Existing studies \cite{Liu_TIT, Ouyang_TMC, Ouyang_Infocom, Javidi_TIT, Ansell_2003, Neely_2010, chen2021scheduling, chen2022index} utilize belief states as sufficient statistics  which can lead to an uncountable number of possible states, specifically when the channel statistics are unknown. We are able to reduce the state space of this problem by utilizing a sufficient statistic of the history \cite{bertsekas2011dynamic}. To that end, we leverage a metric called \emph{Age of Channel State Information (AoCSI)} \cite{costa2015age}, which is defined as the time difference between the current time and the last time CSI is updated. We show that the latest observed CSI and its AoCSI form a sufficient statistic of the information history to make the scheduling decisions for each user (see Theorem \ref{theorem_simplification}). Therefore, by leveraging the latest observed CSI and its AoCSI as states, we replace the Partially Observed Markov Decision Process (POMDP) or belief MDP framework used in \cite{Liu_TIT, Ouyang_TMC, Ouyang_Infocom, Javidi_TIT, Ansell_2003, Neely_2010, chen2021scheduling, chen2022index} with a more tractable MDP framework to solve the scheduling problem with unknown channel statistics.

\item By using (i) relaxation and  Lagrangian decomposition and (ii) an optimistic Upper Confidence Bound (UCB) approach, we develop an online Maximum Gain First (Online-MGF) policy (see Algorithm \ref{algo1}). Unlike the Whittle index-based policy \cite{whittle_restless}, our policy does not require satisfying any indexability condition. Utilizing standard relaxation and Lagrangian decomposition, we decouple the original problem into multiple independent MDPs. Because we are able to reduce the state space of each MDP by finding a sufficient statistic, we can simplify the search space of the online learning algorithm. Unlike existing online learning approaches for RMABs \cite{wang2024online, wang2023optimistic}, our Online-MGF policy leverages AoCSI and do not need to estimate the entire transition probabilities (see Remarks \ref{remark1} and \ref{remark2}). It eliminates the $\mathcal{O}(N\tau^2|\mathcal{C}|^2)$ complexity of estimating full transition matrices for passive actions \cite{wang2024online} and the $\mathcal{O}(N\tau^3|\mathcal{C}|^3)$ complexity of root-finding step required by current online Whittle index policies \cite{wang2023optimistic}. This results in a more efficient per-episode complexity of $\mathcal{O}(NH\tau|\mathcal{C}|^2 + HN \log N)$ (see Remark \ref{remark3}).


\item We propose a stronger definition for regret that evaluates the performance of the proposed policy with respect to the optimal policy. Earlier studies provide regret bound for RMAB with respect to the Lagrangian function \cite{orneeinfocom2026, wang2023optimistic, shisher2023learning}. Our definition is more general and the proposed Online-MGF policy achieves sublinear regret $O(\sqrt{K \log K})$ on the number of episodes $K$ (see Theorems \ref{theorem_regret}, \ref{theorem_regret_2}, and \ref{theorem_regret_3}).

\item For the special case of a two-state Markov chain, such as an ON/OFF channel, the Online-MGF policy in Algorithm \ref{algo1} can be simplified. This is because the feedback allows us a direct mapping to the actual channel state (e.g., a NACK confirms the OFF state with probability 1). Consequently, we only need to estimate transition probabilities associated with the ON states (see Algorithm \ref{algo2}). Furthermore, we establish indexability for this setting and derive the closed-form expression of the Whittle index. The algorithm for the Online Whittle index policy is provided in Algorithm \ref{algo_3}.

\item Numerical results show that the Online-MGF policy converges faster to the MGF and Whittle index policies with known channel statistics and achieves good performance over Maximum AoCSI First (MAF) and Random policies (see Figures \ref{fig:simulationH}-\ref{fig:simulationM}). Moreover, Online-MGF policy shows similar performance as the Whittle index policy with known channel statistics (see Figures \ref{fig:simulationH}-\ref{fig:simulationM}). 
In one setup, the Online-MGF policy outperforms the Whittle index policy with known channel statistics (see Figure \ref{fig:simulationP}). 
 

\end{itemize}


\section{Related Work}

Opportunistic scheduling algorithms have been widely used and extensively studied in the literature under perfect and imperfect CSI \cite{liu2002opportunistic, tassiulas2002scheduling,  zhao2008myopic, Liu_TIT, Ouyang_Infocom, Javidi_TIT, Neely_2010, Ansell_2003}. These studies addressed the scheduling problem using the POMDP (or belief MDP) formulation which considers belief states \cite{liu2002opportunistic, tassiulas2002scheduling, zhao2008myopic, Liu_TIT, Ouyang_TMC, Javidi_TIT,  Neely_2010, Ansell_2003}. Such formulations can lead to an uncountable number of state spaces, specifically when the channel statistics are unknown. Age of Information (AoI) has emerged as an important metric to study sampling and scheduling problems, because it captures the ``freshness (or rather staleness)" of information  \cite{sun2019sampling, Chen_2022,  Shisher_2024, ornee2023whittle, xiong2022index, kadota2018scheduling, chen2021scheduling, Ornee2021, Wiener_TIT, ornee2025safety}. Building upon this concept, \emph{Age of Channel State Information (AoCSI)} was proposed in \cite{costa2015age} to analyze the impact of CSI staleness on the overall system performance. Subsequently, several papers \cite{costa2015age_2, truong2013effects, lipski2024age, chakraborty2025send} studied different scheduling schemes and the impact of AoCSI and different functions of AoCSI. When the channel statistics are unknown, using POMDP (or belief MDP) formulations with belief states makes the solution computationally intractable and introduces significant challenges for online learning algorithms. Unlike \cite{liu2002opportunistic, tassiulas2002scheduling,  zhao2008myopic, Liu_TIT, Ouyang_Infocom, Javidi_TIT,  Neely_2010, Ansell_2003}, we do not require utilizing POMDP (or belief MDP) formulation with belief states. We obtain a sufficient statistic of the history in this study, which contains the last observed CSI and its AoCSI. Therefore, instead of utilizing belief states, we can use a  smaller state space.  

\begin{figure}
\centering
\includegraphics[width=8cm]{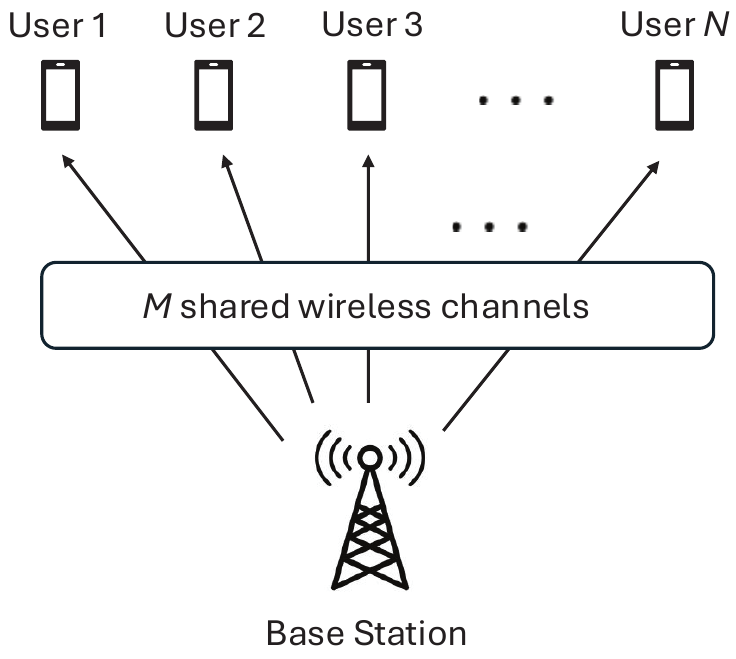}
\vspace*{1.0mm} 
\caption{System Model.}\vspace{-0.0cm}
\label{model}
\end{figure}
 
Moreover, restless bandits are a popular framework to solve resource allocation and sequential decision-making problems. Numerous scheduling problems have been formulated as RMABs and Whittle index policies were found for contexts such as multi-access channels, network utility maximization, age penalty minimization, remote estimation, etc \cite{whittle_restless, Ouyang_TMC,  Chen_2022, Shisher_2024, ornee2023whittle}. Whittle index policy is proven to be asymptotically optimal \cite{weber1990index} under an indexability condition, which is often very challenging to establish in practice. Consequently, non-indexable scheduling policies have been analyzed in recent years \cite{xiong2022index, zou2021minimizing, chen2022index, shisher2023learning, ornee2023context, ornee2025safety}. Unlike \cite{xiong2022index, zou2021minimizing, chen2022index, shisher2023learning, ornee2023context, ornee2025safety}, we consider unknown channel statistics, and thus the methods established in existing studies cannot be used to solve our problem. A gain index-based asymptotically optimal policy was proposed in  \cite{chen2022index, shisher2023learning, ornee2023context, ornee2025safety, shisher2025computation} with known system dynamics. 
When system dynamics are unknown, online learning is a promising approach to solve RMABs, where a decision-maker simultaneously estimates the unknown system dynamics and makes scheduling decisions.
There exist several online learning policies that studied RMAB problems with unknown system dynamics \cite{neu2013efficient, foster2020beyond, li2019combinatorial, ortner2012regret, akbarzadeh2023learning,wang2023optimistic, shisher2025onlinelearningwhittleindices}. Without the ``restless" setting, \cite{neu2013efficient, foster2020beyond, li2019combinatorial} studied online learning policies for Multi-armed Bandits (MABs). Under indexability, \cite{akbarzadeh2023learning} proposed a Thompson sampling-based online Whittle index policy and \cite{wang2023optimistic, shisher2025onlinelearningwhittleindices} proposed UCB-based online Whittle index policies. Unlike \cite{akbarzadeh2023learning, wang2023optimistic, shisher2025onlinelearningwhittleindices}, we develop a UCB-based online Maximum Gain First (Online-MGF) policy that does not require satisfying any indexability condition, which often does not hold in practice. Moreover, compared to the existing studies, our proposed Online-MGF policy requires us to estimate probabilities associated with one action instead of the entire transition probabilities as we find a sufficient statistic of the history.   

\section{Model and Problem Formulation}
\subsection{System Model}
Our system consists of a wireless network with one Base Station (BS) and $N$ users with $M$ shared wireless channels as depicted in Figure \ref{model}. The system is time-slotted and operates within a finite time horizon $T$. At any time-slot $t$, the BS can simultaneously transmit to at most $M$ users without interference, where $M <N$. 

The wireless channel $C_n (t)$ between the BS and user $n$ is independent across users, i.e., $C_n (t), t=1, 2, \ldots, T$, and $C_m (t), t=1, 2, \ldots, T$ are independent for all $n \neq m$. The CSI $C_n (t)$ remains static within each time-slot. However, $C_n (t)$ evolves stochastically between successive time-slots. 
The CSI $C_n (t)$ is modeled as a discrete-time, finite-state time-homogeneous Markov chain with $C$ states, 
where 
$2 \leq C < \infty$. 
For each channel $n$, $C_n (t)$ evolves according to a transition matrix $P_n \in \mathbb{R}^{N \times N}$, where
\begin{align}
P_n (C_n (t) =j |C_n (t-1) =i) = p_{ij},
\end{align}
which is the $(i, j)$-th entry of the transition matrix ${P}_n$. The transition matrix ${P}_n$ is unknown in our model.

At the beginning of each time-slot $t$, the BS does not know the actual CSI $C_n (t)$ while making the scheduling decisions. Let $a_n (t) \in \{0, 1\}$ denotes the scheduling decision to schedule user $n$ at time-slot $t$, defined as
\begin{align} \label{action}
a_n (t) \!\!=\!
\begin{cases}
& \!\!\!\!\!\!\!1, \!~\text{if user $n$  is scheduled in time-slot $t$},\\
& \!\!\!\!\!\!\!0, \!~\text{otherwise.}
\end{cases}
\end{align}
Whenever user $n$ receives data from the BS, it sends a zero-delay feedback to the BS to inform a successful or failed transmission. We denote $\beta_n(t) \in \{0, 1\}$ as the delivery indicator for user $n$ at time-slot $t$. If the transmission fails (i.e., $\beta_n(t) = 0$), the BS recognizes that the channel is in a bad condition or "OFF" state. However, if the transmission is successful (i.e., $\beta_n(t) = 1$), the user also sends the received signal power, which allows the BS to calculate the Signal-to-Noise Ratio (SNR). Because the CSI is modeled as a finite-state Markov chain, this feedback information is essential for the BS to map the successful transmission to a specific channel state $C_n(t)$ within the discretized CSI space. However, the feedback is only received at the end of the time-slot after the data is transmitted.
The delivery indicators $\beta_n (t)$ are independent across both users and time-slots. Consequently, when the BS makes the scheduling decision in time-slot $t$, the most recent information available of user $n$ is the last observed CSI $C_n (t-\Delta_n (t))$, which is observed at time-slot $t-\Delta_n (t)$. The time difference between the current time $t$ and the last CSI update time $t-\Delta_n (t)$ is called the \emph{Age of Channel State Information (AoCSI)} \cite{costa2015age},  defined as
\begin{align} \label{age}
\Delta_n (t+1) =
\begin{cases}
& \!\!\!\!\!\! 1, ~~~~~~~~~~~\text{if} {\thinspace} a_n (t) =1{\thinspace}, \\
& \!\!\!\!\!\! \Delta_n (t) +1, \text{otherwise.}
\end{cases}
\end{align}


\subsection{Problem Formulation with Known Channel Statistics}

Let $\pi = (a_n (1), a_n (2), \ldots, a_n (T))_{n=1}^{N}$ denote a scheduling policy, where $a_n (t)$ is defined in \eqref{action} for all $t=1, 2, \ldots, T$. Our goal is to find the optimal policy $\pi$ that maximizes the sum of throughput over the finite time horizon $T$, subject to the constraint on the number of users selected in each time-slot $t$. 
The sum-throughput optimization problem is formulated as follows
\begin{align}
& \max_{\pi \in \Pi} \sum_{t=1}^{T} \sum_{n=1}^{N} \mathbb{E}_{\pi} \bigg[   h( C_n (t), a_n (t)) \bigg| \mathcal{H}_n (t)\bigg] \label{problem} \\
&~ \text{s.t.} ~\sum_{n=1}^{N} a_n (t) \leq M, ~t=1, 2, \ldots, T, \label{constraint}
\end{align}
where $\Pi$ is the set of all causal scheduling policies, $h (C_n (t), a_n (t))$ is the instantaneous throughput of 
user-$n$ 
in time-slot $t$, and $\mathcal{H}_n (t)$ is the information history available for the BS at time-slot $t$, defined as
\begin{align}
\mathcal{H}_n (t) \!=\!  (C_n (\tau\!-\!\Delta_n (\tau)), \Delta_n (\tau), a_n (\tau-1), \beta_n (\tau -1))_{\tau=1}^{t},
\end{align}
which includes all previously observed channel states, their AoCSI values, scheduling decisions, and delivery indicators of user $n$ up to time-slot $t$. The instantaneous throughput $h (C_n (t), a_n (t)) =0$, if user $n$ is not scheduled at time-slot $t$ (i.e., $a_n (t) =0$). When user $n$ is scheduled at time-slot $t$ (i.e., $a_n(t) = 1$), the throughput depends on the CSI $C_n(t)$. In this framework, $C_n(t)$ can represent discretized Signal-to-Noise Ratio (SNR) levels or received signal power. For instance, if $C_n(t)$ represents discretized SNR levels, the instantaneous throughput is defined by the Shannon capacity
\begin{align}
h(C_n(t), a_n(t)) = a_n(t) W \log_2(1 + C_n(t)),
\end{align}
where $W$ represents the channel bandwidth. Alternatively, if $C_n(t)$ is the received signal power, the throughput is given by
\begin{align}
h(C_n(t), a_n(t)) = a_n(t) W \log_2(1 + \frac{C_n(t)}{\sigma^2}),
\end{align}
where $\sigma^2$ is the noise power.

Problem \eqref{problem}-\eqref{constraint} can be formulated as a Partially Observed Markov Decision Process (POMDP) \cite{krishnamurthy2016partially}. This is because the scheduler has no knowledge of the current CSI at time-slot $t$ while making the scheduling decisions. Existing studies utilize belief states to solve similar kind of POMDP problems \cite{Ouyang_TMC, Ouyang_Infocom, Javidi_TIT, Ansell_2003, Liu_TIT, Neely_2010, chen2021scheduling, chen2022index}. However, this approach requires handling over an uncountably infinite state space of continuous probability distributions. Furthermore, when the underlying channel statistics are unknown, evaluating these belief states become significantly challenging and computationally intractable for online learning algorithms.
To address this, we simplify  problem \eqref{problem}-\eqref{constraint} by reducing its state space to a countable space. We achieve this by using a sufficient statistic of the history to provide an equivalent MDP formulation. The details are presented in Section \ref{simplification}.


\section{Problem Simplification} \label{simplification}

We are able to simplify problem \eqref{problem}-\eqref{constraint} by utilizing the sufficient statistic of the history \cite{bertsekas2011dynamic}. 


\begin{theorem} \label{theorem_simplification}
If $C_n (t) \leftrightarrow C_n (t-1) \leftrightarrow C_n (t-2) \leftrightarrow \ldots$ is a Markov chain, then the last observed CSI and its AoCSI $(C_n (t-\Delta_n (t)), \Delta_n (t))_{n=1}^N$ is a sufficient statistic of the information history $(\mathcal{H}_n (t))_{n=1}^{N}$ for making the scheduling decisions $(a_n (t))_{n=1}^N$ at time-slot $t$ in problem \eqref{problem}-\eqref{constraint}.
\end{theorem}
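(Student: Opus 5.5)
We verify the standard sufficient-statistic criterion (e.g., \cite{bertsekas2011dynamic}). Let $S_n(t) = (C_n(t-\Delta_n(t)), \Delta_n(t))$. It suffices to show two things. \emph{(i)} The conditional expected reward, given the full history and the current action, depends on the history only through $S_n(t)$. \emph{(ii)} The conditional distribution of the next statistic $S_n(t+1)$, given the history and action, depends on the history only through $(S_n(t), a_n(t))$. If both hold, the POMDP \eqref{problem}-\eqref{constraint} reduces to an MDP on the state $(S_n(t))_{n=1}^N$. Then, by the dynamic programming principle, an optimal policy can be chosen as a function of $(S_n(t))_{n=1}^N$ only, since the constraint \eqref{constraint} involves only the actions. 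Because the channels are independent across users and the history factorizes as $(\mathcal{H}_n(t))_{n=1}^N$, it is enough to prove the claim per user.

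\textbf{Step 1: the belief depends only on $S_n(t)$.} We first show that for every channel state $c$,
\begin{align}
\Pr\big(C_n(t)=c \,\big|\, \mathcal{H}_n(t)\big) = \big[P_n^{\Delta_n(t)}\big]_{C_n(t-\Delta_n(t)),\,c}.
\end{align}
The history $\mathcal{H}_n(t)$ consists of three kinds of quantities. The first is the channel values observed at or before time $t-\Delta_n(t)$. The second is the AoCSI values and actions. The actions are causal functions of past observations and, under a possibly randomized policy, of independent randomization. The third is the delivery indicators, which are independent across users and slots. Conditioned on the observed value $C_n(t-\Delta_n(t))$, the earlier channel values are conditionally independent of $C_n(t)$ by the Markov property $C_n(t) \leftrightarrow C_n(t-1) \leftrightarrow \cdots$.

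The main subtlety is the actions taken after the last observation, in the slots from $t-\Delta_n(t)+1$ to $t-1$. Those actions were all $a_n=0$. Yet the decision not to schedule user $n$ could in principle carry information about $C_n$ in that window. It does not, because each such decision is a function only of information available before the unobserved channel values were realized, together with other users' independent channels. So the events $\{a_n(\tau)=0\}$ are conditionally independent of $C_n(t-\Delta_n(t)+1), \ldots, C_n(t)$ given the history up to $t-\Delta_n(t)$. I would make this precise by factorizing the joint probability of the history and $C_n(t)$. The terms contributed by actions and indicators then cancel in Bayes' rule, leaving the $\Delta_n(t)$-step transition probability. This is the step I expect to be the main obstacle: the rigorous handling of policy-induced dependence, which I would settle by induction on $t$ via the factorization.

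\textbf{Step 2: reward and transition.} The expected reward $\mathbb{E}[h(C_n(t),a_n(t)) \mid \mathcal{H}_n(t)]$ equals $a_n(t)\sum_c [P_n^{\Delta_n(t)}]_{C_n(t-\Delta_n(t)),c}\, h(c,1)$, which is a function of $(S_n(t),a_n(t))$. For the next state, consider the two actions separately.
\begin{itemize}
\item If $a_n(t)=1$, then $S_n(t+1)=(C_n(t),1)$. By Step 1, its distribution is the row $C_n(t-\Delta_n(t))$ of $P_n^{\Delta_n(t)}$.
\item If $a_n(t)=0$, then $S_n(t+1)=(C_n(t-\Delta_n(t)),\Delta_n(t)+1)$ deterministically, by \eqref{age}.
\end{itemize}
In both cases the next-state distribution depends only on $(S_n(t),a_n(t))$. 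This gives (i) and (ii), and hence the theorem.
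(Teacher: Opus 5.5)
Your proposal is correct and follows essentially the paper's route: the conditional law of $C_n(t)$ given the history is the row of $P_n^{\Delta_n(t)}$ indexed by $C_n(t-\Delta_n(t))$, the actions and delivery indicators add no further information, and the sufficient-statistic principle of dynamic programming then applies. You are more thorough than the appendix, which checks only the reward condition and asserts the independence from the actions without proof; you also verify that $(C_n(t-\Delta_n(t)),\Delta_n(t))$ evolves as a controlled Markov process, and you identify the Bayes-rule factorization that rules out leakage through the policy (best done on the joint history of all users, since the policy couples them).
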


\begin{proof}
See Appendix \ref{proof_theorem_1}.
\end{proof}

Theorem \ref{theorem_simplification} implies that $(C_n (t-\Delta_n (t)), \Delta_n (t))_{n=1}^N$ is a sufficient statistic to find the optimal scheduling decisions at time-slot $t$. Using Theorem \ref{theorem_simplification}, we can reformulate problem \eqref{problem}-\eqref{constraint} as
\begin{align}
& \max_{\pi \in \Pi} \sum_{t=1}^{T} \sum_{n=1}^{N} \mathbb{E}_{\pi} \!\bigg[ \!f (C_n (t\!-\!\Delta_n (t)), \Delta_n (t), a_n (t))\!\bigg] \label{problem_simple} \\
&~ \text{s.t.} ~\sum_{n=1}^{N} a_n (t) \leq M,~t=1,2,\ldots, T, \label{constraint_simple}
\end{align}
where 
\begin{align}
& f (C_n (t-\Delta_n (t)), \Delta_n (t), a_n (t)) \nonumber\\
=& \mathbb{E} [h (C_n (t), a_n (t))| C_n (t-\Delta_n (t)), \Delta_n (t)].
\end{align}
If $a_n (t)=0$, then $f (C_n (t-\Delta_n (t)), \Delta_n (t), 0)=0$ as $h (C_n (t), 0) = 0$; If $a_n (t)=1$, then $f (C_n (t-\Delta_n (t)), \Delta_n (t), 1)$ can be further simplified: We denote the transition probability to the current CSI $C_n(t)=c'$ from the outdated CSI $C_{n}(t-\Delta_n(t))=c$ and AoCSI $\Delta_n(t)=\delta$ by $P_n(c'|c, \delta)$. Given $C_n (t-\Delta_n (t))=c$ and $\Delta_n(t)=\delta$, we have 
\begin{align} \label{func_f_a}
f (c, \delta, 1) = \sum_{c'} h (c', 1) P_n(c'|c, \delta),
\end{align}
where $P_n(c'|c, \delta)$ is obtained from the $\delta$-step transition matrix $P_n ^{\delta}$ and
\begin{align}
P_n (C_n (t) = j | C_n (t-\delta) = i) = p_{ij} ^{\delta}.
\end{align}

In problem \eqref{problem_simple}-\eqref{constraint_simple}, the state space is reduced from an uncountable number of belief states to a countable number of states. Moreover, the state space can be made finitely countable using a large truncated AoCSI value \cite{Ouyang_Infocom, Shisher_2024}. We let $\Gamma$ denote the maximum AoCSI value in the truncated space.

Problem \eqref{problem_simple}-\eqref{constraint_simple} is an RMAB where each user $n$ is represented as an arm and each arm $n$ is described as an MDP. Each MDP associated with arm $n$ has two actions at every time-slot $t$: active ($a_n (t) =1$) and passive ($a_n (t) =0$). The problem \eqref{problem_simple}-\eqref{constraint_simple} is ``restless" because the AoCSI $\Delta_{n}(t)$ evolves even when the $n$-th
arm is passive \cite{Shisher_2024}. Since we are able to simplify the original problem \eqref{problem}-\eqref{constraint}, the RMAB \eqref{problem_simple}-\eqref{constraint_simple} has a reduced state space compared to \eqref{problem}-\eqref{constraint}. 

However, even with reduced state space, solving RMAB problems is significantly challenging, and it is intractable to find the optimal solution \cite{PSPACE_book}. A Whittle index policy is known to be an efficient approach to solve RMAB problems \cite{whittle_restless}. This policy is known to be asymptotically optimal under a complicated condition called indexability \cite{weber1990index}. Nevertheless, indexability is often very difficult to establish in practice. Recently, another policy called the gain index policy has been proven to be asymptotically optimal for RMABs \cite{verloop2016asymptotically, chen2022index, ornee2025safety}. A key advantage of the gain index-based policy is that it does not require proving indexability. In this paper, we provide a gain index-based policy for the problem \eqref{problem_simple}-\eqref{constraint_simple}. Furthermore, in Section \ref{sim}, we show numerically that our developed gain index-based  policy performs equally well compared to the Whittle index policy.



In addition, the transition matrix $P_n$ is unknown in our model. Therefore, we first study a gain index-based policy for problem \eqref{problem_simple}-\eqref{constraint_simple} with known channel state transition matrix $\mathbb{P}_n$. Utilizing the insights from the solution with known channel statistics, we will develop an online algorithm called ``Online Maximum Gain First (Online-MGF)" policy when the channel statistics are unknown. To that end, we utilize Lagrangian dual decomposition and decompose problem \eqref{problem_simple}-\eqref{constraint_simple} to $N$ independent per-arm problems, detailed in Section \ref{decoupled_section}.



\subsection{Relaxation and Lagrangian Decomposition} \label{decoupled_section}

Following the standard relaxation and Lagrangian decomposition technique for RMABs \cite{whittle_restless}, we first relax the constraint \eqref{constraint_simple}
and obtain the following relaxed problem:
\begin{align}
& \max_{\pi \in \Pi} \sum_{t=1}^{T} \sum_{n=1}^{N} \mathbb{E}_{\pi} \!\bigg[ \!f (C_n (t-\Delta_n (t)), \Delta_n (t), a_n (t))\!\bigg] \label{relax_problem} \\
&~ \text{s.t.}~\sum_{t=1}^{T} \sum_{n=1}^{N} \mathbb{E}_{\pi} [ a_n (t)] \leq M T, \label{constraint_relax}   
\end{align}
where the relaxed constraint \eqref{constraint_relax} needs to be satisfied in total finite time $T$, instead of satisfying at every time-slot $t$. Next, we apply Lagrange multiplier $\lambda \geq 0$ (also known as dual cost) to the relaxed constraint \eqref{constraint_relax} and obtain the following Lagrangian dual problem:
\begin{align} \label{lagrangian}
 L ({\pi^ *}, {\mathbf{P}}, \lambda)
 \!=\!&\max_{\pi \in \Pi} \!\sum_{n=1}^{N} \!\mathbb{E}_{\pi} \!\bigg[\! \sum_{t=1}^{T}  \!f (C_n (t\!-\!\Delta_n (t)), \Delta_n (t), a_n (t)) \nonumber\\
 &~~~~~~~~~~~~~~~~~~~- \lambda (a_n (t) - MT) \bigg],
\end{align}
where $\mathbf{P} = (P_n)_{n=1}^{N}$. The term $\lambda MT$ is a constant and does not affect the policy $\pi$, and therefore can be removed. For a given $\lambda$, problem \eqref{lagrangian} can be decomposed into $N$ independent per-arm problems, where each per-arm problem associated with arm $n$ is given by
\begin{align} \label{decoupled}
&L_n ({\pi_n^ *}, P_n, \lambda) \nonumber\\ 
=& \!\max_{\pi_n \in \Pi_n} \!\!\mathbb{E}_{\pi_n} \!\!\bigg[\! \sum_{t=1}^{T} f (C_n (t\!-\!\Delta_n (t)),\Delta_n (t)), a_n (t))- \lambda a_n (t)\bigg],
\end{align}
where $L_n ({\pi_n^ *}, P_n, \lambda)$ is the optimum value of \eqref{decoupled}, $\pi_n= (a_n (1), 
a_n (2), \ldots, a_n (T))$ is the sub-scheduling policy of arm $n$, and $\Pi_n$ is the set of all causal sub-scheduling policies of arm $n$. 

We first solve \eqref{decoupled} for given $\lambda$. Following the solution to the decoupled problem \eqref{decoupled}, we will develop a solution for the original RMAB \eqref{problem_simple}-\eqref{constraint_simple}. Then we obtain the optimal Lagrange multiplier from $\lambda^* = \argmin_{ \lambda} L(\pi^*, \bf{P}, \lambda)$, where $L(\pi^*, \bf{P}, \lambda)$ is defined in \eqref{lagrangian}.

Using dynamic programming \cite{bertsekas2011dynamic}, \eqref{decoupled} can be solved. Given $C_n (t-\Delta_n (t)) =c$ and $\Delta_n (t) = \delta$ at time-slot $t$, the Bellman optimality equation for the MDP \eqref{decoupled} is given by 
\begin{align}
V_{n, t} (c, \delta; \lambda) = \max_{a \in \mathcal{A}} Q_{n, t} (c, \delta, a; \lambda),
\end{align}
where $Q_{n, t} (c, \delta, a; \lambda)$ is the action-value function defined as
\begin{align}
& Q_{n, t} (c, \delta, a; \lambda) = \nonumber\\
& \begin{cases}
& \!\!\!\!\!\!\! f (c, \delta, 1) \!+\!\sum_{c' \in \mathcal{C}} P_n (c'|c, \delta) V_{n, t+1} (c', 1; \lambda) \!-\! \lambda,
\text{if} {\thinspace} a=1, \\
& \!\!\!\!\!\!\!  V_{n, t+1} (c, \delta+1; \lambda), ~~~~~~~~~~~~~~~~~~~~~~~~~~~~~~~~~~\text{if} {\thinspace} a=0. \label{ac_value}\\
\end{cases}
\end{align}
where $V_{n, T+1}(c, \delta)=0$ for all $c, \delta$.
Using \eqref{func_f_a}, the action-value function \eqref{ac_value} can be further simplified as
\begin{align} \label{value_simple}
& Q_{n, t} (c, \delta, a; \lambda) = \nonumber\\
& \begin{cases}
& \!\!\!\!\!\!\! \sum_{c'} P_n (c'|c, \delta) (h (c',1)+V_{n, t+1} (c', 1; \lambda)) - \lambda,
\text{if} {\thinspace} a=1, \\
& \!\!\!\!\!\!\!  V_{n, t+1} (c, \delta+1; \lambda), ~~~~~~~~~~~~~~~~~~~~~~~~~~~~~~~~~\text{if} {\thinspace} a=0.\\
\end{cases}
\end{align}
The value function $V_{n, t} (c, \delta; \lambda)$ can be calculated using the backward induction method \cite{bertsekas2011dynamic} from time $t=T, T-1, \ldots, 1$:
\begin{align}
    &V_{n, t}(c, \delta; \lambda)=\nonumber\\
    &\max\bigg\{\sum_{c'} P_n (c'|c, \delta) (h (c',1)+V_{n, t+1} (c', 1; \lambda)) - \lambda, \nonumber\\
    &~~~~~~~~~~~~V_{n, t+1} (c, \delta+1; \lambda)\bigg\},
\end{align}
where $V_{n, T+1} (c, \delta; \lambda)=0$ for all $(c, \delta)$.

If the action-value function associated with the active action $Q_{n, t} (c, \delta, 1; \lambda)$ is higher than the action-value function associated with the passive action $Q_{n, t} (c, \delta, 0; \lambda)$, i.e, $Q_{n, t} (c, \delta, 1; \lambda) > Q_{n, t} (c, \delta, 0; \lambda)$, then the optimal decision to problem \eqref{decoupled} is to schedule arm $n$. 

An equivalent policy is to employ the gain index, which is defined by \cite{chen2022index, shisher2023learning, ornee2025safety}
\begin{align} \label{gain}
\alpha_{n, t} (c, \delta) =  Q_{n, t} (c, \delta, 1; \lambda) - Q_{n, t} (c, \delta, 0; \lambda).   
\end{align} 
The gain index $\alpha_{n, t} (c, \delta)$ is the difference between two action-value functions. Following the policy with the gain index $\alpha_{n, t} (c, \delta)$, arm $n$ is scheduled if the gain index is positive, i.e., $\alpha_{n, t} (c, \delta) > 0$. 
To that end, we develop a Maximum Gain First (MGF) policy. This policy selects at most $M$ arms with the highest gain.

\begin{remark}\label{remark1}
{\it Analysis of the action-value function \eqref{value_simple} yields useful insights to design our online learning algorithm. From \eqref{value_simple}, if the action is passive (i.e., $a=0$), it is evident from \eqref{age} that the AoCSI transition will be from $\delta$ to $\delta+1$ and the last observed CSI $c$ will remain in the same state $c$. Consequently, the actual state transition is known to the BS with probability $1$ when a passive action is taken. Therefore, to design our online algorithm, we do not need to compute any probabilities associated with the passive action.}
\end{remark}

\section{Problem Statement for Online Setting}\label{onlinePolicy}


In our online learning environment, the BS interacts with $N$ arms through $K$ episodes,
where the length of the time horizon in each episode is $H$ time-slots. 
A key challenge in this setting is that the decision maker has no prior knowledge of the actual transition probabilities $\mathbf{P} = ({P}_n)_{n=1}^{N}$. In every episode $k$, the BS starts from an initial state $({\mathbf{c}} (1), {\boldsymbol{\delta}} (1)) = \{(c_1 (1), \delta_1 (1)), (c_2 (1), \delta_2 (1)), \ldots, (c_N (1), \delta_N (1))\}$ and implements a policy $\pi^ k$. Subsequently, it utilizes the observations up to $H$ time-slots to iteratively estimate the underlying transition probabilities. 

Finding the optimal policy for RMAB problems is generally intractable and PSPACE-hard \cite{PSPACE_book}. Therefore, it is significantly challenging  to evaluate the performance of an online learning policy with respect to the optimal policy of RMAB. The authors in \cite{orneeinfocom2026, wang2023optimistic} employ Lagrangian dual problem to evaluate the performance of the developed online policies. Recently, \cite{shisher2025onlinelearningwhittleindices} utilize the optimal policy to measure the regret for an online Whittle index policy. Similar to \cite{shisher2025onlinelearningwhittleindices}, we evaluate the performance of our online learning algorithm $\pi^ k$ in episode $k$ with respect to the optimal policy. 
Through simulation, we demonstrate that the Online-MGF policy converges to the MGF policy with known channel statistics within a very few episodes (see Section \ref{sim}). 

Let $J^k(\pi, \mathbf{P})$ denote the expected cumulative throughput obtained by executing policy $\pi$ during episode $k$ under the actual transition probabilities $\mathbf{P}$. Because each episode consists of $H$ time-slots, $J^k(\pi, \mathbf{P})$ is defined as the sum of the expected instantaneous throughput:
\begin{align}
J^k(\pi, \mathbf{P}) \!=\!
\mathbb{E}_{\pi, \mathbf{P}} \bigg[\!\sum_{t=1}^{H} \sum_{n=1}^N \!f(C_n(t-\Delta_n(t)), \Delta_n(t), a_n(t)) \bigg],
\end{align}
where the initial state at $t=1$ for episode $k$ is given by $(c(1), \delta(1))$. In this sequel, we define the regret.


\begin{definition} \label{def1}
\textbf{Regret.} Given the actual transition probabilities $\mathbf{P}$, the regret of the policy $\pi^{k}$ in episode $k$ 
is given by
\begin{align}
\text{Reg} (k) 
=& J^k (\pi^*, \mathbf{P}) - J^k (\pi^k, \mathbf{P}), \label{reg_new}
\end{align}
where $\pi^{*}$ is the optimal policy associated with transition probability $\mathbf{P}$.
Using \eqref{reg_new}, we get the total regret as follows:
\begin{align}
\text{Reg} (K) = \sum_{k=1}^{K} \text{Reg}(k). \label{total_regret_given}
\end{align}
\end{definition}
We get an upper bound of \eqref{reg_new} as follows
\begin{align}
\text{Reg} (k) 
=& J^k (\pi^*, \mathbf{P}) - J^k (\pi^k, \mathbf{P}) \\
\leq &  \sum_{n=1}^{N} L_n ({\pi_n ^{*}}, P_n, \lambda^*) -  J^k (\pi^k, \mathbf{P}) \label{reg_L}\\
=& \sum_{n=1}^{N} L_n({\pi_n ^{*}}, P_n, \lambda^*) - \sum_{n=1}^{N} L_n(\pi_n ^{k}, P_n, \lambda^*) \nonumber\\
& + \sum_{n=1}^{N} L_n(\pi_n ^{k}, P_n, \lambda^*) - J^k (\pi^k, \mathbf{P}), \label{reg_new_1}
\end{align}
where $\lambda^{*}$ minimizes the Lagrangian $L({\pi^{*}}, \mathbf{P}, \lambda)$, and \eqref{reg_L} holds because the Lagrangian upper bounds the original problem.

The regret bound obtained in 
\cite{orneeinfocom2026, wang2023optimistic} consists of the first two terms of \eqref{reg_new_1} that represents the performance difference associated with Lagrangian problem. Hence, Definition \ref{def1} is stronger compared \cite{orneeinfocom2026, wang2023optimistic} because the optimal policy is taken into account.


\section{Design of Online Policy for Solving \eqref{problem_simple}-\eqref{constraint_simple}} \label{policy}

We solve the online learning problem in two steps: (i) Utilizing a UCB-based approach, we first develop an Online-MGF policy in Section \ref{online_MGF} that does not need to satisfy any indexability
condition; (ii) Next, we analyze the regret and find a regret bound which illustrates that the developed algorithm achieves sublinear regret in Section \ref{regret_analysis}.

\subsection{Online-MGF Policy} \label{online_MGF}

We utilize UCB-based online learning to provide an algorithm for the Online-MGF policy. To compute the gain index in \eqref{gain}, we need to know the action-value functions, and, hence, the transition probabilities $\mathbf{P}$. Because $\mathbf{P}$ is unknown in our model, we develop an online learning method in Algorithm \ref{algo1}.

\subsubsection{Confidence Bound for Transition Probabilities}

For every arm $n$ and every episode $k$, we maintain the number of visits $B_n ^{k} (c, \delta, c')$, which denotes the number of transitions from CSI $C_{n}(t-\Delta_n(t))=c$ and AoCSI $\Delta_n(t)=\delta$ to the CSI $C_n(t)=c'$ under the active action $a_n(t)=1$ observed by the last $k$ episodes. Because our learning algorithm only needs to estimate probabilities associated with the active actions, we need to observe the state transitions under the active actions only.
For a small given constant $\mu > 0$, we define the confidence radius as follows:
\begin{align} \label{radius}
\rho_n ^{k} (c, \delta) = \sqrt{\frac{2 |\mathcal{C}| \log (2 \Gamma |\mathcal{C}| N \frac{k^2}{\mu})}{\max(1, B_n ^{k} (c, \delta))}},
\end{align}
where $B_n ^{k} (c, \delta) = \sum_{c' \in \{0, 1\}} B_n ^{k} (c, \delta, c')$ is the number of visits to state $(c, \delta)$ for arm $n$ under the active action by episode $k$, $\Gamma$ is the maximum value in the truncated AoCSI space, and $N$ is the number of users.

Let $P_n^k$ denote a candidate transition probability matrix for arm $n$ in episode $k$. The empirical transition probability is given by
\begin{align}
\hat P_n ^{k} (c'|c, \delta) = \frac{B_n ^{k} (c, \delta, c')}{B_n ^{k} (c, \delta)}.
\end{align}
Then, the confidence region of possible transition probabilities is given by
\begin{align} \label{ball_p}
\mathcal{B} ^{k} = 
& \bigg\{P_n ^k | \sum_{c' \in \mathcal{C}} \bigg|{P_n ^k (c' |c, \delta) - \hat P_n ^{k} (c' |c, \delta)}\bigg| \nonumber\\
& ~~~~~~~~~~~~~~~~~~~~~~~\leq \rho_n ^{k} (c, \delta), \forall n, c, \delta\bigg\},
\end{align}
where $P_n^k$ represents any candidate transition probability matrix for arm $n$ in episode $k$ that falls within the allowable confidence radius around the empirical estimate $\hat{P}_n^k$. In the following section, we formulate an optimization problem where our algorithm explicitly selects $P_n^k \in \mathcal{B}^k$ that maximizes the expected future reward. Thus this step transforms the candidate transition probabilities into the optimistic transition probabilities required for scheduling.

\begin{algorithm}[t] 
\SetAlgoLined
\SetAlgoNoEnd
\SetKwInOut{Input}{input}
\caption{\small Online Maximum Gain First (Online-MGF) Policy} \label{algo1}

Initialize $N$ users, constraint $M$, episode length $H$.\\
Initialize $B_n ^1 (c, \delta, c') = 0$ for all $c, \delta, c'$.\\
Initialize $\lambda^{(1)}$.\\
\For {episode $k=1,2,\ldots$}
{Reset $t=1$ and initial state $(\mathbf{c}, \boldsymbol{\delta})=(\mathbf{c} (1), \boldsymbol{\delta} (1))$.\\
Compute transition probabilities for each arm by solving the optimization problem \eqref{opt_value}-\eqref{opt_value_constraint}.\\
Compute gain indices from \eqref{gain_idx} for each arm at every time-slot $t$.\\
Schedule $M$ users with highest positive gain indices at every time-slot $t$.\\
Observe transitions and update visits $B_n ^k (c, \delta, c')$, empirical transition probability matrix $\hat {\mathbf{P}}^ {k}$, confidence region $\mathcal{B}^ {k}$.\\
Update $\lambda ^{k+1}$ from \eqref{lam_update}.}
\end{algorithm}

\subsubsection{Algorithm for Online-MGF Policy}
Using an optimistic approach, we estimate the transition probabilities \cite{orneeinfocom2026, wang2023optimistic, wang2024online}. 
We choose the optimistic transition probabilities $P_n ^k (c'|c, \delta)$ for each arm $n$ in episode $k$ that maximizes the value function within the confidence radius $\rho_n ^k (c, \delta)$. The optimization problem at episode $k$ in time-slot $t$ is given by
\begin{align}
&\max_{{P}_n ^k \in \mathcal{B} ^k} V_{n, t}^{P_n ^k} (c, \delta; \lambda^k), \label{opt_value}\\
&\quad \text{s.t.} ~{\thinspace} V_{n, t}^{P_n ^k} (c, \delta; \lambda^k) = \max_{a \in \mathcal{A}} Q_{n, t}^{P_n ^k} (c, \delta, a; \lambda^k), \label{opt_value_constraint}
\end{align}
where $Q_{n, t}^{P_n ^k} (c, \delta, a; \lambda^k)$ is the action-value function given by
\begin{align} \label{action_optimistic}
& Q_{n, t}^{P_n ^k} (c, \delta, a; \lambda^k) = \nonumber\\
&\begin{cases}
& \!\!\!\!\!\!\! \sum\limits_{c' \in \mathcal{C}} P_n ^k (1|c, \delta) (h (c', 1) + V_{n, t}^{P_n ^k} (c', 1; \lambda^k)) - \lambda^k,  \text{if} {\thinspace} a=1, \\
& \!\!\!\!\!\!\! V_{n, t}^{P_n ^k} (c, \delta+1; \lambda^k), ~~~~~~~~~~~~~~~~~~~~~~~~~~~~~~~~~~\text{if} {\thinspace} a=0. \\
\end{cases} 
\end{align}
After computing the optimistic transition probabilities and optimistic value functions, we utilize the optimistic action-value functions for each arm $n$ in episode $k$ to compute the gain index as follows:
\begin{align} \label{gain_idx}
\alpha_{n, t}^k (c, \delta; \lambda^k) = Q_{n, t}^{P_n ^k} (c, \delta, 1; \lambda^k) - Q_{n, t}^{P_n ^k} (c, \delta, 0; \lambda^k).    
\end{align}

The algorithm to compute the Online-MGF policy is provided in Algorithm \ref{algo1}. At every episode $k$, this policy selects $M$ arms with the highest positive gain indices \eqref{gain_idx}. 

\begin{remark}\label{remark2}
Existing Lagrangian-based online Whittle index policies require solving two sequential optimization problems: first, to estimate the optimistic transition probabilities and, second, to compute the Whittle indices \cite{wang2023optimistic}. Specifically, after estimating the optimistic transition probabilities, the algorithm in \cite{wang2023optimistic} requires solving an inverse root-finding problem to calculate the Whittle indices for every state. 
Unlike \cite{wang2023optimistic}, our Algorithm \ref{algo1} only requires solving one optimization problem \eqref{opt_value}-\eqref{opt_value_constraint} to obtain the optimistic transition probabilities $P_n ^k (c'|c, \delta)$ for the current CSI $C_n(t)=c'$ at time-slot $t$ given the outdated CSI $C_n(t-\Delta_n(t))=c$ and AoCSI $\Delta_n(t)=\delta$. 
Once the optimistic probabilities $P_n ^k (c'|c, \delta)$ are obtained, the gain index in \eqref{gain_idx} is computed using optimistic action-value functions \eqref{action_optimistic} from scalar subtraction. Hence, Algorithm \ref{algo1} does not need to solve the root-finding step as  required in \cite{wang2023optimistic}.
\end{remark}

\begin{remark} \label{remark3} \textbf{(Computational Complexity.)}
The Online-MGF policy in Algorithm \ref{algo1} reduces computational complexity compared to existing online learning approaches for RMABs \cite{wang2024online, wang2023optimistic}. In each episode, Algorithm \ref{algo1} computes the optimistic transition probabilities and action-value functions by using backward induction, which yields a complexity of $\mathcal{O}(N H \tau |\mathcal{C}|^2)$. Next, to make the scheduling decisions, Algorithm \ref{algo1} exhibits $\mathcal{O}(H N \log N)$ complexity for sorting. Hence, a total per-episode complexity of Algorithm \ref{algo1} is $\mathcal{O}(N H \tau |\mathcal{C}|^2 + H N \log N)$. In \cite{wang2024online}, the authors estimate the transition probability matrices for both active and passive actions, which yields the estimation complexity to scale quadratically with the state-space size as $\mathcal{O}(N \tau^2 |\mathcal{C}|^2)$. Furthermore, in \cite{wang2023optimistic}, the online Lagrangian-based Whittle index policies require an additional optimization step to calculate the Whittle indices for every state, introducing a complexity of $\mathcal{O}(N \tau^3 |\mathcal{C}|^3)$. By leveraging the deterministic behavior of passive AoCSI transitions and directly utilizing the gain indices, our Algorithm \ref{algo1} entirely eliminates the need of these computations associated with the transition matrix ($\mathcal{O}(\tau^2 |\mathcal{C}|^2)$) and Whittle indices ($\mathcal{O}(\tau^3 |\mathcal{C}|^3)$).
\end{remark}


The Lagrange multiplier $\lambda ^k$ is updated in every $k$-th episode by using the stochastic sub-gradient descent method \cite{nedic2008subgradient},\cite{avrachenkov2022whittle}. The update rule is given by
\begin{align}
&\lambda ^{k+1}= \max\bigg\{\lambda^k + \frac{\eta}{kH} \bigg(\sum_{n=1}^{N} \sum_{t=1}^{H} a_{n} ^{k} (t) -M\bigg), 0\bigg\}, \label{lam_update}
\end{align}
where $\eta/(kH)$ is the learning parameter, and action $a_{n} ^{k} (t)$ is obtained by solving the following decoupled problem:
\begin{align}
a_{n} ^{k} (t) = \arg\max_{a} Q_{n, t}^{P_n ^k} (c, \delta, a; \lambda_k).
\end{align}

\subsection{Regret Bound for Algorithm \ref{algo1}}\label{regret_analysis}

In this section, we analyze the regret of Algorithm \ref{algo1}. For clarity of analysis, we split the regret bound \eqref{reg_new} in Definition \ref{def1} into two components:
\begin{align}
& \text{Reg}_{\text{comp}_1}= \sum_{n=1}^{N} L_n({\pi_n^{*}}, P_n, \lambda^*) - \sum_{n=1}^{N} L_n(\pi_n^{k}, P_n, \lambda^*), \\
& \text{Reg}_{\text{comp}_2}=  \sum_{n=1}^{N} L_n(\pi_n^{k}, P_n, \lambda^*) - J^k (\pi^k, \mathbf{P}).
\end{align}
Specifically, $\text{comp}_1$ captures the performance degradation due to estimating the unknown channel transition probabilities. In contrast, $\text{comp}_2$ measures the sub-optimality gap between the theoretical upper bound (the relaxed Lagrangian problem) and our actual MGF  policy. In the following analysis, we first bound the regret associated with $\text{comp}_1$ and then we bound $\text{comp}_2$. Finally, we combine these two bounds to establish the total cumulative regret $\text{Reg}(K)$ of the proposed Online-MGF policy.

To bound the regret, we first show that with high probability the true transition $\mathbf{P}$ lies within the confidence region $\mathcal{B} ^{k}$ in \eqref{ball_p}.

\begin{lemma} \label{lemma1}
Given $\mu > 0$ and $k \geq 1$, we get that $\emph{Pr}(\mathbf{P} \in \mathcal{B} ^{k}) \geq 1-\frac{\mu}{k^2}$.
\end{lemma}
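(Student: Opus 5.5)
The plan is the standard UCRL-style concentration argument: an $L_1$ deviation inequality for empirical distributions (Weissman et al.), combined with a union bound over arms, states and possible visit counts. Fix an arm $n$ and a state $(c,\delta)$ with $\delta\le\Gamma$. The samples used for $\hat P_n^k(\cdot|c,\delta)$ are the next CSI values observed each time arm $n$ is activated in state $(c,\delta)$. By the Markov property of $C_n(\cdot)$ and Theorem \ref{theorem_simplification}, conditionally on the past these values are i.i.d.\ draws from $P_n(\cdot|c,\delta)=p^{\delta}_{c\,\cdot}$. The activation decisions are adaptive, so I would make this rigorous with the usual ``stack of samples'' construction. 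For each $(n,c,\delta)$, pre-draw an infinite i.i.d.\ sequence from $P_n(\cdot|c,\delta)$, and let the $j$-th active visit consume the $j$-th sample. The estimate after $B$ visits is then the empirical distribution of the first $B$ samples, whatever policy generated the visits.

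For a fixed sample count $B=b\ge1$, the Weissman inequality gives
\begin{align*}
\Pr\Big(\|\hat P_{n,b}(\cdot|c,\delta)-P_n(\cdot|c,\delta)\|_1\ge\epsilon\Big)\le (2^{|\mathcal{C}|}-2)\,e^{-b\epsilon^2/2}.
\end{align*}
We plug in $\epsilon=\rho_n^k(c,\delta)$ with $\max(1,b)=b$. The exponent is $b\rho^2/2=|\mathcal{C}|\log(2\Gamma|\mathcal{C}|Nk^2/\mu)$. The right-hand side is therefore at most $2^{|\mathcal{C}|}\,(\mu/(2\Gamma|\mathcal{C}|Nk^2))^{|\mathcal{C}|}$, and since the base is below $1$ this is at most $\mu/(\Gamma|\mathcal{C}|Nk^2)$ after absorbing constants. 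If $b=0$, the radius $\rho\ge\sqrt{2|\mathcal C|\log(\cdot)}$ already exceeds $2$ in typical regimes, and it can be taken to be trivially satisfied, because any two distributions are within $L_1$ distance $2$.

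Next, take a union bound over the $N$ arms and the $\Gamma|\mathcal{C}|$ truncated states. This gives total failure probability at most $\mu/k^2$, hence $\Pr(\mathbf{P}\in\mathcal{B}^k)\ge1-\mu/k^2$.

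The main obstacle is that $B_n^k(c,\delta)$ is random and policy-dependent, so the bound for fixed $b$ does not apply directly. The stack construction handles the dependence. We must then either union bound over all $b\le kH$ or use the fact that the radius depends on $b$. A crude union over $b$ adds a factor $kH$, which the $\log(k^2)$ term does not absorb. I would first try to show that the slack in $2^{|\mathcal{C}|}$ versus the power $|\mathcal{C}|$ in the logarithm suffices. If it does not, I would note that the intended event is the one for the realized count, and bound $\sum_b\Pr(B=b,\text{fail})\le\max_b\Pr(\text{fail}\mid b)$ under the stack coupling, using that the sample stack is independent of the event $\{B=b\}$. The independence from $\{B=b\}$ is the delicate point, and it is where I expect the care in the argument to be needed.
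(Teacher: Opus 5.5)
Your fixed-count computation is exactly the paper's proof. It uses Weissman's inequality with $m=|\mathcal{C}|$ outcomes (an active transition always resets AoCSI to $1$), substitutes $\rho_n^k(c,\delta)$ to get $x^{|\mathcal{C}|}$ with $x=\mu/(\Gamma|\mathcal{C}|Nk^2)$, applies the slack $x^{|\mathcal{C}|}\le x$, and takes a union bound over the $N\Gamma|\mathcal{C}|$ arm--state pairs. The paper stops there, plugging $n=\max\{1,B_n^k(c,\delta)\}$ into Weissman's bound as if the count were deterministic. It never addresses the adaptivity you raise, and that extra step is where your proposal has a gap.

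The fallback you lean on is false. Under the stack coupling, the event $\{B_n^k(c,\delta)=b\}$ is not independent of the first $b$ stack samples for $(n,c,\delta)$. The $j$-th sample is the observed CSI $c'$. It fixes the next state $(c',1)$, and hence the rest of the trajectory and the MGF decisions in that episode. It also enters $\hat P_n^{k'}$, the optimistic $P_n^{k'}$ and $\lambda^{k'}$ of later episodes. Whether $(c,\delta)$ is activated again therefore depends on those very samples, so $\sum_b\Pr(B=b,\mathrm{fail}_b)\le\max_b\Pr(\mathrm{fail}_b)$ has no justification. This dependence is why UCRL-type analyses union-bound over the count.

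The stack construction is the right tool, but what it actually gives is $\{\mathrm{fail}\text{ at the realized }B\}\subseteq\bigcup_{1\le b\le kH}\{\mathrm{fail}_b\}$. (The case $b=0$ is vacuous in every nontrivial regime, as you note.) With this radius the fixed-$b$ bound is the same $x^{|\mathcal{C}|}$ for every $b$, so you get at most $kH\,x^{|\mathcal{C}|}$ per arm--state pair. Your first idea, that the slack absorbs the factor $kH$, works only if $kH\,x^{|\mathcal{C}|-1}\le 1$. For $|\mathcal{C}|=2$ this reads $H\mu\le\Gamma|\mathcal{C}|Nk$, i.e.\ you need an extra hypothesis of the form $H\mu\lesssim\Gamma|\mathcal{C}|N$. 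The lemma, stated for every $\mu>0$, does not give you that.

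To close the argument you must do one of two things. Either add such a condition on $\mu$, or enlarge the logarithm in $\rho_n^k(c,\delta)$ to cover the possible counts, e.g.\ $\log(2\Gamma|\mathcal{C}|Nk^3H/\mu)$. The latter only turns $\log K$ into $\log(KH)$ in the later regret bounds.
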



\begin{proof}
See Appendix \ref{proof_lemma1}.
\end{proof}


Lemma \ref{lemma1} implies that for each episode $k$, a confidence region can be obtained within which the actual transition probability lies with high probability. Utilizing Lemma \ref{lemma1}, we bound the regret associated with $\text{comp}_1$ for each episode $k$ in the following lemma.



\begin{lemma} \label{lemma2}
For given initial state $((c_{1} (1), \delta_{1} (1)), (c_{2} (1), \delta_{2} (1)), \ldots, (c_{N} (1), \delta_{N} (1)))$, the following holds with probability $1-\mu$: 
\emph{\begin{align} \label{lemma2_3}
    & \mathrm{Reg}_{\text{comp}_1} (k) \nonumber\\
    \leq & \sum_{n=1}^{N} V_{n, 1} ^{P_n ^k} (c_{n} (1), \delta_{n} (1); \lambda^k) \!-\! V_{n, 1} ^{P_n} (c_{n} (1), \delta_{n} (1); \lambda^k). 
\end{align}}

\end{lemma}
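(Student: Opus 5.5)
The plan is the standard optimism argument of UCRL-type analyses \cite{wang2023optimistic}, applied arm by arm to the decoupled problems \eqref{decoupled}. Recall that
\begin{align*}
\mathrm{Reg}_{\text{comp}_1}(k)=\sum_{n=1}^{N}\Big[L_n(\pi_n^*,P_n,\lambda)-L_n(\pi_n^k,P_n,\lambda)\Big].
\end{align*}
We read it at the multiplier $\lambda^k$ used by Algorithm \ref{algo1} in episode $k$, which is the multiplier appearing on the right-hand side of \eqref{lemma2_3}.

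\textbf{Step 1: the good event.} By Lemma \ref{lemma1}, $\Pr(\mathbf{P}\notin\mathcal{B}^k)\le \mu/k^2$. Summing over $k$ (and rescaling $\mu$ by the constant $\sum_k k^{-2}=\pi^2/6$ if necessary), the event $\mathcal{E}=\{\mathbf{P}\in\mathcal{B}^k\ \forall k\}$ holds with probability at least $1-\mu$. All remaining steps are carried out on $\mathcal{E}$.

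\textbf{Step 2: optimism.} Fix an arm $n$. Since $P_n\in\mathcal{B}^k$ is feasible for \eqref{opt_value}--\eqref{opt_value_constraint}, and $P_n^k$ maximizes the value at the initial state, we get
\begin{align*}
V^{P_n^k}_{n,1}(c_n(1),\delta_n(1);\lambda^k)\ \ge\ V^{P_n}_{n,1}(c_n(1),\delta_n(1);\lambda^k)=L_n(\pi_n^*,P_n,\lambda^k).
\end{align*}
If the maximization is posed pointwise for every $(t,c,\delta)$, I would justify it through a monotonicity induction on the backward recursion. The passive branch of \eqref{action_optimistic} does not involve $P$ (Remark \ref{remark1}), so maximizing over $P$ commutes with the $\max_a$ and with the backward induction. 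An extended-value-iteration argument then shows that a single $P_n^k$ attains the maximum simultaneously at all states.

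\textbf{Step 3: drop the subtracted term.} The quantity $L_n(\pi_n^k,P_n,\lambda^k)$ is the value of a feasible causal policy under the true dynamics. I need it to be at least $V^{P_n}_{n,1}$ for the bound in \eqref{lemma2_3} to follow, but the inequality naturally runs the other way, since $V^{P_n}_{n,1}$ is the optimum. This is the main obstacle. My first attempt is to read $V^{P_n}_{n,1}(\cdot;\lambda^k)$ in \eqref{lemma2_3} as the true-model value of the executed policy $\pi_n^k$. With that reading,
\begin{align*}
L_n(\pi^*_n,P_n,\lambda^k)\le V^{P_n^k}_{n,1}, \qquad L_n(\pi_n^k,P_n,\lambda^k)=V^{P_n}_{n,1},
\end{align*}
where the first inequality is Step 2 and the second is the definition of the value of $\pi_n^k$. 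Subtracting and summing over $n$ gives exactly \eqref{lemma2_3}.

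The difference on the right-hand side is then the usual optimistic-minus-true value gap. This is what later lemmas will bound via the simulation lemma and the confidence radius \eqref{radius}.

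I also need to reconcile $\lambda^*$ in the definition of $\text{comp}_1$ with the multiplier $\lambda^k$. I would state the lemma at $\lambda^k$, and defer the $|\lambda^k-\lambda^*|$ discrepancy to the analysis of the subgradient update \eqref{lam_update}. That discrepancy contributes at most $H|\lambda^k-\lambda^*|$ per arm, because the values are Lipschitz in $\lambda$ with constant $H$.
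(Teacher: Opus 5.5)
Your optimism step and your resolution of the ``main obstacle'' match the paper's proof. It establishes $V^{P_n}_{n,t}(c,\delta;\lambda^k)\le V^{P_n^k}_{n,t}(c,\delta;\lambda^k)$ by the backward induction you sketch: the passive branch is deterministic, and the active branch uses the inductive hypothesis and $P_n\in\mathcal{B}^k$. Its final display has the subtracted term $V^{P_n}_{n,1}(c_n(1),\delta_n(1);\lambda^k;\pi_n^k)$, the true-model value of the executed policy, so your reading is the intended one.

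Two minor corrections:
\begin{itemize}
\item For this per-episode claim you need no union over $k$. Lemma~\ref{lemma1} at the given $k$ already gives probability at least $1-\mu/k^2\ge 1-\mu$.
\item In finite horizon, extended value iteration does not produce a single kernel that is optimal at every $(t,c,\delta)$, because the maximizer over the $L_1$-ball generally changes with $t$. You only need optimism at $t=1$, though, and that holds either way.
\end{itemize}

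The real difference is how you handle the multiplier. You evaluate both terms of $\mathrm{Reg}_{\text{comp}_1}(k)$ at $\lambda^k$ and defer an $NH|\lambda^k-\lambda^*|$ correction. That proves a weaker statement, with an additive term the lemma does not contain.

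The paper removes the first-term discrepancy at no cost. It uses that $\lambda^*$ minimizes the dual function $\lambda\mapsto L(\pi^*,\mathbf{P},\lambda)$, so $L(\pi^*,\mathbf{P},\lambda^*)\le L(\pi^*,\mathbf{P},\lambda^k)$. The constant $\lambda^kMH$ then cancels against the one in $L(\pi^k,\mathbf{P},\lambda^k)$. This dual-minimality step is the idea your Step 3 is missing.

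The paper is not fully careful either. It gets $L(\pi^k,\mathbf{P},\lambda^k)$ in the second term simply by writing it, whereas $\mathrm{comp}_1$ was defined with $L_n(\pi_n^k,P_n,\lambda^*)$. Since $\lambda\mapsto L(\pi^k,\mathbf{P},\lambda)$ is affine with slope $MH-\mathbb{E}[\sum_{n,t}a_n^k(t)]$, that conversion leaves the residual $(\lambda^*-\lambda^k)\bigl(\mathbb{E}[\sum_{n,t}a_n^k(t)]-MH\bigr)$, which has no sign in general. So your instinct that something must be deferred is right for the second term.

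The sharp version combines dual minimality for the first term with this exact affine correction for the second. The leftover is then $|\lambda^*-\lambda^k|$ times the expected budget violation of $\pi^k$ (the quantity driving \eqref{lam_update}), rather than $|\lambda^*-\lambda^k|$ times $NH$.
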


\begin{proof}
See Appendix \ref{proof_lemma2}.
\end{proof}
Next, we utilize Lemma \ref{lemma2} to bound the regret associated with $\text{comp}_1$ for $K$ episodes. In this sequel, we have the following theorem.



\begin{theorem} \label{theorem_regret}
With probability $1-\mu$, the cumulative regret of Algorithm \ref{algo1} associated with \emph{$\text{comp}_1$} in all $K$ episodes is given by
\emph{\begin{align} \label{theorem_regret_eq1}
\text{Reg}_{\text{comp}_1}(K)  \leq \mathcal{O} \bigg( V_{\max} \Gamma |\mathcal{C}| N H \sqrt{K \log K}\bigg),
\end{align}}
\!\!where $V_{\max}$ is the maximum value-function, $\Gamma$ is the maximum value of AoCSI, $|\mathcal{C}|$ is the size of the state space of CSI, $N$ is the number of users, $H$ is the time-horizon length in each episode, and $K$ is the total number of episodes.
\end{theorem}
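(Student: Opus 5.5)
Starting from Lemma~\ref{lemma2}, I will sum the per-episode bound over $k=1,\ldots,K$, which leaves an optimistic-minus-true value gap to control:
\begin{align*}
\mathrm{Reg}_{\text{comp}_1}(K)\le \sum_{k=1}^{K}\sum_{n=1}^{N}\Big(V_{n,1}^{P_n^k}(c_n(1),\delta_n(1);\lambda^k)-V_{n,1}^{P_n}(c_n(1),\delta_n(1);\lambda^k)\Big).
\end{align*}
The failure events, where $\mathbf{P}\notin\mathcal{B}^k$, contribute at most $\sum_k \mu/k^2\le \pi^2\mu/6$ in probability by Lemma~\ref{lemma1} and a union bound. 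So on an event of probability at least $1-\mu$, after rescaling $\mu$, every $P_n$ lies in $\mathcal{B}^k$ simultaneously.

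For each arm and episode I will use the standard simulation (value-difference) lemma. Let $\pi_n^k$ be the per-arm policy that is greedy for $Q^{P_n^k}$, and unroll the Bellman recursion \eqref{value_simple} for both models along the trajectory generated by $\pi_n^k$ under the true $P_n$. The optimistic value dominates the true optimal value, which dominates the true value of $\pi_n^k$. Passive steps contribute no difference, because the passive transition $(c,\delta)\to(c,\delta+1)$ is deterministic and identical in both models (Remark~\ref{remark1}). Only active steps generate a term, namely $\sum_{c'}(P_n^k-P_n)(c'|c,\delta)\big(h(c',1)+V^{P_n^k}_{n,t+1}(c',1)\big)$ plus a martingale difference. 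Since both $P_n^k$ and $P_n$ lie in $\mathcal{B}^k$, the triangle inequality bounds the $\ell_1$ distance by $2\rho_n^k(c,\delta)$. Hence each active step costs at most $2V_{\max}\rho_n^k(c,\delta)$, where $V_{\max}$ bounds $h+V$.

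Summing over time, the gap for episode $k$ is at most
\begin{align*}
2V_{\max}\,\mathbb{E}\Big[\sum_{t=1}^{H}\mathbf{1}\{a_n^k(t)=1\}\,\rho_n^k(c_n(t),\delta_n(t))\Big].
\end{align*}
The martingale part, summed over episodes, is bounded by $O(V_{\max}H\sqrt{NK\log(1/\mu)})$ via Azuma--Hoeffding, which is lower order. For the main term, substitute \eqref{radius}. Since $k\le K$, the log factor is at most $\log(2\Gamma|\mathcal{C}|NK^2/\mu)=O(\log K)$, so the bound becomes $V_{\max}\sqrt{|\mathcal{C}|\log K}$ times $\sum_k\sum_{t}\mathbf{1}\{\cdot\}/\sqrt{\max(1,B_n^k(c,\delta))}$.

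The main obstacle is the pigeonhole step. The counts $B_n^k$ are frozen within an episode, so a single state may be visited up to $H$ times before its count updates. I would use the standard episodic argument: within episode $k$, each state-visit term is at most $H/\sqrt{\max(1,B_n^{k}(c,\delta))}$ times the number of visits divided by $H$. More simply, I would bound the sum of $1/\sqrt{B}$ over counts growing by at most $H$ per episode by $H\sum_{j}1/\sqrt{j}$-type sums restricted to each state $(c,\delta)$. Per state this gives $O(H\sqrt{K})$, accounting for the $K$ episodes and the possible $H$ visits per episode. Summing over the $\Gamma|\mathcal{C}|$ states and $N$ arms, and applying Cauchy--Schwarz loosely (so the state count enters linearly rather than as a square root), then absorbing $\sqrt{|\mathcal{C}|}$, yields
\begin{align*}
\mathcal{O}\big(V_{\max}\Gamma|\mathcal{C}|NH\sqrt{K\log K}\big).
\end{align*}
The $(1-\mu)$ event handling and the lower-order martingale term complete the proof.
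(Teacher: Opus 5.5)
Your route is the paper's: Lemma~\ref{lemma2}, then a value-difference expansion where only active steps contribute because passive transitions are deterministic. After that come the $2\rho_n^k$ bound from the triangle inequality inside $\mathcal{B}^k$, substitution of \eqref{radius}, and a pigeonhole sum over visit counts. Your union bound over episodes, giving one good event of probability at least $1-\pi^2\mu/6$, is cleaner than the paper's device of adding an expected $\mathcal{O}(\mu)$ term.

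The step that does not go through as written is the final count. Your per-state bound $H\sum_{j\le K}1/\sqrt{j}=\mathcal{O}(H\sqrt{K})$, summed linearly over the $\Gamma|\mathcal{C}|$ states, gives $\mathcal{O}(\Gamma|\mathcal{C}|H\sqrt{K})$ per arm. Multiplying by the factor $\sqrt{2|\mathcal{C}|\log(2\Gamma|\mathcal{C}|NK^2/\mu)}$ from the radius gives $\mathcal{O}(V_{\max}\Gamma|\mathcal{C}|^{3/2}NH\sqrt{K\log K})$. That extra $\sqrt{|\mathcal{C}|}$ cannot be absorbed, because $|\mathcal{C}|$ is an explicit parameter of the bound.

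What you need is the count-based pigeonhole the paper takes from Lemma E.3 of \cite{wang2023optimistic}. With $B_n^{k+1}=B_n^k+\gamma_n^k$ and $\gamma_n^k\le H$, you have $\gamma_n^k(c,\delta)/\sqrt{\max(1,B_n^k(c,\delta))}\le 2\sqrt{H+1}\big(\sqrt{B_n^{k+1}(c,\delta)}-\sqrt{B_n^k(c,\delta)}\big)$. This telescopes to $\mathcal{O}(\sqrt{H}\sqrt{B_n^{K+1}(c,\delta)})$ per state. Cauchy--Schwarz with $\sum_{(c,\delta)}B_n^{K+1}(c,\delta)\le KH$ then gives $\mathcal{O}(H\sqrt{\Gamma|\mathcal{C}|K})$ per arm, so the state count enters as a square root. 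Combined with the radius, the total is $\mathcal{O}(V_{\max}\sqrt{\Gamma}|\mathcal{C}|NH\sqrt{K\log K})$, which is within the stated bound.

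A second issue is one the paper's proof also leaves unaddressed. The pigeonhole step, and your Azuma step, require that the active visits in the value-difference sum be exactly the visits that increment $B_n^k$. You unroll along the trajectory of the decoupled greedy policy $\pi_n^k$, which activates arm $n$ whenever its gain is positive. Algorithm~\ref{algo1}, however, executes the coupled top-$M$ rule, so the counts come from a different trajectory. Visits that $\pi_n^k$ makes but the executed policy does not make never increase $B_n^k$, so $\sum 1/\sqrt{\max(1,B)}$ along the $\pi_n^k$ trajectory is not controlled by any pigeonhole. To close this, you could run the value-difference expansion along the executed trajectory instead. You would then have to pay separately for the gap between $V^{P_n^k}_{n,1}$ and the value of the executed per-arm behaviour under $P_n^k$, which is not a confidence-width term. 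Alternatively, you would have to justify identifying the two trajectories.
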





\begin{proof}
See Appendix \ref{proof_theorem_regret}.
\end{proof}

Next, we will find the regret associated with $\text{comp}_2$. Define
\begin{align} \label{g}
g(N) = \sum_{n=1}^{N} L_n(\pi_n ^k, P_n ^k, \lambda^*) - J^k (\pi^k, \mathbf{P}^k).
\end{align}
The function $g (N)$ represents the gap between the performance of the Lagrangian problem under its optimal solution and the main problem under the MGF policy. In $g (N)$, both the optimal solution of Lagrangian problem and the MGF policy are associated with transition probability $\mathbf{P}^k$. Hence, $g (N)$ does not represent the regret, but instead it measures the optimality gap of the MGF policy even if we know the actual transition probabilities. 

In this sequel, we have the following lemma.
\begin{lemma} \label{lemma_g}
The sub-optimality gap \eqref{g} between the Lagrangian dual problem and the original problem is upper-bounded as follows
\begin{align}
g(N) \leq O \bigg(\frac{K}{\sqrt{N}}\bigg),
\end{align}
where $N$ is the number of users and $K$ is the number of episodes.
\end{lemma}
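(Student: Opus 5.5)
The plan is to view $g(N)$ as the per-episode gap between the relaxed Lagrangian value and the value of the MGF policy, both computed under the same transition matrices $\mathbf{P}^k$. We first bound this gap by $O(\sqrt{N})$ relative to an $N$-scaled problem, i.e., by $O(1/\sqrt{N})$ per unit of normalized reward. Summing over the $K$ episodes then gives the stated $O(K/\sqrt{N})$. We take the asymptotic optimality result for gain-index policies \cite{verloop2016asymptotically, chen2022index, ornee2025safety} as the structural template. We adapt it to the finite-horizon, truncated state space $\mathcal{C}\times\{1,\dots,\Gamma\}$, in the regime where $M/N$ is held fixed. For the scaling argument we group arms into finitely many classes with identical transition matrices.

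\textbf{Step 1: fluid (mean-field) reformulation.} Let $x_t(c,\delta)$ be the fraction of arms in state $(c,\delta)$ at slot $t$, and $u_t(c,\delta)$ the fraction that are both in that state and active. The relaxed problem \eqref{relax_problem}-\eqref{constraint_relax} with multiplier $\lambda^*$ is equivalent to a linear program in $(x_t,u_t)$. Its value, multiplied by $N$, upper-bounds $\sum_n L_n(\pi_n^k,P_n^k,\lambda^*)$ up to the constant $\lambda^* MT$. Complementary slackness shows that the LP-optimal action at each $(c,\delta,t)$ follows the sign of the gain index $\alpha_{n,t}(c,\delta)$ in \eqref{gain}, with randomization only at zero gain. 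Hence the LP solution ranks states in the same order that MGF uses.

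\textbf{Step 2: couple MGF with the fluid trajectory.} Under MGF, the empirical occupation $X_t^N$ evolves as a sum of $N$ independent per-arm transitions. Passive transitions are deterministic, since $\delta\mapsto\delta+1$ (Remark \ref{remark1}). Active transitions are drawn from $P_n^k(\cdot\mid c,\delta)$. We would show that the map $X_t\mapsto X_{t+1}$ induced by the priority rule ``activate the top $M$ positive gains'' is Lipschitz in $\ell_1$, with constant at most $L$ depending on $|\mathcal{C}|,\Gamma$. Combining this with a Hoeffding/variance bound, $\mathbb{E}\|X_{t+1}^N - \phi(X_t^N)\|_1 \le \sqrt{|\mathcal{C}|\Gamma/N}$, and iterating over $H$ slots gives $\mathbb{E}\|X_t^N - x_t\|_1 \le O(L^H/\sqrt{N})$. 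Since the rewards $f$ are bounded by $h_{\max}$, the per-arm reward gap is $O(1/\sqrt{N})$. Summing over the $K$ episodes yields $g(N)\le O(K/\sqrt{N})$, with $H$, $\Gamma$, $|\mathcal{C}|$ absorbed into the constant.

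\textbf{Main obstacle.} The Lipschitz property of the priority map fails near the threshold state, where the LP splits activation between the active and passive actions. It also fails when the budget $M$ is not binding in the fluid solution. I would first try the standard fix of \cite{verloop2016asymptotically}: assume a non-degeneracy condition (at most one state per slot has zero gain, and the fluid budget is tight or slack by a margin). Under that assumption the fluid policy is a priority policy that is piecewise linear in $x$, which restores the Lipschitz bound. The other subtlety is the time-varying $\lambda^k$ versus $\lambda^*$. I would handle it by computing gains at $\lambda^*$ in the definition of $g(N)$, as stated, so that the learning error is charged entirely to $\text{comp}_1$.
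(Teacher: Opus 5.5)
\textbf{The gap: the fluid trajectory of MGF is never identified with the benchmark, and for this benchmark it cannot be.}

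Your route differs from the paper's. The paper bounds $g^k(N)$ by $\Delta_{\max}$ times the expected number of arms on which MGF disagrees with the relaxed-optimal action, in the style of \cite{brown2020index}, and then asserts a $C_v\sqrt{N}$ mismatch count via a CLT. You instead couple the occupation measure to a fluid trajectory.

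The problem lies between your Steps 1 and 2. Step 2 only shows that the empirical MGF trajectory stays within $O(L^H/\sqrt{N})$ of MGF's \emph{own} fluid trajectory $x_{t+1}=\phi(x_t)$. To bound $g(N)$ you also need the fluid reward of MGF to equal the benchmark, i.e.\ iterating $\phi$ from the initial condition must reproduce the LP-optimal $(x_t,u_t)$. Agreeing on the ranking of states does not give this. In each slot MGF activates mass $\min\{M/N,\ \text{mass of positive-gain states}\}$. The benchmark in \eqref{g}, however, is the relaxation \eqref{constraint_relax} with one horizon-aggregated budget and one multiplier. Its solution may activate much more than $M/N$ in some slots and less in others, so the two fluid trajectories separate by $\Theta(1)$.

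Concretely, take identical ON/OFF arms with $r<p<1$, $H=2$, $2M\le N$, all starting in state $(1,1)$. The relaxation can activate $2M$ arms at $t=1$ and collect $2Mp$. The best per-slot-feasible policy collects $Mp+Mp^2+M(1-p)\big(p^2+(1-p)r\big)$. The shortfall is $M(1-p)^2(p-r)$, which is linear in $N$ when $M/N$ is fixed. Against this benchmark no coupling argument can give an $O(1/\sqrt{N})$ per-arm gap. The paper's CLT step makes the same assumption implicitly, namely that the relaxed solution uses about $M$ activations per slot.

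\textbf{What a repair needs.} You need a benchmark that the per-slot constraint can track:
\begin{itemize}
\item Relax per slot, $\sum_n\mathbb{E}[a_n(t)]\le M$ for each $t$.
\item Use multipliers $\lambda_t$ that are dual-optimal for the model at which $g$ is evaluated, namely $\mathbf{P}^k$, not $\mathbf{P}$. For a non-optimal multiplier, weak duality gives $\sum_n L_n+\lambda MH\ge N\cdot(\text{LP value})$. That is the wrong direction for an upper bound, and the excess is generally $\Theta(N)$.
\item Rank by the resulting gains, and break ties among zero-gain states consistently with the LP, so the budget is filled whenever $\lambda_t>0$.
\end{itemize}
Then $\phi$ reproduces the LP trajectory and your Step 2 applies. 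Note that this is a policy different from Algorithm~\ref{algo1}, which ranks by gains at a single multiplier $\lambda^k$.

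\textbf{Three further corrections.}
\begin{itemize}
\item[(i)] Even after the repair you get $O(1/\sqrt{N})$ per arm, i.e.\ $O(\sqrt{N})$ for $g(N)$ as written in \eqref{g}, which sums over all $N$ arms. Your last sentence therefore proves the bound only for $g(N)/N$. The paper's proof does the same: it divides by $N$ and then drops the factor.
\item[(ii)] The obstacle you single out is not one. For a fixed priority order in slot $t$, the fluid activation of state $s$ is $\min\{x_s,(M/N-\sum_{s'\succ s}x_{s'})^+\}$ (zero for non-positive gain). This is continuous and piecewise affine, hence Lipschitz, including at the split threshold state and with a slack budget. Non-degeneracy only upgrades $1/\sqrt{N}$ to an exponential rate. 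So that assumption is unnecessary, and it does not address the real issue above.
\item[(iii)] The finite-class reduction is not available at $\mathbf{P}^k$. The optimistic matrices are built from per-arm counts $B_n^k$ and generally differ across all $N$ arms. The occupation measure therefore does not live on a fixed finite-dimensional simplex unless you pool estimates by class.
\end{itemize}
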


\begin{proof}
See Appendix \ref{proof_lemma_g}.
\end{proof}

An immediate corollary of Lemma \ref{lemma_g} is
\begin{corollary}
If the number of users $N \to \infty$, then $g (N) \to 0$.
\end{corollary}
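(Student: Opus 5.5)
The statement to prove is the corollary to Lemma \ref{lemma_g}: if $N\to\infty$ then $g(N)\to 0$. I will obtain it directly from the bound $g(N)\le O(K/\sqrt{N})$ of Lemma \ref{lemma_g}. Before doing so I will fix the regime. The corollary is read with the number of episodes $K$, the episode length $H$, the AoCSI truncation $\Gamma$, and the CSI alphabet $\mathcal{C}$ held fixed. The resource budget $M$ is allowed to grow in proportion to $N$, i.e.\ $M=\gamma N$ for a fixed fraction $\gamma\in(0,1)$, as is standard in the asymptotic-optimality literature for gain-index policies \cite{verloop2016asymptotically, chen2022index}.

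The first step is to check that the hidden constant in $O(K/\sqrt{N})$ is uniform in $N$. The bound in Lemma \ref{lemma_g} comes from a concentration argument on the fraction of arms in each state $(c,\delta)$. That argument controls the deviation of the empirical occupancy measure from the fluid (relaxed) occupancy measure by roughly $\sqrt{|\mathcal{C}|\Gamma/N}$ per slot. The per-arm rewards are bounded by $\max_c h(c,1)$, and this deviation is accumulated over $H$ slots and $K$ episodes. As a result the constant depends only on $H$, $\Gamma$, $|\mathcal{C}|$, $\max_c h(c,1)$ and $\gamma$, and not on $N$. So there is a constant $D$, independent of $N$, with $g(N)\le DK/\sqrt{N}$.

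The second step is a lower bound on $g(N)$. Because $\mathbf{P}^k$ is used in both terms of \eqref{g}, $\sum_n L_n(\pi_n^k,P_n^k,\lambda^*)$ is the Lagrangian relaxation evaluated under the same model as $J^k(\pi^k,\mathbf{P}^k)$. The Lagrangian relaxation upper-bounds the constrained problem, exactly as in \eqref{reg_L}, so $g(N)\ge 0$ up to the choice of $\lambda^*$. If $\lambda^*$ is not exactly optimal for $\mathbf{P}^k$, I will state the corollary as $\limsup_{N\to\infty} g(N)\le 0$, which is what is needed for the regret analysis.

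Combining the two steps gives $0\le g(N)\le DK/\sqrt{N}$. With $K$ fixed, the right-hand side tends to $0$, and the squeeze gives $g(N)\to 0$. The main obstacle is the first step: ensuring that the implicit constant does not grow with $N$, in particular when $M$ scales with $N$. I would verify this by rereading the proof of Lemma \ref{lemma_g} in Appendix \ref{proof_lemma_g} and tracking that every quantity there is per-arm normalized.
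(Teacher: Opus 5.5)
This is correct and follows the paper's route: the paper records the corollary as an immediate consequence of Lemma~\ref{lemma_g}'s bound $g(N)\le O(K/\sqrt{N})$ with $K$ fixed, and your weak-duality lower bound plus squeeze only make explicit the nonnegativity of $g$ that the paper takes for granted. Your Step~1 check will find that Appendix~\ref{proof_lemma_g} reaches the $1/\sqrt{N}$ rate only after normalizing by $N$ (it bounds $\frac{1}{N}g^k(N)\le \frac{1}{N}\Delta_{\max}HC_{v}\sqrt{N}$), so, as your occupancy-fraction sketch already indicates, the $N$-independent constant applies to $g$ read as a per-user gap, while the raw sum over arms in \eqref{g} is only shown to be $O(\sqrt{N})$ per episode.
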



\begin{figure}[t]
    \centering
    \begin{tikzpicture}[
        state/.style={
            circle,        
            draw,          
            minimum size=2em, 
            inner sep=0pt, 
            font=\sffamily\bfseries 
        },
        every edge/.style={
            draw,          
            ->,            
            >=Stealth,     
            thick          
        },
        prob_label/.style={
            font=\large, 
            inner sep=1pt,      
            xshift=0em, yshift=0em 
        }
    ]

        \node[state] (s0) {0};
        \node[state, right=4cm of s0] (s1) {1};


        \path (s0) edge[
            loop left,
            "{$1-r_n$}" {prob_label, xshift=-0.4em, yshift=0.0em} 
        ] (s0);

        \path (s0) edge[
            bend left=40, 
            "{$r_n$}" {prob_label, yshift=0.1em} 
        ] (s1);

        \path (s1) edge[
            loop right,
            "{$p_n$}" {prob_label, xshift=0.4em, yshift=0em} 
        ] (s1);

        \path (s1) edge[
            bend left=40, 
            "{$1-p_n$}" {prob_label, yshift=-0.1em} 
        ] (s0);

    \end{tikzpicture}
    \vspace{1em}
    \caption{Two-state Markov chain model for channel of user $n$.}
    \label{MC}
\end{figure}

\begin{theorem} \label{theorem_regret_2}
If $N \to \infty$, then with probability $1-\mu$, the cumulative regret of Algorithm \ref{algo1} associated with \emph{$\text{comp}_2$} in all $K$ episodes is given by
\begin{align} \label{theorem_regret_eq1}
\emph{Reg}_{\text{comp}_2}(K)  \leq \mathcal{O} \bigg(2 V_{\max} \Gamma |\mathcal{C}| N H \sqrt{K \log K}\bigg),
\end{align}
where 
$V_{\max}$ is the maximum value-function, $\Gamma$ is the maximum value of AoCSI, $|\mathcal{C}|$ is the size of the state space of CSI, $N$ is the number of users, $H$ is the time-horizon length in each episode, and $K$ is the total number of episodes.
\end{theorem}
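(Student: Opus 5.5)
The plan is to split $\mathrm{Reg}_{\text{comp}_2}(k)$ into one piece that is controlled by Lemma \ref{lemma_g} and two pieces that are controlled by the same optimistic-estimation machinery used for Theorem \ref{theorem_regret}. Write
\begin{align}
\mathrm{Reg}_{\text{comp}_2}(k) =& \Big[\sum_{n=1}^N L_n(\pi_n^k, P_n, \lambda^*) - \sum_{n=1}^N L_n(\pi_n^k, P_n^k, \lambda^*)\Big] + g(N) \nonumber\\
&+ \Big[J^k(\pi^k, \mathbf{P}^k) - J^k(\pi^k, \mathbf{P})\Big], \nonumber
\end{align}
where $g(N)$ is defined in \eqref{g}. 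This is an exact identity: the added and subtracted terms $\sum_n L_n(\pi_n^k, P_n^k, \lambda^*)$ and $J^k(\pi^k, \mathbf{P}^k)$ cancel.

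The first step handles $g(N)$. Summing over $K$ episodes, Lemma \ref{lemma_g} and its corollary show that the contribution of $g(N)$ vanishes as $N \to \infty$. This is exactly why the statement assumes $N\to\infty$. I would note that the bound $O(K/\sqrt{N})$ in Lemma \ref{lemma_g} is applied to the cumulative gap, so taking the limit kills it for each fixed $K$.

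The second step bounds the two bracketed terms, which compare a fixed policy evaluated under $\mathbf{P}^k$ and under $\mathbf{P}$.
\begin{itemize}
\item \emph{Simulation lemma.} For a fixed policy, the gap between the two evaluations is a sum over the visited states of the expected one-step discrepancy $\sum_{c'}|P_n^k(c'|c,\delta) - P_n(c'|c,\delta)|\,(h_{\max} + V_{\max})$. Only active steps contribute, since by Remark \ref{remark1} passive transitions are deterministic and identical under both kernels.
\item \emph{Confidence region.} On the event of Lemma \ref{lemma1}, both $\mathbf{P}$ and $\mathbf{P}^k$ lie in $\mathcal{B}^k$. By the triangle inequality, each discrepancy is then at most $2\rho_n^k(c,\delta)$.
\item \emph{Summing the radii.} This is the same sum already bounded in the proof of Theorem \ref{theorem_regret}. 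Summing $\sqrt{\log(\cdot)/\max(1,B_n^k(c,\delta))}$ over visits, and using the pigeonhole inequality $\sum_k 1/\sqrt{B^k}\lesssim \sqrt{KH}$ per state over the $\Gamma|\mathcal{C}|$ states of each of $N$ arms, gives $O(V_{\max}\Gamma|\mathcal{C}|NH\sqrt{K\log K})$ for each bracket.
\end{itemize}

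The third step combines the pieces. The failure probabilities satisfy $\sum_k \mu/k^2 = O(\mu)$ and are absorbed by rescaling $\mu$. Adding the two bracket bounds yields the factor $2$ in the statement, and the $g(N)$ term vanishes in the limit.

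The main obstacle is the bracket involving $J^k$. The MGF policy $\pi^k$ couples the arms through the top-$M$ selection, so $J^k$ is not a sum of independent per-arm MDP values. The simulation lemma must therefore be applied to the joint chain. I would argue that, because arms evolve independently given the joint actions, the joint one-step discrepancy is at most the sum of the per-arm $\ell_1$ discrepancies. The visit counts $B_n^k$ are accumulated from exactly these joint-policy trajectories, so the pigeonhole argument still applies per arm.

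A secondary subtlety is that $L_n(\pi_n^k,\cdot,\lambda^*)$ uses $\lambda^*$ rather than $\lambda^k$. Since the same policy and multiplier appear in both evaluations, the $\lambda^*$ terms multiply the expected number of active actions. That difference is also bounded by the simulation lemma, at the cost of replacing $V_{\max}$ with $V_{\max}+\lambda^*$, which is absorbed into the constant.
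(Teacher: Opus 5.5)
Your decomposition is the paper's, and you handle $g(N)$, the $J^k$ bracket and the failure events along the same lines. You depart on the bracket $\sum_{n} L_n(\pi_n^k,P_n,\lambda^*)-\sum_{n} L_n(\pi_n^k,P_n^k,\lambda^*)$, and there the argument has a gap.

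For the $J^k$ bracket you note that the pigeonhole step works because $B_n^k$ is accumulated along the trajectories of the policy being evaluated. For the $L$ bracket that is false. A simulation lemma bounds this bracket by a multiple of the expected $\sum_t \mathbf{1}(a_n(t)=1)\,\rho_n^k(c_n(t),\delta_n(t))$ along trajectories of $\pi_n^k$. For Lemma~\ref{lemma_g} to apply to $g(N)$, $\pi_n^k$ must be the decoupled Lagrangian-optimal policy, which activates whenever the gain is positive. Algorithm~\ref{algo1} does not execute that policy; it runs the coupled top-$M$ rule, and once the two disagree, arm $n$'s state trajectories differ. States that the decoupled policy activates but MGF seldom does (positive but low gain) keep $B_n^k(c,\delta)$ small. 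Then $\rho_n^k(c,\delta)$ stays of order $\sqrt{|\mathcal{C}|\log k}$, and your bound on this bracket can grow linearly in $K$. If you instead read $\pi_n^k$ as the arm-$n$ component of the executed policy, the counting works, but $g(N)$ is no longer the relaxation-versus-MGF gap that Lemma~\ref{lemma_g} bounds. You cannot use both readings at once.

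The paper avoids counting on this bracket altogether by a sign argument. On the good event, $L_n(\pi_n^k,P_n^k,\cdot)$ is the optimal value under $P_n^k$. By the backward induction in the proof of Lemma~\ref{lemma2}, this dominates the optimal value under $P_n\in\mathcal{B}^k$, which in turn dominates $L_n(\pi_n^k,P_n,\cdot)$. So the bracket is nonpositive, and only the $J^k$ bracket, evaluated on the executed trajectories, needs the radius sum. This is the idea to import in place of your second simulation-lemma bound.

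Your remark on $\lambda^*$ versus $\lambda^k$ matters for this route. The chain is clean at $\lambda^k$, the multiplier at which $P_n^k$ and $\pi_n^k$ are computed. At $\lambda^*$ an extra term appears, $(\lambda^k-\lambda^*)$ times a difference of expected activation counts, and the paper does not control it.

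Your observation that $J^k$ under the coupled MGF rule is not a sum of per-arm values is correct; the paper glosses over it by applying the per-arm bound directly. Your joint-chain simulation lemma is a valid repair. However, the per-arm $\ell_1$ discrepancies are then multiplied by a bound on the joint value-to-go, which can be up to $M$ times a per-arm bound. So $V_{\max}$ must be read as that joint quantity.
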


\begin{proof}
See Appendix \ref{proof_theorem_regret_2}.
\end{proof}

\begin{theorem} \label{theorem_regret_3}
If $N \to \infty$, then with probability $1-\mu$, the cumulative regret of Algorithm \ref{algo1} in all $K$ episodes is given by
\begin{align} \label{theorem_regret_eq1}
\emph{Reg} (K) \leq \mathcal{O} \bigg(3 V_{\max} \Gamma |\mathcal{C}| N H \sqrt{K \log K}\bigg),
\end{align}
where 
$V_{\max}$ is the maximum value-function, $\Gamma$ is the maximum value of AoCSI, $|\mathcal{C}|$ is the size of the state space of CSI, $N$ is the number of users, $H$ is the time-horizon length in each episode, and $K$ is the total number of episodes.
\end{theorem}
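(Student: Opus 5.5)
The plan is to combine the two component bounds already established. Summing \eqref{reg_new_1} over episodes $k=1,\ldots,K$ gives
\begin{align*}
\text{Reg}(K) \leq \sum_{k=1}^{K} \text{Reg}_{\text{comp}_1}(k) + \sum_{k=1}^{K} \text{Reg}_{\text{comp}_2}(k) = \text{Reg}_{\text{comp}_1}(K) + \text{Reg}_{\text{comp}_2}(K).
\end{align*}
This inequality holds deterministically, since \eqref{reg_L} only uses the fact that the Lagrangian at $\lambda^*$ upper bounds the optimal value of the original RMAB. The work is therefore to bound each sum on one common high-probability event and add the two bounds.

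For the first term I would invoke Theorem \ref{theorem_regret}. It gives $\text{Reg}_{\text{comp}_1}(K) \leq \mathcal{O}(V_{\max}\Gamma|\mathcal{C}|NH\sqrt{K\log K})$ with probability $1-\mu$. For the second term, under $N\to\infty$, I would invoke Theorem \ref{theorem_regret_2}. It gives $\text{Reg}_{\text{comp}_2}(K) \leq \mathcal{O}(2V_{\max}\Gamma|\mathcal{C}|NH\sqrt{K\log K})$ with probability $1-\mu$. Adding the two bounds gives the constant $3$ in the statement.

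The step I expect to need care is the probability bookkeeping. A naive union bound over the two events would give probability $1-2\mu$, not $1-\mu$. To avoid this, I would show that both component bounds hold on the same event, namely
\begin{align*}
\mathcal{E}=\bigcap_{k\geq 1}\{\mathbf{P}\in\mathcal{B}^k\}.
\end{align*}
By Lemma \ref{lemma1} and a union bound, $\Pr(\mathcal{E}^c)\leq \sum_k \mu/k^2 = \mu\pi^2/6$, which is $\mathcal{O}(\mu)$. Rescaling $\mu$ by the constant $6/\pi^2$ inside the logarithm of \eqref{radius} then gives probability at least $1-\mu$.

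The proof of Theorem \ref{theorem_regret} runs through Lemma \ref{lemma2}, so it uses exactly this event $\mathcal{E}$. For Theorem \ref{theorem_regret_2}, I would check that its argument uses only:
\begin{itemize}
\item Lemma \ref{lemma_g}, which is deterministic and makes $g(N)\to 0$ as $N\to\infty$;
\item the same concentration event $\mathcal{E}$, to replace $\mathbf{P}^k$ by $\mathbf{P}$.
\end{itemize}
If it does, both bounds hold simultaneously on $\mathcal{E}$, so the combined bound holds with probability $1-\mu$. If instead the second component needs an independent event, I would fall back to stating the result with probability $1-2\mu$ and then rescale $\mu$. This changes only constants inside the logarithm and leaves the order $\mathcal{O}(\sqrt{K\log K})$ unchanged.
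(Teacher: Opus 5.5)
Your proposal follows the paper's proof, which adds the $\text{comp}_1$ bound \eqref{final_reg_1} to the $\text{comp}_2$ bound \eqref{regret_proof_3} and then uses Lemma~\ref{lemma_g} with $N\to\infty$ to drop the $g(N)K$ term, giving the constant $1+2=3$. The paper simply combines two separate ``probability $1-\mu$'' statements; your extra step of placing both bounds on the single event $\bigcap_{k}\{\mathbf{P}\in\mathcal{B}^k\}$ and absorbing the resulting $\pi^2/6$ factor into the logarithm of \eqref{radius} supplies bookkeeping the paper omits, and it is a sound way to justify the stated confidence level.
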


\begin{proof}
See Appendix \ref{proof_theorem3}.
\end{proof}

Theorem \ref{theorem_regret_3} illustrates that Algorithm \ref{algo1} achieves sub-linear regret with respect to the number of episodes $K$ as $N \to \infty$. This regret bound aligns with existing online learning algorithms in literature \cite{wang2023optimistic, shisher2025onlinelearningwhittleindices, tripathi2021onlinelearningapproachoptimizing,  akbarzadeh2023learning, neu2013efficient, foster2020beyond}. 
Because we generalize our problem to a multiple channel state scenario, the analysis cannot be simplified as our previous study with ON/OFF channel states \cite{orneeinfocom2026}. 
In section \ref{sim}, our numerical results illustrate that the Online-MGF policy converges to both the MGF policy with known 
channel statistics within a small number of episodes.

\section{Special Case: ON/OFF Channel}

Consider a simplified scenario where the CSI $C_n (t)$ is modeled as a discrete-time, two-state time-homogeneous Markov chain (ON/OFF channels, where $C_n (t) \in \{0, 1\} $). 
If $C_n(t)=0$, the channel of user $n$ is in the OFF state; otherwise, if $C_n(t)=1$, the channel of user $n$ is in the ON state. 
The state transition of the channels for user $n$ is provided in Figure \ref{MC} and the corresponding transition matrix is given by
\begin{align}\label{transitionMatrix}
P_n=
\begin{bmatrix}
P_n (1|1) & P_n (0|1)\\
P_n (1|0) & P_n (0|0)
\end{bmatrix}=
\begin{bmatrix}
p_n & 1-p_n\\
r_n & 1-r_n
\end{bmatrix},
\end{align}
where 
\begin{align}
& p_n = P_n (C_n (t) =1 |C_n (t-1) =1), \label{p_m}\\
& r_n = P_n (C_n (t) =1 |C_n (t-1) =0). \label{r_m}
\end{align}

\begin{algorithm}[t] 
\SetAlgoLined
\SetAlgoNoEnd
\SetKwInOut{Input}{input}
\caption{\small Online Maximum Gain First (Online-MGF) Policy for ON/OFF channel} \label{algo2}

Initialize $N$ users, constraint $M$, episode length $H$.\\
Initialize $B_n ^1 (c, \delta, c') = 0$ for all $c, \delta, c'$.\\
Initialize $\lambda^{(1)}$.\\
\For {episode $k=1,2,\ldots$}
{Reset $t=1$ and initial state $(\mathbf{c}, \boldsymbol{\delta})=(\mathbf{c} (1), \boldsymbol{\delta} (1))$.\\
Compute 
$P_n ^k (1|c, \delta)$ for each arm $n$ and for all $(c, \delta)$ by solving \eqref{opt_value}-\eqref{opt_value_constraint}.\\
Compute gain indices from \eqref{gain_idx} for each arm at every time-slot $t$.\\
Schedule $M$ users with highest gain indices at every time-slot $t$.\\
Observe transitions and update visits $B_n ^k (c, \delta, c')$, empirical transition probability $\hat {\mathbf{P}}^ {k}$, confidence region $\mathcal{B}^ {k}$.\\
Update $\lambda ^{k+1}$ from \eqref{lam_update}.}
\end{algorithm}

For this simplified scenario, the function $f(c, \delta, a) $ can be written as
\begin{align} \label{f_two_state}
& f (C_n (t-\Delta_n (t)), \Delta_n (t), a_n (t)) \nonumber\\
=& a_n (t) \mathbb{E} [C_n (t)| C_n (t-\Delta_n (t)), \Delta_n (t)],
\end{align}
where we consider linear throughput function $h (C_n (t), a_n (t)) = C_n (t) a_n (t) $.
If $a_n (t)=0$, then $f_n (C_n (t-\Delta_n (t)), \Delta_n (t), 0)=0$; If $a_n (t)=1$, then $f_n (C_n (t-\Delta_n (t)), \Delta_n (t), 1)$ can be further simplified: Given $C_n (t-\Delta_n (t))=c$ and $\Delta_n(t)=\delta$, we have
\begin{align} \label{func_f_a_binary}
f (c, \delta, 1) = P_n(1|c, \delta),
\end{align}
 where $P_n(1|c, \delta)$ can be calculated as
\begin{align}
& P_n(1|0, \delta) = \frac{r_n (1 - (p_n -r_n)^{\delta})}{1-p_n +r_n}, \label{P10}\\
&  P_n(1|1, \delta) = \frac{r_n + (1-p_n) (1-(p_n - r_n)^{\delta})}{1-p_n +r_n}. \label{P11}
\end{align}
Becuase of this simplification, we need to compute only the transition probabilities $P_n (1|c, \delta)$ in the online learning algorithm. Using \eqref{func_f_a_binary}, we can simplify the action-value function \eqref{ac_value} associated with ON/OFF channels.
\begin{align} \label{value_simple_binary}
& Q_{n, t} (c, \delta, a; \lambda) = \nonumber\\
& \begin{cases}
& \!\!\!\!\!\!\! P_n (1|c, \delta) \!+\!\sum_{c' \in \{0, 1\}} P_n (c'|c, \delta) V_{n, t+1} (c', 1; \lambda) - \lambda,\\
&~~~~~~~~~~~~~~~~~~~~~~~~~~~~~~~~~~~~~~~~~~~~~~~~~~~~~~\text{if} {\thinspace} a=1, \\
& \!\!\!\!\!\!\!  V_{n, t+1} (c, \delta+1; \lambda), ~~~~~~~~~~~~~~~~~~~~~~~~~~~~~~~~~~~\text{if} {\thinspace} a=0.\\
\end{cases}
\end{align}
To that end, the Online-MGF policy in Algorithm \ref{algo2} only requires solving one optimization problem \eqref{opt_value}-\eqref{opt_value_constraint} to get probabilities $P_n ^k (1|c, \delta)$ for the CSI $C_n(t)=1$ at time-slot $t$ given outdated CSI $C_n(t-\Delta_n(t))=c$ and AoCSI $\Delta_n(t)=\delta$. This also yields $P_n ^k (0|c, \delta)=1-P_n ^k (1|c, \delta)$. Therefore, we do not need to compute the entire transition probabilities from state $(c, \delta)$ to next state $(c', \delta')$. 
The AoCSI evolution is known since whenever source $n$ is scheduled AoCSI becomes $1$, and otherwise, it increases by $1$.

\begin{algorithm}[t] 
\SetAlgoLined
\SetAlgoNoEnd
\SetKwInOut{Input}{input}
\caption{\small Online Whittle Index Policy for ON/OFF channel} \label{algo_3}

Initialize $N$ users, constraint $M$, episode length $H$.\\
Initialize $B_n ^1 (c, \delta, c') = 0$ for all $c, \delta, c'$.\\
\For {episode $k=1,2,\ldots$}
{Reset $t=1$ and initial state $(\mathbf{c}, \boldsymbol{\delta})=(\mathbf{c} (1), \boldsymbol{\delta} (1))$.\\
Compute 
$P_n ^k (1|c, \delta)$ for each arm $n$ and for all $(c, \delta)$ by solving \eqref{opt_value}-\eqref{opt_value_constraint}.\\
Compute Whittle indices from \eqref{online_whittle} for each arm at every time-slot $t$.\\
Schedule $M$ users with highest Whittle indices at every time-slot $t$.\\
Observe transitions and update visits $B_n ^k (c, \delta, c')$, empirical transition probability $\hat {\mathbf{P}}^ {k}$, confidence region $\mathcal{B}^ {k}$.\\}
\end{algorithm}

\subsection{Whittle Index Policy}

Another benefit of this model is that we can prove indexability and derive a closed-form expression of the Whittle index for the infinite-horizon average-cost setting. Our problem \eqref{problem_simple}-\eqref{constraint_simple} considers a finite horizon $T$. However, computing time-dependent Whittle indices is computationally intractable and generally does not yield a closed-form
solution. Therefore, following standard practice in RMAB literature \cite{whittle_restless, weber1990index}, we analyze the infinite-horizon case where the value functions and action-value functions become time-invariant. This allows us to develop a Online Whittle index policy. 

To that end, we find the Whittle index policy for the following problem
\begin{align}
& \lim_{T \to \infty} \frac{1}{T} \max_{\pi \in \Pi} \sum_{t=1}^{T} \sum_{n=1}^{N} \mathbb{E}_{\pi} \!\bigg[ \!f (C_n (t\!-\!\Delta_n (t)), \Delta_n (t), a_n (t))\!\bigg] \label{problem_simple_w} \\
&~ \text{s.t.} ~ \sum_{n=1}^{N} a_n (t) \leq M,~t=1,2,\ldots,T. \label{constraint_simple_w}
\end{align}


Define $\Phi (\lambda)$ as the set of states $(c, \delta) \in \{0, 1\} \times \mathbb{N}$ such that if $C_n (t-\Delta_n (t)) = c$ and $\Delta_n (t) = \delta$, the optimal solution for \eqref{decoupled} is to take a passive action at time $t$.

\begin{figure*}[t]
  \centering 
  \begin{subfigure}[t]{0.32\textwidth}
\includegraphics[width=\textwidth]{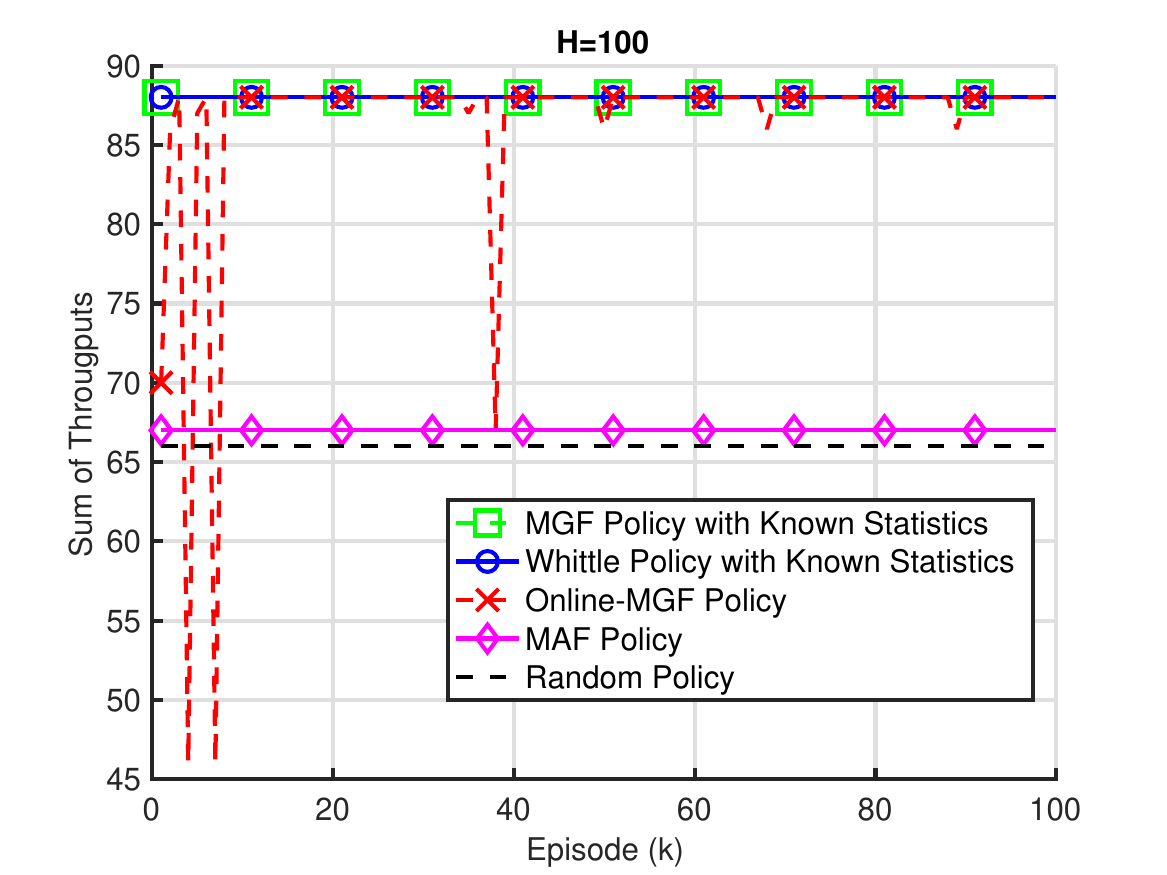}
\end{subfigure}
%
\begin{subfigure}[t]{0.32\textwidth}
\includegraphics[width=\textwidth]{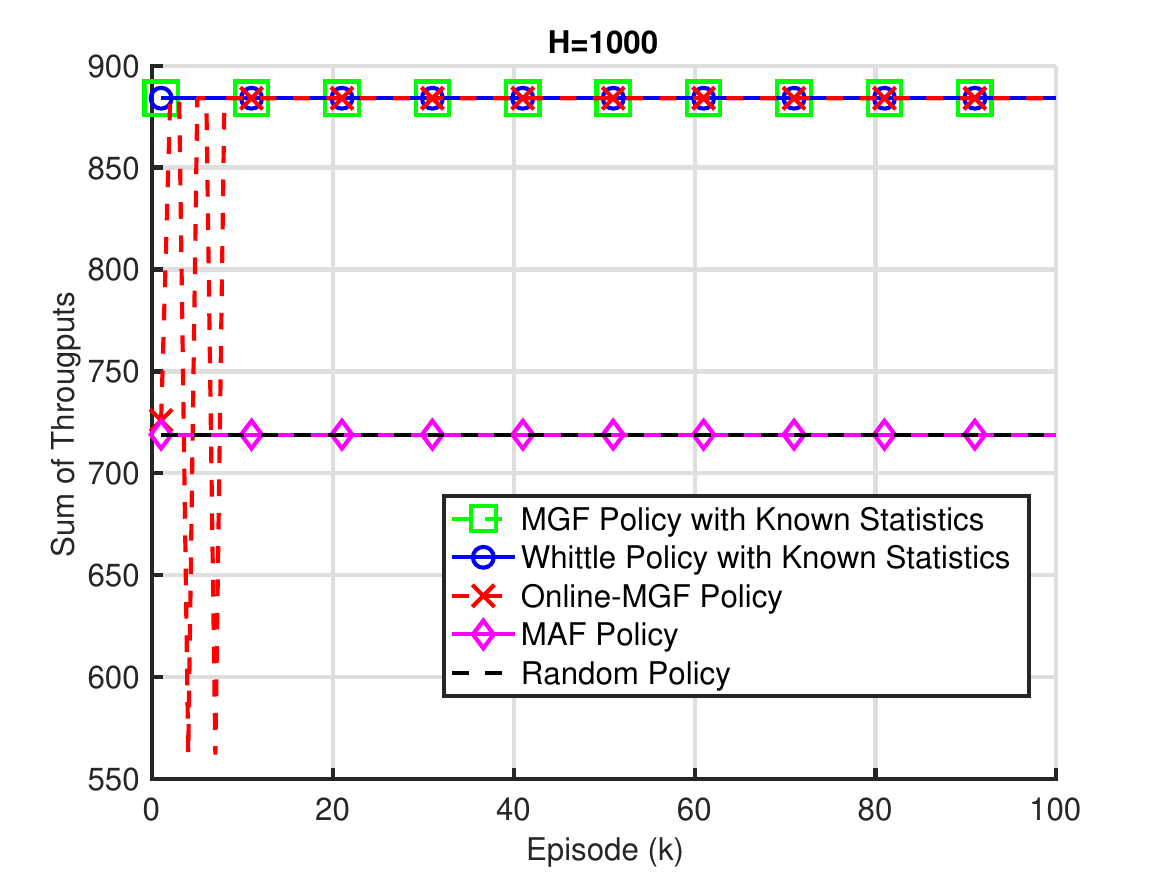}
\end{subfigure}
%
\begin{subfigure}[t]{0.32\textwidth}
\includegraphics[width=\textwidth]{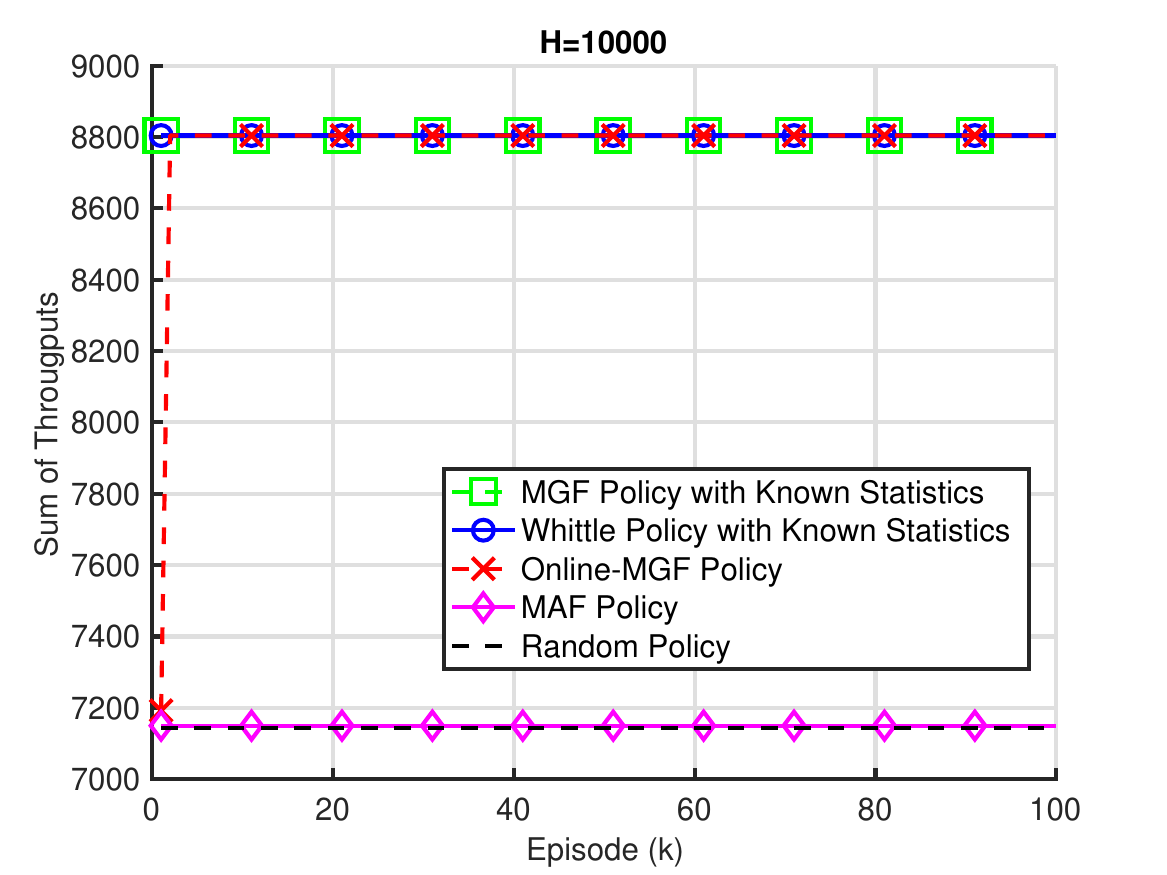}
\end{subfigure}
\vspace{0.05in}
\caption{\small Sum of Throughputs vs the number of episodes ($k$) with different time-horizon length in each episode ($H$). The other parameters are  $p_1=0.6$, $r_1=0.5$, $p_2=0.9$, $r_2=0.75$, $N=4$, and $M=1$.\label{fig:simulationH}}
\end{figure*}

\begin{figure*}[t]
  \centering 
  \begin{subfigure}[t]{0.32\textwidth}
\includegraphics[width=\textwidth]{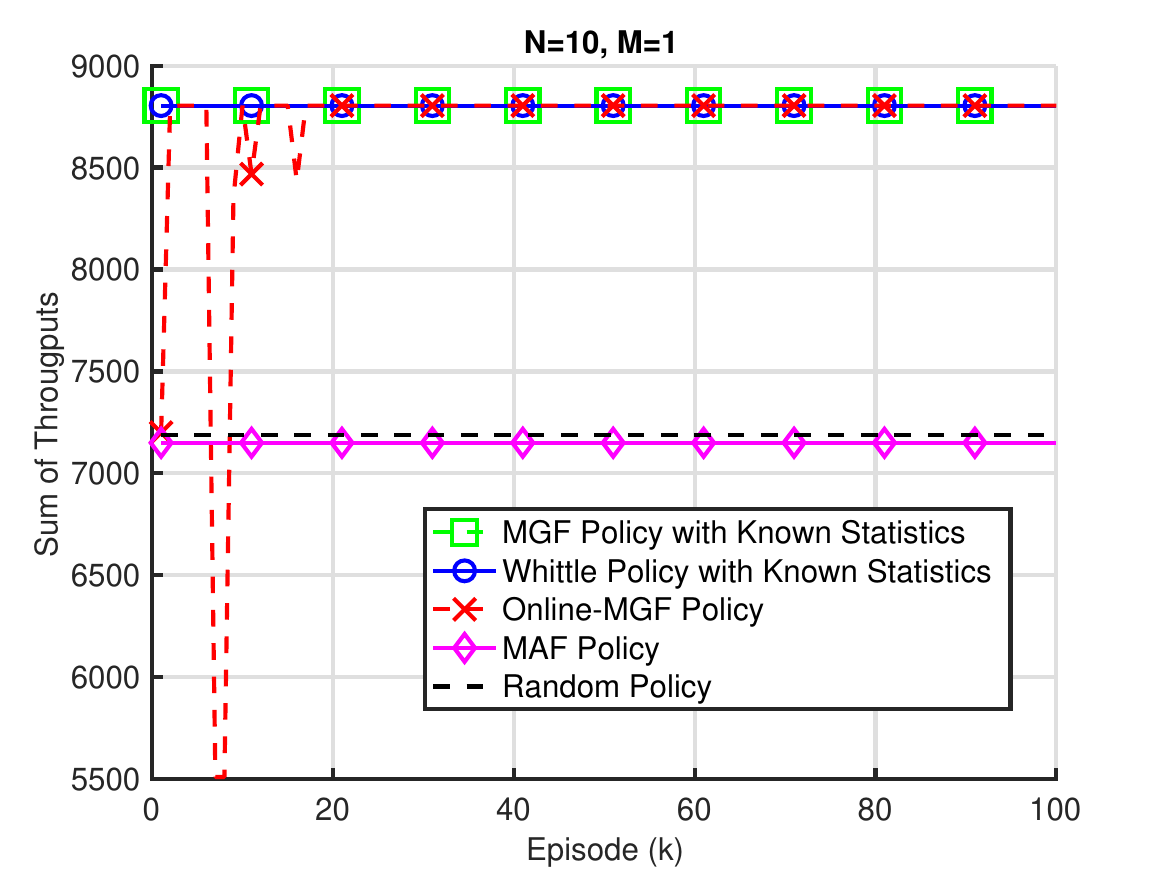}
\end{subfigure}
%
\begin{subfigure}[t]{0.32\textwidth}
\includegraphics[width=\textwidth]{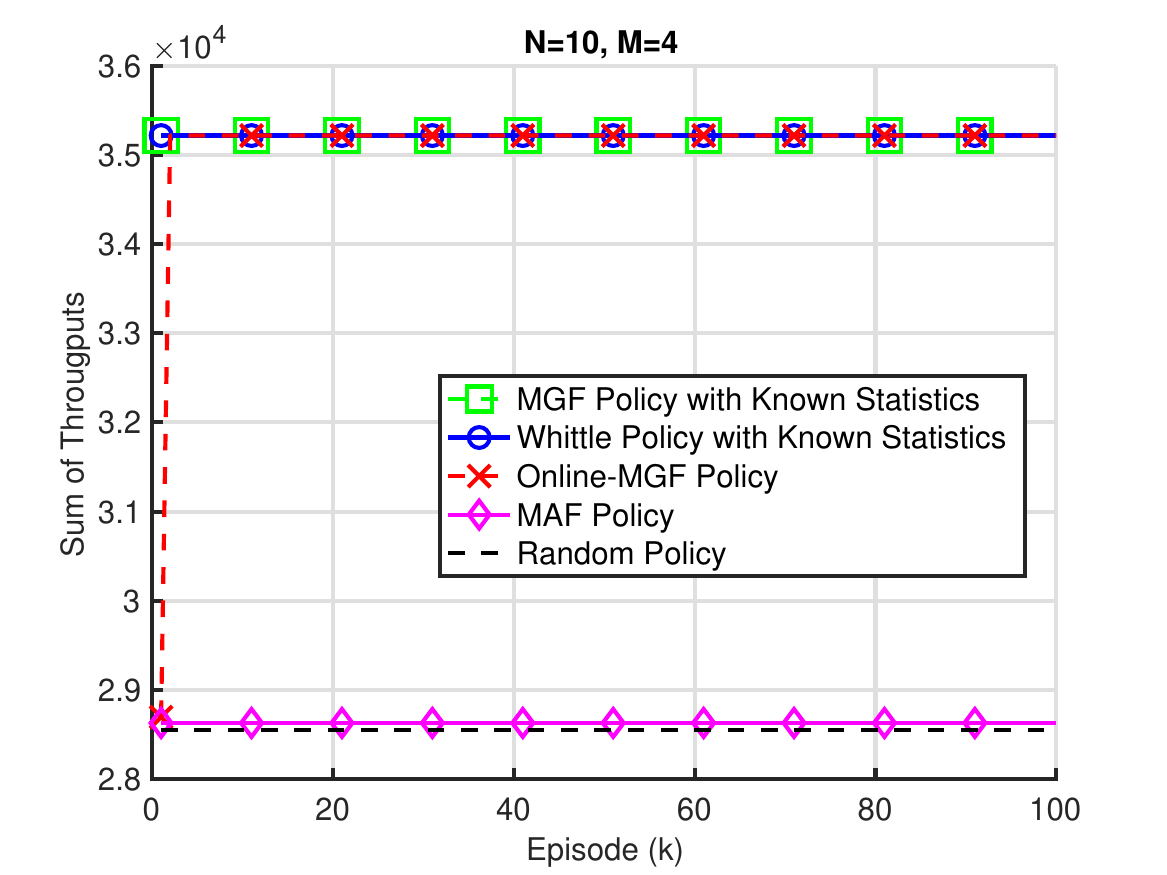}
\end{subfigure}
%
\begin{subfigure}[t]{0.32\textwidth}
\includegraphics[width=\textwidth]{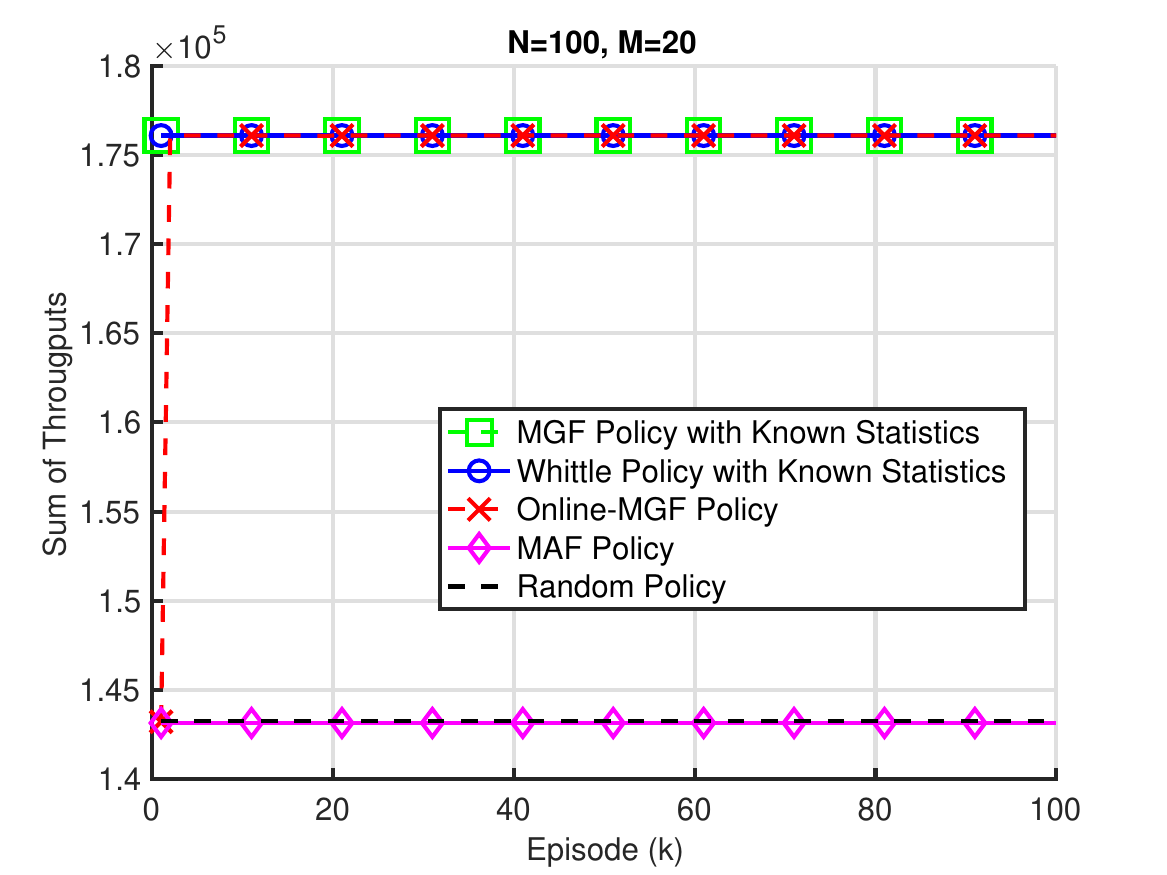}
\end{subfigure}
\vspace{0.05in}
\caption{\small Sum of Throughputs vs the number of episodes ($k$) with different configurations of $N$ and $M$. The other parameters are $p_1=0.6$, $r_1=0.5$, $p_2=0.9$, $r_2=0.75$, and $H=10000$.  \label{fig:simulationN}}
\end{figure*}

\begin{figure*}[t]
  \centering 
  \begin{subfigure}[t]{0.32\textwidth}
\includegraphics[width=\textwidth]{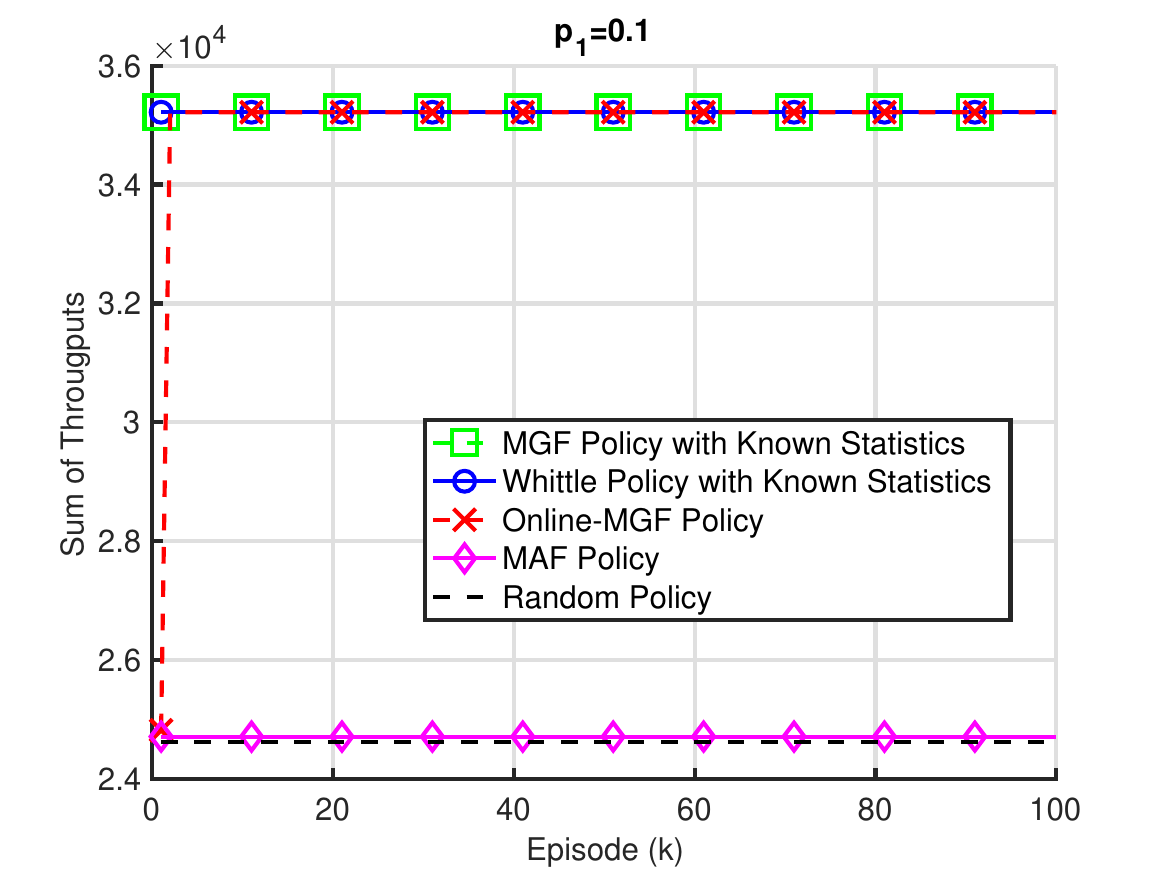}
\end{subfigure}
%
\begin{subfigure}[t]{0.32\textwidth}
\includegraphics[width=\textwidth]{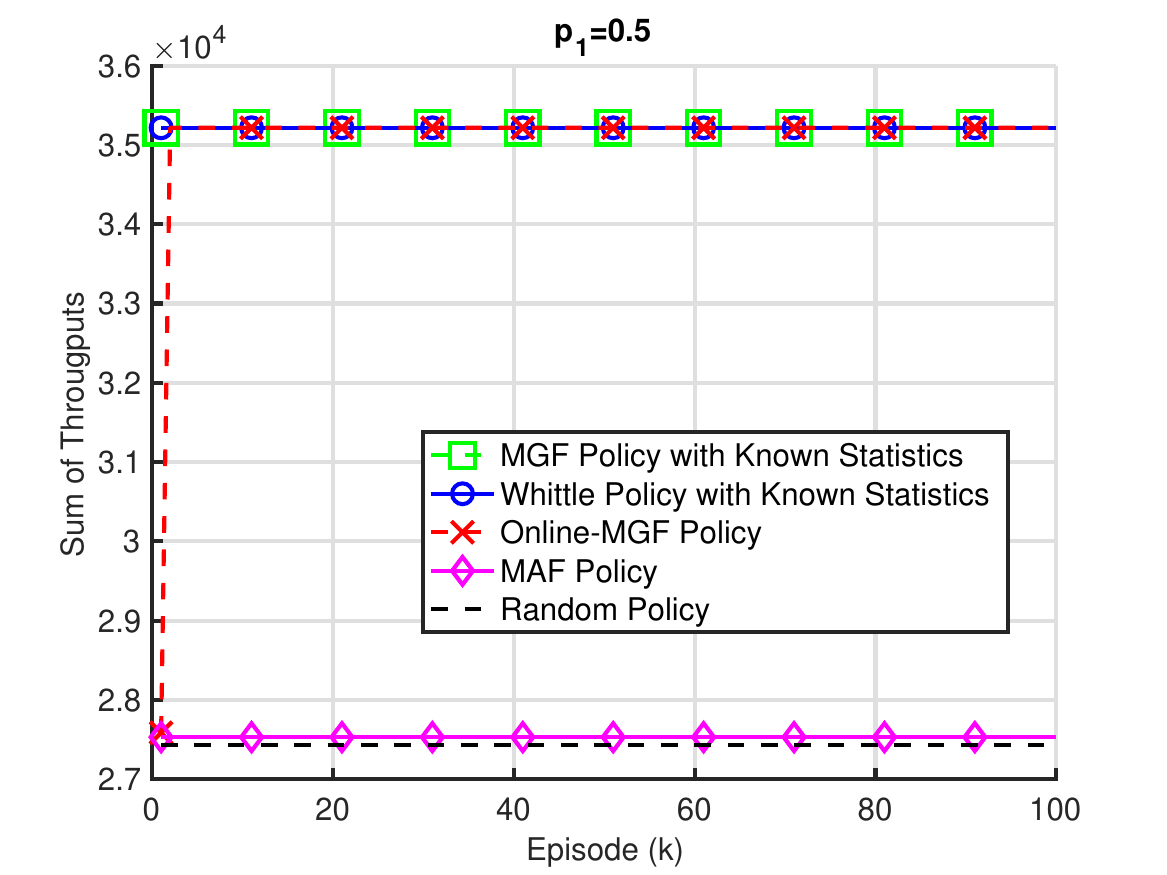}
\end{subfigure}
%
\begin{subfigure}[t]{0.32\textwidth}
\includegraphics[width=\textwidth]{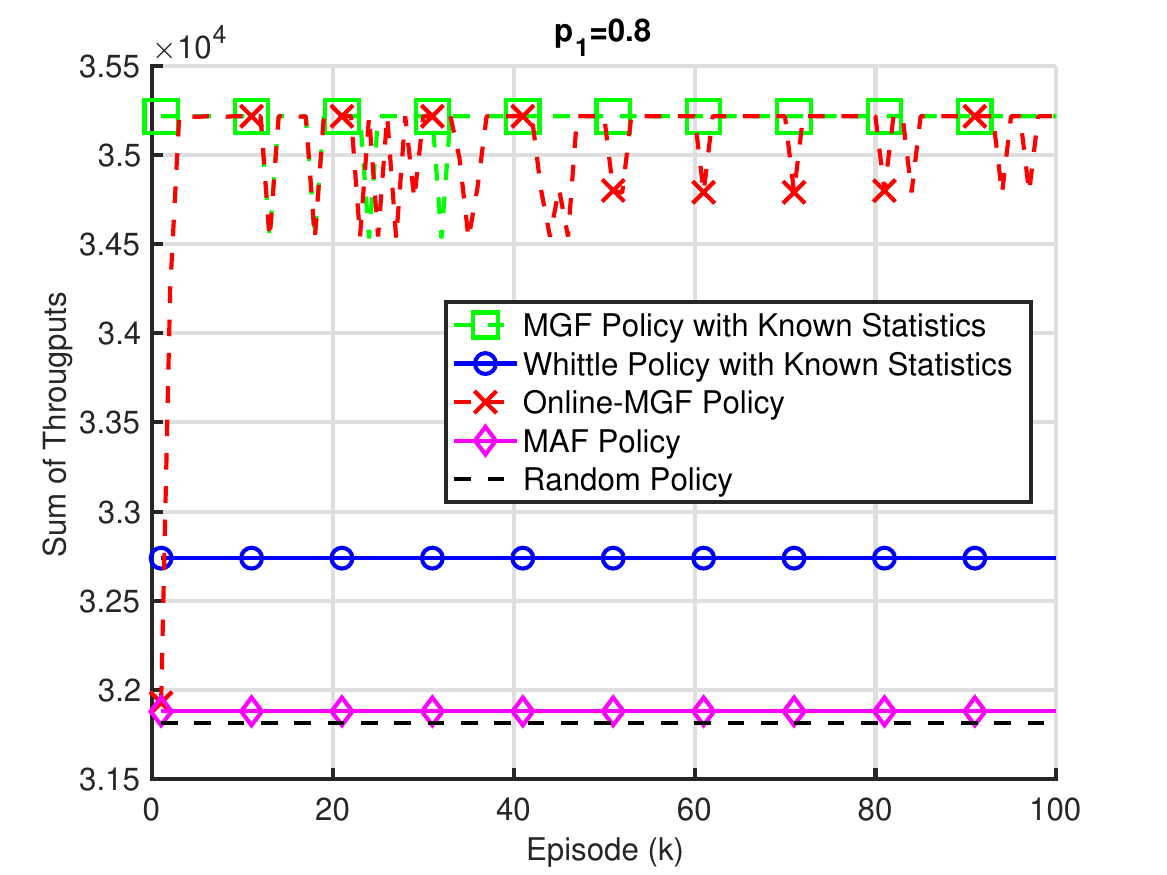}
\end{subfigure}
\vspace{0.05in}
\caption{\small Sum of Throughputs vs the number of episodes ($k$) with different channel statistics. The other parameters are $H=10000, N=10$, and $M=4$.  \label{fig:simulationP}}
\end{figure*}

\begin{figure}
\centering
\includegraphics[width=8cm]{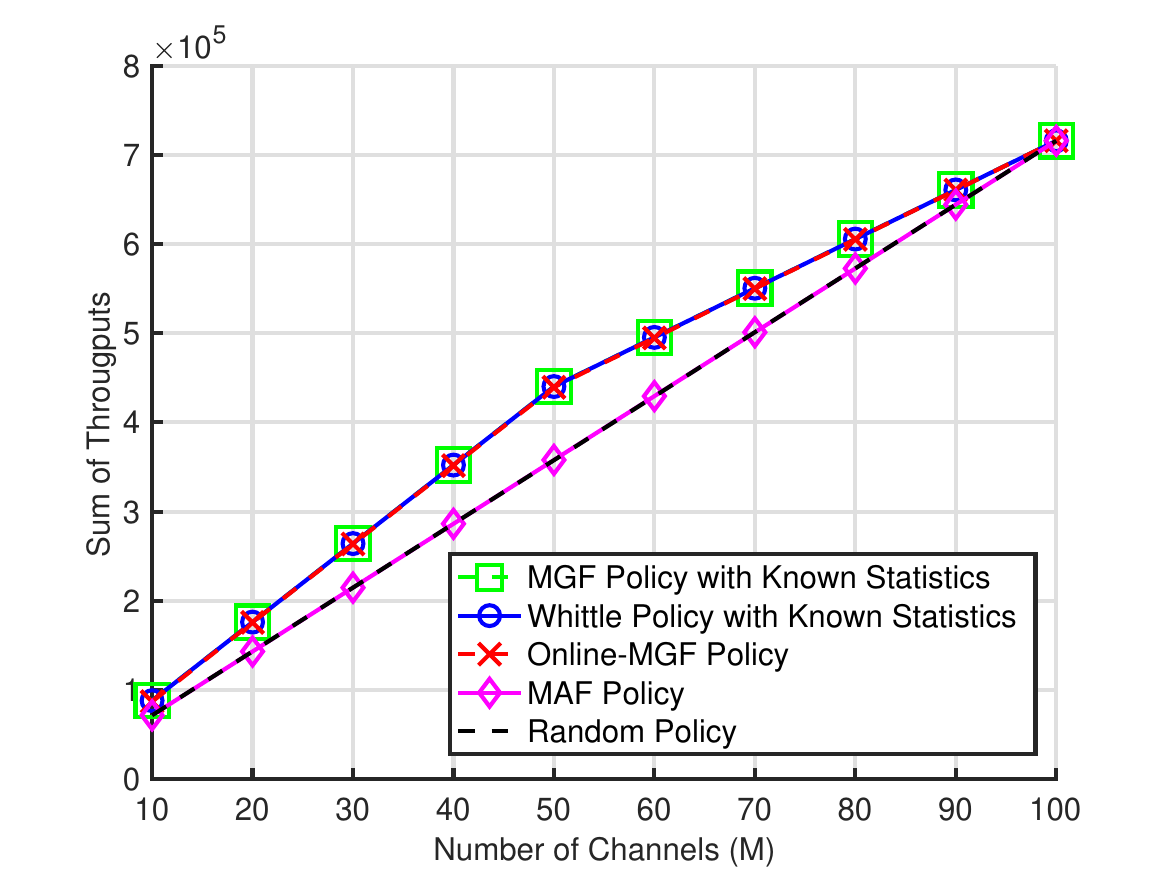}
\caption{Sum of Throughputs vs Number of Channels ($M$). The other parameters are $p_1=0.6$, $r_1=0.5$, $p_2=0.9$, $r_2=0.75$, $H=10000$, and $N=100$.}\vspace{-0.0cm}
\label{fig:simulationM}
\end{figure}

\begin{definition} \label{indexability_def}
\textbf{(Indexability).} 
\cite{verloop2016asymptotically} An arm is said to be \emph{indexable} if, as the activation cost $\lambda$ increases from $-\infty$ to $\infty$, the set $\Phi (\lambda)$ increases monotonically, i.e., $\lambda_1 \leq \lambda_2$ implies $\Phi (\lambda_1) \subseteq \Phi (\lambda_2)$.
\end{definition}


In general, establishing indexability of an RMAB is a challenging task. However, we are able to prove indexability of problem \eqref{problem_simple}-\eqref{constraint_simple} for ON/OFF channels. This becomes possible because the reward function $f (c, \delta, 1)$ is greatly simplified in this case and only requires us to know the probability $P_n (1|c, \delta)$.

\begin{theorem} \label{theorem_5}
If the channel state $C_n (t)$ evolves as a two-state Markov chain, then, the RMAB problem \eqref{problem_simple}-\eqref{constraint_simple} is indexable.
\end{theorem}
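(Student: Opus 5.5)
We prove indexability for the infinite-horizon average-reward version \eqref{problem_simple_w}--\eqref{constraint_simple_w} of the decoupled problem \eqref{decoupled}, where the per-arm state is $(c,\delta)\in\{0,1\}\times\mathbb{N}$. The key observation is that after an active action the state resets to $(c',1)$, where $c'$ is the observed channel state. After a passive action the state moves deterministically from $(c,\delta)$ to $(c,\delta+1)$. So from a fixed last-observed state $c$, every stationary policy is summarized by its activation rule along the ray $\{(c,\delta):\delta\ge1\}$. We will first show that for each $c$ the optimal policy is of \emph{threshold type} in $\delta$: activate iff $\delta\ge \tau_c(\lambda)$, with $\tau_c(\lambda)\in\{1,2,\ldots\}\cup\{\infty\}$. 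Indexability then reduces to showing that $\tau_c(\lambda)$ is nondecreasing in $\lambda$ for both $c=0$ and $c=1$.

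The threshold structure comes from monotonicity of the immediate reward $f(c,\delta,1)=P_n(1|c,\delta)$ in $\delta$, using \eqref{P10}--\eqref{P11}. We split into the positively correlated case $p_n\ge r_n$ and the negatively correlated case $p_n<r_n$.

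\textbf{Case $p_n\ge r_n$.} Here $P_n(1|0,\delta)$ increases in $\delta$ and $P_n(1|1,\delta)$ decreases in $\delta$. In state $c=1$ the active action yields reward $P_n(1|1,\delta)$, and the continuation $\sum_{c'}P_n(c'|1,\delta)V(c',1)$ is also monotone in $\delta$ with the same sign, because $V(1,1)\ge V(0,1)$. Hence the gain $Q(1,\delta,1)-Q(1,\delta,0)$ is largest at $\delta=1$. The optimal policy in state $1$ is therefore either always active (so $\delta=1$ is recurrent) or never active. In state $c=0$ the gain increases in $\delta$, which gives a threshold $\tau_0$.

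\textbf{Case $p_n<r_n$.} The term $(p_n-r_n)^\delta$ oscillates, so monotonicity fails pointwise. We would handle this as in Liu--Zhao-type arguments: treat odd and even $\delta$ separately, or argue directly via the renewal-reward comparison below, which does not require pointwise monotonicity.

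With thresholds established, we compute the long-run average reward in closed form via renewal theory. Consider a cycle that starts right after an observation $c$. Under thresholds $(\tau_0,\tau_1)$ it waits $\tau_c-1$ passive slots, then activates once, collects expected reward $P_n(1|c,\tau_c)$, pays $\lambda$, and restarts from $c'$ with probability $P_n(c'|c,\tau_c)$. This is a two-state semi-Markov chain on $c\in\{0,1\}$. The average reward $G(\tau_0,\tau_1;\lambda)$ equals the expected reward per cycle minus $\lambda$ times the expected activations per cycle, divided by the expected cycle length. The activation frequency is
\begin{align}
\rho(\tau_0,\tau_1)=\frac{1}{\mathbb{E}[\text{cycle length}]},
\end{align}
which is decreasing in each threshold. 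By the standard argument (cf.\ \cite{verloop2016asymptotically}), it then suffices to show that the optimal activation frequency $\rho^*(\lambda)$ is nonincreasing in $\lambda$, and that the passive set is determined by the thresholds. This follows from an exchange inequality: if $\lambda_1<\lambda_2$ with optimal policies $\pi_1,\pi_2$, then adding the two optimality inequalities gives
\begin{align}
(\lambda_2-\lambda_1)\big(\rho(\pi_1)-\rho(\pi_2)\big)\ge 0.
\end{align}
Combining this with the explicit dependence of the thresholds on $\lambda$ yields $\Phi(\lambda_1)\subseteq\Phi(\lambda_2)$.

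The main obstacle is going from the frequency monotonicity to set monotonicity. Frequency monotonicity alone does not forbid a state from leaving the passive set while another state enters it. We would first try to show that the optimal $\tau_c(\lambda)$ solves a scalar fixed-point equation in which $\lambda$ appears monotonically. Equivalently, we would show that the gain at each $(c,\delta)$ is strictly decreasing in $\lambda$, by differentiating the relative value function with respect to $\lambda$: the derivative of $Q(c,\delta,1)-Q(c,\delta,0)$ is $-1$ plus the difference in expected future activations, which lies in $(-1,0]$. This step, along with the oscillatory case $p_n<r_n$, is where we expect most of the work.
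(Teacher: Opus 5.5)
The decisive step is missing. Your exchange inequality shows only that the optimal activation frequency $\rho^*(\lambda)$ is nonincreasing, and that cannot give $\Phi(\lambda_1)\subseteq\Phi(\lambda_2)$, for two reasons.

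First, $\rho$ depends on the pair $(\tau_0,\tau_1)$. A smaller $\rho$ is therefore compatible with one first-activation time decreasing while the other increases. Even the claimed monotonicity of $\rho$ in each threshold is unproved, since changing $\tau_c$ also changes the stationary law of the embedded chain on $c$.

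Second, the renewal objective sees a stationary policy only through the first activation time on each ray. But $\Phi(\lambda)$ also contains the states beyond it, e.g.\ $(1,\delta)$ with $\delta\ge 2$ once $(1,1)$ is active. Under the average criterion the actions at those states are determined only by the relative value function, which your argument never analyzes.

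The remedy you propose is not an available lemma. You need $-1$ plus the excess of expected future activations from the post-passive state $(c,\delta+1)$ over those from the post-active states $(c',1)$ to be $\le 0$. That is essentially the indexability property itself in derivative form, and it is exactly what breaks down for non-indexable arms. The proposal gives no argument for it. Also, ``strictly decreasing'' is inconsistent with the interval $(-1,0]$; nonincreasing would suffice anyway.

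The structural part is not established either. For $p_n\ge r_n$ you deduce that the gain on the ray $c=1$ peaks at $\delta=1$ from the monotonicity of $Q(1,\delta,1)$ alone. But $Q(1,\delta,0)=V(1,\delta+1)$ is itself nonincreasing in $\delta$, so the difference is not controlled. The paper's proof draws the same conclusion from the immediate reward $P_n(1|1,\delta)-\lambda$.

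The resulting shape is also opposite to your parametrization ``activate iff $\delta\ge\tau_c$''. On $c=1$ the arm is active for $\delta\le\delta_1^*$ and passive beyond. This gives the passive sets $\{(1,\delta):\delta>\delta_1^*\}$ and $\{(0,\delta):\delta\le\delta_0^*\}$ used in the paper; ``always or never active'' describes only the recurrent behavior. For $p_n<r_n$ the beliefs $P_n(1|c,\delta)$ oscillate, so the passive set on a ray need not be a threshold set in $\delta$. This case is only gestured at; the paper sidesteps it by assuming positive correlation.

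Your machinery does close if you put both rays on a single ordered scale, the belief $\omega(c,\delta)=P_n(1|c,\delta)$. The per-arm problem with activation cost $\lambda$ is then the single-arm Gilbert--Elliott problem of \cite{Liu_TIT}, whose optimal policy is a threshold in $\omega$. Hence $\Phi(\lambda)=\{(c,\delta):P_n(1|c,\delta)\le\omega^*(\lambda)\}$, and indexability reduces to monotonicity of one scalar $\omega^*(\lambda)$. \cite{Liu_TIT} establishes this for both correlation signs by evaluating threshold policies in closed form. That is essentially your renewal computation, indexed by $\omega^*$ rather than by $(\tau_0,\tau_1)$.
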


\begin{proof}
See Appendix \ref{indexability_proof}.
\end{proof}

Next, we introduce the definition of the Whittle index. 
 
\begin{definition} \label{def_6}
\cite{whittle1988restless} If an arm is indexable, then the Whittle index $W_n (c, \delta)$ of the arm at state $(c, \delta)$ is defined by
\begin{align} \label{Whittle_index_general_definition}
W_n (c, \delta) = \inf_{\lambda \in \mathbb R} \{\lambda \in \mathbb{R} : (c, \delta) \in \Phi (\lambda)\},
\end{align}
which is the infimum of the activation cost $\lambda$ for which it is better not to activate the arm.
\end{definition}

\begin{theorem} \label{theorem_6}
The Whittle index $W_n (c, \delta)$ of problem \eqref{decoupled} at state $(c, \delta)$ is given by
\begin{align} \label{whittle_index}
W_n (c, \delta) = \begin{cases} \displaystyle \frac{P_n(1|1, \delta)}{1 - \big( P_n (1|1, 1) - P_n (1|1, \delta) \big)}, & \text{if } c = 1, \\ \\ 0, & \text{if } c = 0, \end{cases}
\end{align}
where $P_n (1|c, \delta)$ is defined in \eqref{P10}-\eqref{P11}.
\end{theorem}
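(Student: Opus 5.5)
The plan is to derive the index directly from the single-arm problem \eqref{decoupled} with activation cost $\lambda$, working in the infinite-horizon regime through the time-invariant action-value functions $Q_n(c,\delta,a;\lambda)$ obtained from \eqref{value_simple_binary}. We will take the indexability of Theorem \ref{theorem_5} as given. By Definition \ref{def_6}, $W_n(c,\delta)$ is then the value of $\lambda$ at which $Q_n(c,\delta,1;\lambda)=Q_n(c,\delta,0;\lambda)$. So the whole task reduces to computing the two action values at the indifference point, once the structure of the optimal single-arm policy is known.

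\textbf{Step 1: structure of the optimal policy.} Assume the positively correlated case $p_n\ge r_n$; the case $p_n<r_n$ will need a separate check. From \eqref{P10}--\eqref{P11}:
\begin{itemize}
\item $P_n(1|1,\delta)$ is nonincreasing in $\delta$ and bounded below by the stationary probability;
\item $P_n(1|0,\delta)$ is nondecreasing in $\delta$ and bounded above by the same stationary probability.
\end{itemize}
We will argue that for $\lambda>0$ the optimal policy is to remain passive forever once the last observation is OFF. The idea is that any activation from a $c=0$ state is dominated: the expected immediate reward is below the stationary probability, and the continuation after an OFF outcome is the same as before. This is what makes $W_n(0,\delta)=0$. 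The argument is cleanest with a total-reward or vanishing-discount argument, because the average reward is degenerate (zero) under this structure.

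For $c=1$ we will show the policy has threshold type. The arm is activated at $(1,\delta)$ iff $\delta$ is below a threshold; once activated and ON, it returns to $(1,1)$, where it is again activated. Therefore, at the indifference point for $(1,\delta)$, the passive value can be taken to be the value $0$ of an arm that is never activated again, because monotonicity of $P_n(1|1,\cdot)$ implies that larger ages are even less attractive.

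\textbf{Step 2: compute the action values.} Fix the relative value of the absorbing passive regime at $0$. Under ``always activate while ON'', the value at $(1,1)$ satisfies
\begin{align*}
V(1,1)=P_n(1|1,1)-\lambda+P_n(1|1,1)\,V(1,1),
\end{align*}
which gives $V(1,1)=\bigl(P_n(1|1,1)-\lambda\bigr)/\bigl(1-P_n(1|1,1)\bigr)$. Then
\begin{align*}
Q_n(1,\delta,1;\lambda)=P_n(1|1,\delta)-\lambda+P_n(1|1,\delta)\,V(1,1),\qquad Q_n(1,\delta,0;\lambda)=0.
\end{align*}
Setting these equal and clearing the factor $1-P_n(1|1,1)$ gives
\begin{align*}
\lambda\bigl(1-P_n(1|1,1)\bigr)=P_n(1|1,\delta)(1-\lambda).
\end{align*}
Solving for $\lambda$ yields exactly \eqref{whittle_index} for $c=1$.

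\textbf{Step 3: consistency check.} We must verify that the assumed structure is actually optimal at this $\lambda$. We will check that $\lambda\le P_n(1|1,1)$, so that $V(1,1)\ge 0$ and activation at $(1,1)$ is indeed optimal. We will also check that the resulting index is nonincreasing in $\delta$, which reconfirms the threshold structure used in Step 1.

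\textbf{Main obstacle.} Step 1 is the hard part. The average-reward criterion is degenerate here (zero gain), so we need a rigorous justification that the relevant comparison is between transient total rewards. We also need to rule out policies that wait at $c=1$ for a larger $\delta$, or that probe the arm from $c=0$. We will first try a vanishing-discount argument, taking the limit of discounted indifference equations in which the same computation holds with factors $\beta P_n(\cdot)$, and then let $\beta\to1$. The case $p_n<r_n$, where the monotonicity in $\delta$ reverses, will be handled separately or excluded.
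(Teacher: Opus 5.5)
Your $c=1$ computation (Steps 2--3) is the same relative-value argument as the paper's appendix proof, and the algebra is correct. The genuine gap is the $c=0$ half of Step 1: the claim that for every $\lambda>0$ it is optimal to stay passive forever once OFF is observed, so that $W_n(0,\delta)=0$. This step is not just unproved; it is false whenever $r_n>0$ and $p_n<1$.

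Let $\omega_o=r_n/(1-p_n+r_n)$ be the stationary ON probability, and take any $0<\lambda<\omega_o$. A policy that is passive at every $(0,\delta)$ is absorbed the first time OFF is observed, so its long-run average reward is $0$. Always-active earns $\omega_o-\lambda>0$. Hence $\{(0,\delta)\}_{\delta\ge1}\not\subseteq\Phi(\lambda)$. More precisely, take the relative values $h$ of the always-active policy. At any state $(c,\delta)$, active minus passive then equals $P_n(1|c,\delta)-\lambda$. So for $\lambda<r_n\le P_n(1|0,\delta)$, activation is strictly preferred at every $(0,\delta)$, and $W_n(0,\delta)\ge r_n>0$.

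Your dominance heuristic compares the immediate reward with the stationary probability. The relevant comparison is with $\lambda$, plus the option value of observing ON. So the $c=0$ branch as stated cannot be established by this or any route. The paper's Case 2 rests on the same heuristic, and it also clashes with the reversed-threshold structure for $c=0$ asserted in the proof of Theorem \ref{theorem_5}. The actual $c=0$ indices are the nontrivial ones from the belief-state analysis of \cite{Liu_TIT}, lying in $[r_n,\ \omega_o/(1-p_n+\omega_o)]$.

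For the $c=1$ branch you only need the absorbing-passive property on the range $\lambda\ge\omega_o/(1-p_n+\omega_o)$. There it can be verified directly, which also settles the ``probe from $c=0$'' issue you list as the main obstacle. Set $V(0,\cdot)=0$ and $V(1,1)=(p_n-\lambda)/(1-p_n)$ for $\lambda\le p_n$. Then probing at $(0,\delta)$ with $q=P_n(1|0,\delta)<\omega_o$ has relative value
\begin{align*}
q-\lambda+q\,\frac{p_n-\lambda}{1-p_n}=\frac{q-\lambda(1-p_n+q)}{1-p_n},
\end{align*}
which is negative iff $\lambda>q/(1-p_n+q)$. The map $x\mapsto x/(1-p_n+x)$ is increasing, and $P_n(1|1,\delta)\ge\omega_o>q$. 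Therefore every value $P_n(1|1,\delta)/(1-p_n+P_n(1|1,\delta))$ produced by your Step 2 lies in this range, and is also $\le p_n$. So the indifference computation is self-consistent there, and this calculation should replace your dominance claim.

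One further point: the monotonicity you read off \eqref{P11} is not what \eqref{P11} as printed gives; that formula does not return $p_n$ at $\delta=1$. The monotonicity holds for the correct $\delta$-step probability $P_n(1|1,\delta)=\omega_o+(1-\omega_o)(p_n-r_n)^\delta$.
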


\begin{proof}
See Appendix \ref{proof_Whittle}.
\end{proof}

The benefit of developing the Whittle index policy is that the algorithm does not need to solve $\lambda$ by stochastic sub-gradient descent method in every episode. Because the scheduler is unaware of the actual transition probability, the Whittle index at episode $k$ is given by
\begin{align} \label{online_whittle}
W_n ^k (c, \delta) = \begin{cases} \displaystyle \frac{P_n ^k (1|1, \delta)}{1 - \big( P_n ^k (1|1, 1) - P_n ^k (1|1, \delta) \big)}, & \text{if } c = 1, \\ \\ 0, & \text{if } c = 0, \end{cases}    
\end{align}
where $W_n ^k (c, \delta)$ represents the Whittle index at episode $k$.
The Online Whittle index policy is provided in Algorithm \ref{algo_3}. This algorithm selects $M$ users with the highest Whittle indices at every time-slot $t$.

\section{Simulation Results}\label{sim}
In this section, we evaluate the performance of the following policies:

\begin{itemize}

\item {\bf Random Policy}: This policy randomly selects users following a uniform distribution. In the simulation, we use MATLAB built-in function $\mathrm{randperm}(N, M)$. 

\item {\bf MGF Policy with Known Channel Statistic}: This policy computes the gain index using the known channel statistics. Then, this policy schedules $M$ users with the highest gain indices in every time-slot. 

\item {\bf Whittle Index Policy with Known Channel Statistic}: This policy computes the Whittle index using the known channel statistics. The closed-form expression for the Whittle index with known channel statistics of this problem is provided in \cite{Ouyang_TMC, Liu_TIT}. Then, the policy schedules $M$ users with the highest Whittle indices in every time-slot $t$. 

\item {\bf Online-MGF Policy}: The policy is provided in Algorithm \ref{algo1}. The Online-MGF policy does not know the channel statistics. This policy first learns the channel statistics, then computes gain indices at the beginning of each episode, and schedule $M$ users with the highest gain indices. 

\item {\bf Maximum Age First (MAF) policy}: This policy schedules $M$ users with the highest AoCSI values at every time-slot. 
\end{itemize}

\subsection{ON/OFF Channel}

In the simulation, we consider two classes of users, namely class $1$ and class $2$. Users in the same class have the same channel statistics. We consider that half of the users belong to class $1$ and another half belong to class $2$. We denote $P_n (C_n (t) =1 |C_n (t-1) =1)$ of user $n$ by $p_1$ and  $P_n (C_n (t) =1 |C_n (t-1) =0)$ by $r_1$ if the user $n$ belongs to class $1$; otherwise we denote $P_n (C_n (t) =1 |C_n (t-1) =1)$ of user $n$ by $p_2$ and  $P_n (C_n (t) =1 |C_n (t-1) =0)$ by $r_2$ if user $n$ belongs to class $2$.

Figure \ref{fig:simulationH} illustrates the sum of throughput vs the number of episodes ($k$) with different time horizon length in each episode $H=100, 1000, 10000$. In this figure, we set $p_1=0.6$, $r_1=0.5$, $p_2=0.9$, $r_2=0.75$, $N=4$, and $M=1$. According to Figure \ref{fig:simulationH}, MGF policy with known channel statistics shows similar performance as Whittle index policy with known channel statistics. The figure also depicts that the proposed Online-MGF policy converges faster to the MGF policy with known channel statistics as the number of time horizon $H$ in each episode increases. This is because for higher values of $H$, Online-MGF observes more state transitions in each episode, which yields accurate learning of channel statistics. Moreover, Online-MGF performs better compared to the other baselines, such as MAF and random policies. This is because MAF policy only considers AoCSI and ignores most recently observed CSI, whereas Random policy schedules the users randomly. 

Figure \ref{fig:simulationN} illustrates the sum of throughput vs number of episodes ($k$) with different configurations of $N$ and $M$. In this figure, we set $H=10000$ and the other parameters are the same as Figure \ref{fig:simulationH}. 
This figure shows that for all configurations, the Online-MGF policy converges faster to the MGF and the Whittle index policies with known channel statistics. As expected, the proposed Online-MGF policy performs better compared to MAF and random policies. 

Figure \ref{fig:simulationP} depicts the sum of throughput vs the number of episodes ($k$) with different values of $p_1=0.1, 0.5, 0.8$. In this figure, we set $N=10$, $M=4$ and other parameters are the same as Figure  \ref{fig:simulationN}. 
As expected, the Online-MGF policy outperforms the MAF and random policies.

In Figure \ref{fig:simulationM}, we plot the sum of throughput vs the number of channels ($M$). In this figure, the sum of throughput for each point is averaged over $100$ episodes. For this figure, we set $N=100$ and other parameters are the same as Figure \ref{fig:simulationN}. As the number of channels increases, the sum of throughput increases. The figure illustrates that the performance gain of Online-MGF policy with MAF and Random policies first increases with $M$ up to $M=50$. After $M=50$, the performance gain decreases because the number of available channels for allocating to $N=100$ users increases. 

Our simulation illustrates that the performance of our proposed Online-MGF policy aligns with the MGF policy and Whittle index policy with known channel statistics. Furthermore, Online-MGF policy outperforms MAF and random policy as the number of episodes increases in every settings. Notably, for $p_1 = 0.8$, the Online-MGF policy outperforms the Whittle index policy with known channel statistics.

\begin{figure*}[t]
 \centering 
%
\begin{subfigure}[t]{0.36\textwidth}
\includegraphics[width=\textwidth]{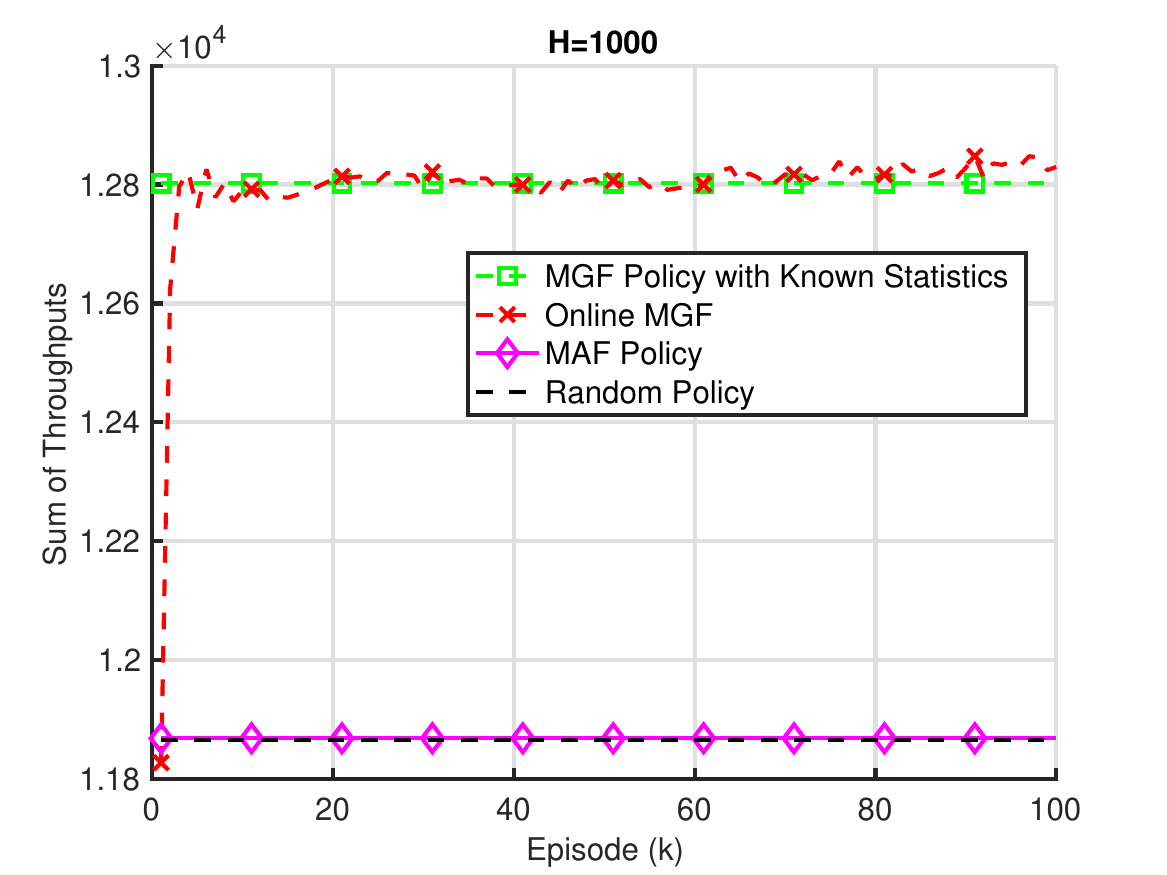}
\end{subfigure}
%
\begin{subfigure}[t]{0.36\textwidth}
\includegraphics[width=\textwidth]{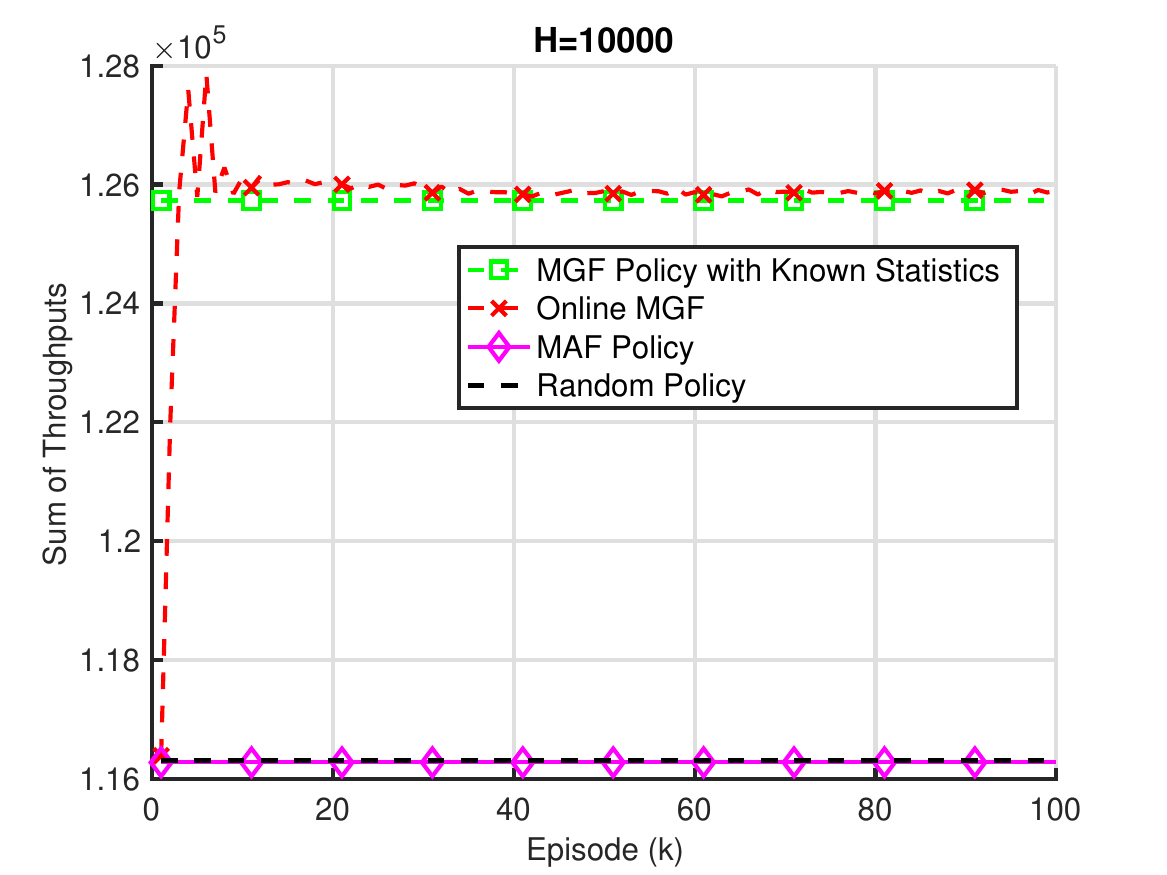}
\end{subfigure}
\vspace{0.05in}
\caption{\small Sum of Throughputs vs the number of episodes ($k$) with different time-horizon length in each episode ($H$). The other parameters are $N=10$, $M=4$, and $P_1$ and $P_2$ are same as the simulation setup.\label{fig:simulationH_three}}
\end{figure*}

\begin{figure*}[t]
  \centering 
  \begin{subfigure}[t]{0.36\textwidth}
\includegraphics[width=\textwidth]{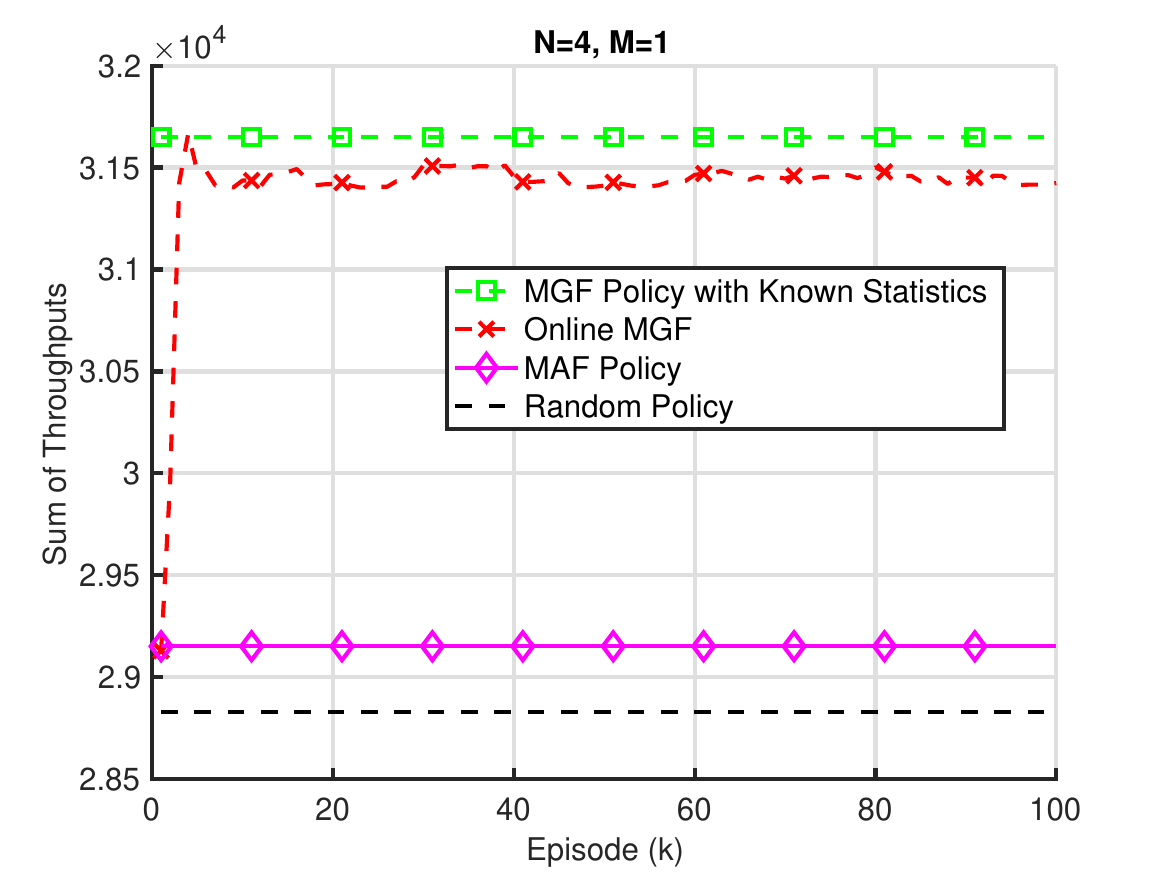}
\end{subfigure}
%
\begin{subfigure}[t]{0.36\textwidth}
\includegraphics[width=\textwidth]{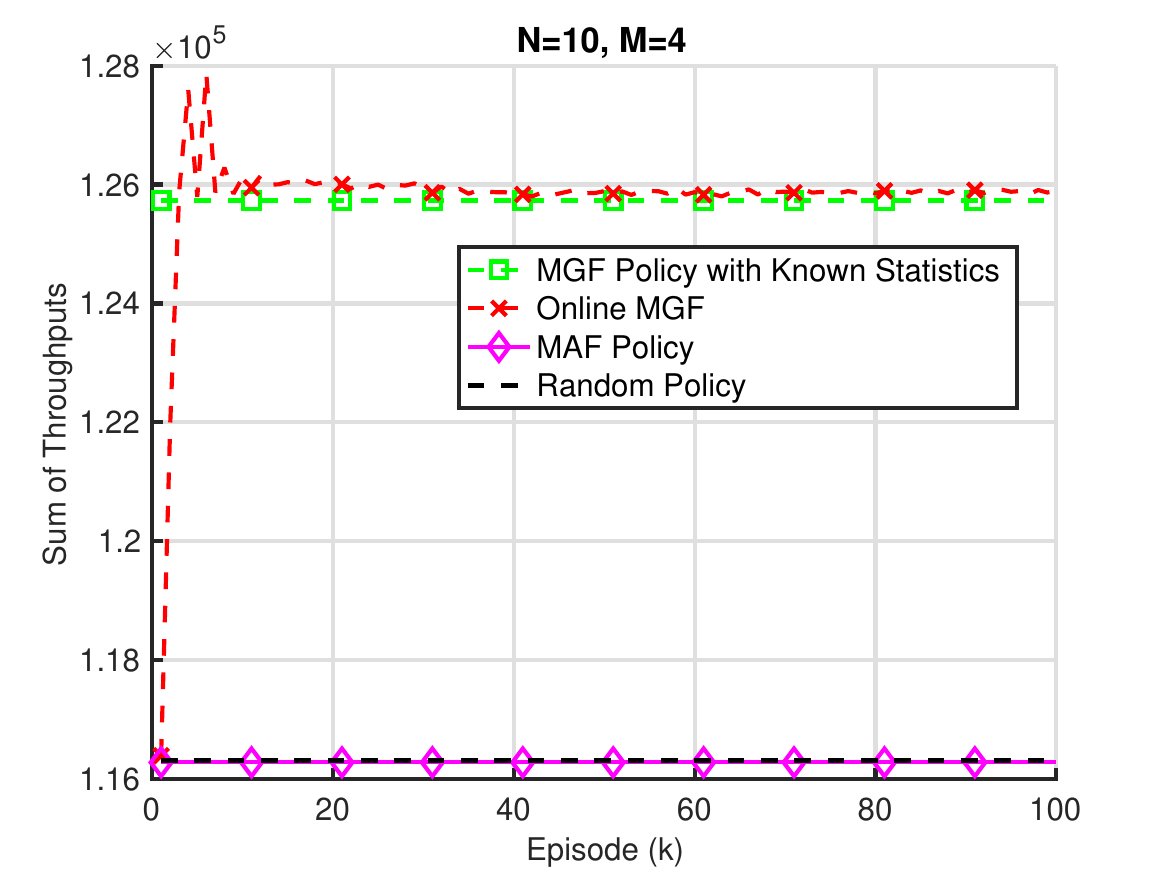}
\end{subfigure}
%
\vspace{0.05in}
\caption{\small Sum of Throughputs vs the number of episodes ($k$) with different configurations of $N$ and $M$. The other parameters are $H=10000$ and $P_1$ and $P_2$ are same as the simulation setup.  \label{fig:simulationN_three}}
\end{figure*}


\subsection{Three-state Channel Model}

In this simulation, we evaluate the performance of the Online-MGF policy for a more complex scenario where the CSI $C_n(t)$ is modeled as a three-state time-homogeneous Markov chain. In this model, the channel states are represented by their Signal-to-Noise Ratio (SNR) values and $C_n (t) \in \{0.5, 6.31, 39.81\}$.
Similar to the ON/OFF channel setup, we consider two classes of users, namely class 1 and class 2. Users in the same class have the same channel statistics, and we consider that half of the users belong to class 1 and the other half belong to class 2. Instead of standard ON/OFF probabilities, the state transitions for users in class 1 and class 2 are governed by distinct $3 \times 3$ transition probability matrices, denoted by $P_1$ and $P_2$, respectively, which are given by
$$P_1 = \begin{bmatrix} 0.80 & 0.20 & 0 \\ 0.10 & 0.80 & 0.10 \\ 0 & 0.20 & 0.80 \end{bmatrix}, \quad P_2 = \begin{bmatrix} 0.50 & 0.40 & 0.10 \\ 0.25 & 0.50 & 0.25 \\ 0.10 & 0.40 & 0.50 \end{bmatrix}.$$
To evaluate scheduling performance, the  instantaneous throughput is calculated by $h(C_n(t), a_n(t)) = a_n(t) \log_2(1 +(C_n(t))$. Because closed-form expression of the Whittle index is not available for three state channel model, Whittle policy with known statistics is not plotted for the figures in this setup.

Figure \ref{fig:simulationH_three} illustrates the sum of throughput vs the number of episodes ($k$) with different time horizon length in each episode $H=1000, 10000$. In this figure, we set $N=10$, and $M=4$. According to Figure \ref{fig:simulationH_three}, when $H$ is small, MGF policy with known channel statistics performs better than the Online-MGF policy. The figure also depicts that the proposed Online-MGF policy converges faster to the MGF policy with known channel statistics as the number of time horizon $H$ in each episode increases because for higher $H$, Online-MGF observes more state transitions in each episode, which yields accurate learning of channel statistics. Moreover, Online-MGF performs better compared to the other baselines, such as MAF and random policies. 

Figure \ref{fig:simulationN_three} illustrates the sum of throughput vs number of episodes ($k$) with different configurations of $N$ and $M$. In this figure, we set $H=10000$ and the other parameters are the same as Figure \ref{fig:simulationH_three}. 
This figure shows that for all configurations, the Online-MGF policy converges faster to the MGF with known channel statistics as the number of users $N$ and channels $M$ increases. As expected, the proposed Online-MGF policy performs better compared to MAF and random policies. 


\section{conclusion}

We study an online wireless scheduling problem with imperfect CSI and unknown channel statistics. The formulated problem is an RMAB. By leveraging the sufficient statistic of the history, we are able to reduce the state space of the underlying RMAB for online learning. To that end, we develop an Online-MGF policy to solve the RMAB that does not need to satisfy any indexability condition and achieves sub-linear regret with the number of episodes.  The proposed policy exhibits less  complexity compared to existing online learning policies. Through simulation, we validate the effectiveness of the proposed Online-MGF policy.
\bibliographystyle{IEEEtran}
\bibliography{ref}
\clearpage
\newpage
\appendices





\section{Proof of Theorem \ref{theorem_simplification}} \label{proof_theorem_1}

We prove Theorem \ref{theorem_simplification} in three steps: (i) First, we show that $C_n (t)$ is conditionally independent of the $(C_n (\tau - \Delta_n (\tau)), \Delta_n (\tau))_{\tau =1} ^ {t-1}$  given the last observed CSI $C_n (t-\Delta_n (t))$ and its AoCSI $\Delta_n (t)$; (ii) Next, we show that $C_n (t)$ is conditionally independent of the scheduling decisions $(a_n (\tau))_{\tau =1} ^ {t-1}$ and the delivery indicators $(\beta_n (\tau))_{\tau =1} ^ {t-1}$ given the AoCSI history $(\Delta_n (\tau))_{\tau =1} ^ {t}$; (iii) Finally, we show that the value function of the finite horizon MDP given the history $\mathcal{H}_n (t)$ is equal to the value function of the finite horizon MDP given $(C_n (t-\Delta_n (t)), \Delta_n (t))$. 

Due to the Markov property and time homogeneity of the CSI $C_n (t)$, we have the following Markov chain given $\Delta_n (t)$: $C_n (t) \leftrightarrow C_n (t-\Delta_n (t)) \leftrightarrow C_n ((t-1)-(\Delta_n (t-1))) \leftrightarrow \ldots$. Therefore, $C_n (t)$ is conditionally independent of $(C_n (\tau - \Delta_n (\tau)), \Delta_n (\tau))_{\tau=1}^{t-1}$ given $(C_n (t-\Delta_n (t)), \Delta_n (t))$. 

Moreover, $C_n (t)$ depends on $(C_n (t-\Delta_n (t)), \Delta_n (t))$ and $(\beta_n (\tau))_{\tau=1}^{t-1}$. To compute $\Delta_n (\tau)$, we require scheduling decisions $(a_n (\tau))_{\tau =0} ^ {t-1}$ and $(\Delta_n (\tau))_{\tau=1}^{t-1}$. Hence, $C_n (t)$ depends on $(a_n (\tau))_{\tau =0} ^ {t-1}$ and $(\Delta_n (\tau))_{\tau=1}^{t-1}$ through $\Delta_n (t)$. However, if the AoCSI history $(\Delta_n (\tau))_{\tau =1} ^ {t}$ is given, then the CSI $C_n (t)$ does not depend on the scheduling decisions $(a_n (\tau))_{\tau =1} ^ {t-1}$ and the delivery indicators $(\beta_n (\tau))_{\tau =1} ^ {t-1}$. Hence, $C_n (t)$ is conditionally independent of the scheduling decisions $(a_n (\tau))_{\tau =0} ^ {t-1}$ and the delivery indicators $(\beta_n (\tau))_{\tau =0} ^ {t-1}$ given the AoCSI history $(\Delta_n (\tau))_{\tau =1} ^ {t}$.

Next, from \eqref{problem}, we can write

\begin{align} \label{exp}
    &\mathbb E\bigg[\sum_{n=1}^{N} h (C_n (t) a_n (t))\bigg|(\mathcal{H}_n (t))_{n=1}^{N}\bigg]\nonumber\\
    =& \sum_{n=1}^{N} \mathbb E\bigg[ h (C_n (t) a_n (t))\bigg|\mathcal{H}_n (t)\bigg],
\end{align}
where \eqref{exp} holds because given $(a_n (\tau))_{\tau=1}^{t-1}$, $(\beta_n(t))_{\tau=1}^{t-1}$ are independent across both users and time-slots and the CSI \{$C_n (t), t = 0,1,2,\ldots, T$\} and \{$C_m (t), t = 0,1,2,\ldots, T$\} are independent for all $n\neq m$. 

Next, utilizing the fact that $C_n (t)$ is conditionally independent of $\mathcal{H}_n (t)$ given $(C_n (t-\Delta_n (t)), \Delta_n (t))$, we get that
\begin{align} \label{exp_1}
    & \mathbb E\bigg[ h (C_n (t) a_n (t))\bigg|\mathcal H_n (t)\bigg] \nonumber\\
    =& \mathbb E\bigg[ h (C_n (t) a_n (t))\bigg|C_n (t-\Delta_n (t)), \Delta_n (t)\!\bigg]\!.
\end{align}

Therefore, the value function of the finite horizon MDP \eqref{problem}-\eqref{constraint} under any given policy $\pi=(a_n(1), a_n(2), \ldots, a_n (T))_{n=1}^{N}$ can be written as
\begin{align} \label{Jpi}
& \sum_{t=1}^{T} \sum_{n=1}^{N} \mathbb E\bigg[ h(C_n (t) a_n (t))\bigg|(\mathcal H_n (t))_{n=1}^{N}\bigg] \nonumber\\
=& \sum_{t=1}^{T} \sum_{n=1}^{N} \mathbb E \bigg[ h(C_n (t) a_n (t)\bigg|C_n (t-\Delta_n (t)), \Delta_n (t)\bigg],
\end{align}
where \eqref{Jpi} holds from \eqref{exp_1}.
Therefore, from \cite[Chapter 4.3]{bertsekas2011dynamic}, Theorem \ref{theorem_simplification} follows.

\section{Proof of Lemma \ref{lemma1}}
\label{proof_lemma1}

To bound the probability that the actual transition $P$ fall outside the confidence region $\mathcal{B}^k$, we evaluate the distance between the true transition probabilities $P_n(\cdot|c,\delta)$ and the empirical transition probabilities $\hat{P}_n^k(\cdot|c,\delta)$.

The $L_1$-deviation between an actual probability distribution and its empirical estimate over $m$ distinct events from $n$ samples is bounded by \cite[Theorem 2.1]{weissman2003inequalities}
\begin{align} \label{standard}
\text{Pr}(|p-\hat p|_1 \geq \alpha) \leq (2^m -2) e^{-\frac{n \alpha^2}{2}}.
\end{align}
We will utilize this result to compare the actual transition probabilities $P_n (\cdot|c, \delta)$ and the optimistic transition probabilities $P_n ^k (\cdot|c, \delta)$ for all state $(c, \delta)$. 

In our model, when an active action is taken, the AoCSI drops to $\delta' = 1$. Therefore, the next state only varies over the size of the CSI space, i.e., $m = |\mathcal{C}|$ distinct possible next states. For any given user $n$ and state $(c, \delta)$, let $n = \max\{1, B_n^k(c, \delta)\}$ denote the number of samples. Substituting the confidence radius $\rho_n ^k (c, \delta)$ defined in \eqref{radius} for $\alpha$ in \eqref{standard}, we get
\begin{align}
& \text{Pr}\left(|\hat{P}_n^k(\cdot|c,\delta) - P_n(\cdot|c,\delta)|_1 \geq \rho_n^k(c,\delta)\right) \nonumber\\
\leq & 2^{|\mathcal{C}|} e^{\left(-\frac{n}{2} \left[ \frac{2|\mathcal{C}| \log(2\Gamma|\mathcal{C}|N \frac{k^2}{\mu})}{n} \right] \right)} \nonumber \\
=& 2^{|\mathcal{C}|} e^{\left(-|\mathcal{C}| \log\left(\frac{2\Gamma|\mathcal{C}|N k^2}{\mu}\right) \right)} \nonumber \\
=& 2^{|\mathcal{C}|} \left( \frac{\mu}{2\Gamma|\mathcal{C}|N k^2} \right)^{|\mathcal{C}|} 
= \left( \frac{\mu}{\Gamma|\mathcal{C}|N k^2} \right)^{|\mathcal{C}|}. \label{single_state_bound}
\end{align}
Because the state space size $|\mathcal{C}| \geq 2$ and the term inside the parenthesis is strictly less than 1 (for sufficiently small $\mu$), we can write
\begin{equation} \label{simplified_bound}
\text{Pr}\left(|\hat{P}_n^k(\cdot|c,\delta) - P_n(\cdot|c,\delta)|_1 \geq \rho_n^k(c,\delta)\right) \leq \frac{\mu}{\Gamma|\mathcal{C}|N k^2}.\end{equation}
The actual transition matrix $\mathbf{P}$ lies outside the confidence region $\mathcal{B}^k$ if the deviation exceeds the confidence radius for any user $n \in \{1, \dots, N\}$ or any state $(c, \delta)$ within the truncated state space size $\Gamma|\mathcal{C}|$. Applying the Union Bound over all $N$ users and all $\Gamma|\mathcal{C}|$ states yields
\begin{align}
&\text{Pr} (\mathbf{P} \notin \mathcal{B}^k) \nonumber\\
\leq & \sum_{n=1}^N \sum_{(c, \delta) \in \mathcal{C} \times \mathcal{D}} \text{Pr} \left(|\hat{P}_n^k(\cdot|c,\delta) - P_n(\cdot|c,\delta)|_1 \ge \rho_n^k(c,\delta)\right) \nonumber\\ 
\leq & N  \Gamma|\mathcal{C}|  \left( \frac{\mu}{\Gamma|\mathcal{C}|N k^2} \right)  
= \frac{\mu}{k^2}. \label{union_bound}
\end{align}
Therefore, the probability that the true transition matrix $\mathbf{P}$ falls within the confidence ball is
\begin{align}
\text{Pr}(\mathbf{P} \in \mathcal{B}^k) = 1 - \text{Pr}(\mathbf{P} \notin \mathcal{B}^k) \ge 1 - \frac{\mu}{k^2}.
\end{align}
This completes the proof.

\ignore{
In this sequel, we get
\begin{align}
\text{Pr}(|P_n ^k (\cdot|c, \delta)-P_n (\cdot|c, \delta)|_1 \geq \alpha) \leq (2^{\tau|\mathcal{C}|} -2) e^{-\frac{n \alpha^2}{2}}. \label{L1_dev}
\end{align}
Let $\alpha = \sqrt{\frac{2}{n} \log\bigg(2^{\tau |\mathcal{C}|} \tau |\mathcal{C}| N \frac{k^2}{\mu} \bigg)}$. Substituting $\alpha$ in \eqref{L1_dev} yields
\begin{align}
\text{Pr}\bigg(|P_n ^k (\cdot|c, \delta)-P_n (\cdot|c, \delta)|_1 & \geq \sqrt{\frac{2}{n} \log\bigg(2^{\tau |\mathcal{C}|} \tau |\mathcal{C}| N \frac{k^2}{\mu} \bigg)}\bigg) \nonumber\\
& \leq 2^{\tau|\mathcal{C}|} e^{-\log\bigg(2^{\tau |\mathcal{C}|} \tau |\mathcal{C}| N \frac{k^2}{\mu} \bigg)} \nonumber\\
& = \frac{\mu}{\tau |\mathcal{C}| N k^2}
\end{align}
Denote $n = \max\{1, B_n ^k (c, \delta)\}$ for all $(c, \delta)$. Next, by taking summation over all states $(c, \delta)$ and all users $n \in \{1,2, \ldots, N\}$ implies
\begin{align}
\text{Pr} (\mathbf{P} \notin \mathcal{B} ^{k}) \leq \frac{\mu}{k^2}.
\end{align}
Therefore,
\begin{align}
\text{Pr} (\mathbf{P} \in \mathcal{B} ^{k}) \geq 1-\frac{\mu}{k^2}.
\end{align}
This completes the proof.}

\section{Proof of Lemma \ref{lemma2}}
\label{proof_lemma2}

If the confidence bound holds, i.e., $\mathbf{P} \in \mathcal{B}^k$, we can write
\begin{align}
& \text{Reg}_{\text{comp}_1} (k) \nonumber\\
=& L(\pi^*, \mathbf{P}, \lambda^*) - L(\pi^k, \mathbf{P}, \lambda^k) \nonumber \\
= & \sum_{n=1}^N \left( V_{n,1}^{P_n}(c_n(1), \delta_n(1); \lambda^*) - V_{n,1}^{P_n}(c_n(1), \delta_n(1); \lambda^k; \pi_n ^k) \right), \nonumber\\
\leq & \sum_{n=1}^N \left( V_{n,1}^{P_n}(c_n(1), \delta_n(1); \lambda^k) - V_{n,1}^{P_n}(c_n(1), \delta_n(1); \lambda^k; \pi_n ^k) \right), \label{regret_value_bound}
\end{align}
where \eqref{regret_value_bound} holds because the Lagrange multiplier $\lambda^*$ minimizes the Lagrangian $L(\pi^*, \mathbf{P}, \lambda)$. Note that $V_{n,1}^{P_n}(\cdot; \pi_n ^k)$ represents the value achieved under the actual transition probabilities by executing policy $\pi^k$.

Next, in order to prove \eqref{lemma2_3}, we first show that the optimal value function under the true transition $P_n$ is upper-bounded by the optimistic value function under $P_n^k$, which is given by
\begin{align} \label{value_upper_bound}
V_{n,t}^{P_n}(c, \delta; \lambda^k) \leq V_{n,t}^{P_n^k}(c, \delta; \lambda^k).
\end{align}
We prove \eqref{value_upper_bound} by using backward induction. For $t = H+1$, it is trivially true since $V_{n,H+1}^{P_n}(c, \delta) = V_{n,H+1}^{P_n^k}(c, \delta) = 0$ for all $(c, \delta)$.

Let \eqref{value_upper_bound} holds for $t = h+1$. Then, we have
\begin{align} \label{inductive_hypothesis}
V_{n,h+1}^{P_n}(c, \delta; \lambda^k) \leq V_{n,h+1}^{P_n^k}(c, \delta; \lambda^k).
\end{align}
We have to show that if \eqref{inductive_hypothesis} holds, then the following is true for $t = h$,
\begin{align}
& V_{n,h}^{P_n}(c, \delta; \lambda^k) - V_{n,h}^{P_n^k}(c, \delta; \lambda^k) \nonumber\\
=& \max_{a \in \{0,1\}} Q_{n,h}^{P_n}(c, \delta, a; \lambda^k) - \max_{a \in \{0,1\}} Q_{n,h}^{P_n^k}(c, \delta, a; \lambda^k) \nonumber \\
\leq & \max_{a \in \{0,1\}} \bigg( Q_{n,h}^{P_n}(c, \delta, a; \lambda^k) - Q_{n,h}^{P_n^k}(c, \delta, a; \lambda^k) \bigg). \label{max_difference}
\end{align}
Next, we analyze \eqref{max_difference} for both actions. If the passive action ($a=0$) is taken, the last observed CSI will not change, and the state transition is deterministic. Using  \eqref{inductive_hypothesis}, we get
\begin{align}
& Q_{n,h}^{P_n}(c, \delta, 0; \lambda^k) - Q_{n,h}^{P_n^k}(c, \delta, 0; \lambda^k) \nonumber\\
=& V_{n,h+1}^{P_n}(c, \delta+1; \lambda^k) - V_{n,h+1}^{P_n^k}(c, \delta+1; \lambda^k) \leq 0.
\end{align}
If the active action ($a=1$) is taken, we have
\begin{align}
& Q_{n,h}^{P_n}(c, \delta, 1; \lambda^k) - Q_{n,h}^{P_n^k}(c, \delta, 1; \lambda^k) \nonumber\\
=& \sum_{c'} P_n(c'|c,\delta)\big(h(c',1) + V_{n,h+1}^{P_n}(c',1;\lambda^k)\big) - \nonumber\\
& \max_{\tilde{P}_n ^k \in \mathcal{B}^k} \sum{c'} \tilde{P}n^k(c'|c,\delta)\big(h(c',1) + V_{n,h+1} ^{P_n^k}(c',1;\lambda^k)\big) 
\nonumber\\
\leq & \sum_{c'} P_n(c'|c,\delta)\big(h(c',1) + V_{n,h+1}^{P_n^k}(c',1;\lambda^k)\big) - \nonumber\\ & \max_{\tilde{P}_n ^k \in \mathcal{B}^k} \sum{c'} \tilde{P}n^k(c'|c,\delta)\big(h(c',1) + V_{n,h+1}^{P_n^k}(c',1;\lambda^k)\big) \label{active_diff_step1} \\
\leq & 0. \label{active_diff_step2}
\end{align}
Here, \eqref{active_diff_step1} holds from $V_{n,h+1}^{P_n} \leq V_{n,h+1}^{P_n^k}$ and \eqref{active_diff_step2}  holds because the actual transition probability $P_n$ lies within the confidence ball $\mathcal{B}^k$ (i.e., $P_n \in \mathcal{B}^k$). Therefore, the expected value  at $P_n$ can never exceed the maximum possible expected value found by searching over all $\tilde{P}_n^k \in \mathcal{B}^k$. Because this implies the difference between the true expectation and the optimistic expectation is less than or equal to zero for all actions, we conclude $V_{n,h}^{P_n}(c, \delta; \lambda^k) \le V_{n,h}^{P_n^k}(c, \delta; \lambda^k)$ for all $t$. Substituting this into \eqref{regret_value_bound} yields
\begin{align}
& \text{Reg}_{\text{comp}_1}(k) \nonumber\\ \leq & \sum_{n=1}^N \left( V_{n,1}^{P_n^k}(c_n(1), \delta_n(1); \lambda^k) - V_{n,1}^{P_n}(c_n(1), \delta_n(1); \lambda^k; \pi_n ^k) \right).
\end{align}
This completes the proof. 

\ignore{If the confidence bound holds, i.e., $\mathbf{P} \in \mathcal{B}_k$, we can write 
\begin{align}
& \text{Reg}_{\text{comp}_1} (k) \nonumber\\
=&  L({\pi^{*}},  \mathbf{P}, \lambda^*,  {\bf{c}} (1), \boldsymbol{\delta} (1)) - L({\pi^{k}}, \mathbf{P}, \lambda^k, {\bf{c}} (1), \boldsymbol{\delta} (1)) \nonumber\\
=&  \sum_{n=1}^{N} \!V_{n, 1} ^{P_n} (c_{n} (1), \delta_{n} (1); \lambda^*) \!-\! V_{n, 1} ^{P_n} (c_{n} (1), \delta_{n} (1); \lambda^k) \nonumber\\
\leq & \sum_{n=1}^{N} \!V_{n, 1} ^{P_n} (c_{n} (1), \delta_{n} (1); \lambda^k) \!\!-\!\! V_{n, 1} ^{P_n} (c_{n} (1), \delta_{n} (1);  \lambda^k), \label{first_in_eq} 
\end{align}
where \eqref{first_in_eq} holds because the Lagrange multiplier  $\lambda^*$ minimizes the Lagrangian $L({\pi^{*}}, \mathbf{P},  \lambda)$ in \eqref{lagrangian}. 
Next, in order to prove \eqref{lemma2_3}, we first show that the following holds for any $t$:
\begin{align} \label{lemma_2_1}
V_{n, t} ^{P_n} (c, \delta; \lambda^k) \leq V_{n, t} ^{P_n ^k} (c, \delta; \lambda^k).    
\end{align}

Let $Q_{n, t} ^{P_n ^k} (c, \delta, a; \lambda^k) = Q_{n, t} ^{\hat P_n ^k} (c, \delta, a; \lambda^k) + V_{\max} \rho_n ^k (c, \delta)$. Since all functions are associated with $\lambda^k$, we will omit $\lambda^k$ in rest of the proof. Then, 
\begin{align}
& Q_{n, t} ^{P_n ^k} (c, \delta, a) - Q_{n, t} ^{P_n} (c, \delta, a) \nonumber\\
=&  Q_{n, t} ^{\hat P_n ^k} (c, \delta, a; \lambda^k) + V_{\max} \rho_n ^k (c, \delta) - Q_{n, t} ^{P_n} (c, \delta, a) \nonumber\\
=& \sum_{c'} c' (\hat P_n ^k (c'|c, \delta) - P_n ^k (c'|c, \delta)) + V_{\max} \rho_n ^k (c, \delta) \nonumber\\
& \!\!+\! \sum_{c'} \bigg(\hat P_n ^k (c'|c, \delta) V_{n, t+1} ^k (c', 1) - P_n ^k (c'|c, \delta) V_{n, t+1} ^k (c', 1)\bigg), \label{lemma2_Q}
\end{align}
where the first summation is zero. In \eqref{lemma2_Q}, the following holds from \cite{agarwal2019reinforcement}
\begin{align}
& \sum_{c'} \hat P_n ^k (c'|c, \delta) V_{n, t+1} ^k (c', 1) - \sum_{c'} \hat P_n ^k (c'|c, \delta) V_{n, t+1} ^k (c', 1) \nonumber\\
& \leq V_{\max} \rho_n ^k (c, \delta) \label{lemma2_Q1}
\end{align}
Applying \eqref{lemma2_Q1} in \eqref{lemma2_Q} yields
\begin{align}
& Q_{n, t} ^{P_n ^k} (c, \delta, a) - Q_{n, t} ^{P_n} (c, \delta, a) \geq \nonumber\\
& \sum_c' \hat P_n ^k (c'|c, \delta) V_{n, t+1} ^k (c', 1) - \sum_c' \hat P_n ^k (c'|c, \delta) V_{n, t+1} ^k (c', 1). \label{lemma2_Q_2}
\end{align}

Next, we will utilize the result in \eqref{lemma2_Q_2} to prove \eqref{lemma_2_1}. We will prove \eqref{lemma_2_1} by backward induction. Because $V_{n, T+1}(c, \delta)=0$ for all $c, \delta$, \eqref{lemma_2_1} holds for $t=H+1$. Let \eqref{lemma_2_1} is true for $t=h+1$ which yields
\begin{align} \label{lemmaQ_3}
V_{n, h+1} ^{P_n} (c, \delta) \leq V_{n, h+1} ^{P_n ^k} (c, \delta).
\end{align}
If \eqref{lemmaQ_3} is true, then from \eqref{lemma2_Q_2}, we get
\begin{align}
Q_{n, t} ^{P_n ^k} (c, \delta, a) \geq Q_{n, t} ^ {P_n ^k} (c, \delta). 
\end{align}
Hence, by backward induction, \eqref{lemma_2_1} holds for any $t=H, H-1, \ldots, 1$ for all $(c, \delta)$. When the confidence bound holds, using \eqref{lemma_2_1} and the fact that $P_n ^k$ solves the optimization problem \eqref{opt_value}-\eqref{opt_value_constraint}, we can show that
\begin{align}
& \text{Reg}_{\text{comp}_1} (k) \leq \nonumber\\
&  \sum_{n=1}^{N} \!V_{n, 1} ^{P_n ^k} (c_{n} (1), \delta_{n} (1); \lambda^k) \!\!-\!\! V_{n, 1} ^{P_n} (c_{n} (1), \lambda_{n} (1);  \lambda^k). \label{sec_eq}
\end{align}}

\section{Proof of Theorem \ref{theorem_regret}}
\label{proof_theorem_regret}

We bound the cumulative regret of $\text{comp}_1$ over $K$ episodes by analyzing two cases: when the true transition probability $\mathbf{P}$ lies within the confidence ball $\mathcal{B}^k$, and when it does not.

When the confidence bound holds (i.e., $\mathbf{P} \in \mathcal{B}^k$), from Lemma \ref{lemma2}, the regret for episode $k$ is bounded by the difference in value functions evaluated under the policy $\pi^k$, which is given by
\begin{align}
& \text{Reg}_{\text{comp}_1} (k) \nonumber\\
\leq & \sum_{n=1}^N \left( V_{n,1}^{P_n^k}(c_n(1), \delta_n(1); \lambda^k) - V_{n,1}^{P_n}(c_n(1), \delta_n(1); \lambda^k; \pi_n ^k) \right).
\end{align}
Since we need to estimate the transitions only for the active actions, by using Lemma \ref{lemma2} and applying Lemma 10 of \cite{ju2023achieving}, we get
\begin{align}
& V_{n,1}^{P_n^k}(c_n(1), \delta_n(1); \lambda^k) - V_{n,1}^{P_n}(c_n(1), \delta_n(1); \lambda^k; \pi_n ^k) \nonumber\\
=& \mathbb{E}_{P_n, \pi_n ^k} \!\! \bigg[\!\sum_{t=1}^{H} \!\!\mathbf{1}(a_n(t)\!=\!1)\!\! \sum_{c', \delta'} \bigg( P_n^k(c', \delta'|c_n(t), \delta_n(t)) \nonumber\\
& ~~~~~~~~~~~~~~~~~~~~~~~~~~~~~ -\!\! P_n(c', \delta'|c_n(t), \delta_n(t)) \bigg) \!V_{n,t+1}^{P_n^k}\!(s'\!) \!\bigg]. \label{value_difference_1}
\end{align}
Applying H\"{o}lder's inequality \cite{agarwal2019reinforcement} and utilizing the upper bound of the value function $V_{\max}$, we obtain
\begin{align}
& V_{n,1}^{P_n^k}(c_n(1), \delta_n(1); \lambda^k) - V_{n,1}^{P_n}(c_n(1), \delta_n(1); \lambda^k; \pi_n ^k) \nonumber\\
\leq & V_{\max} \mathbb{E}_{P_n, \pi_n ^k} \bigg[ \sum_{t=1}^{H} \mathbf{1}(a_n(t)=1) \bigg| P_n^k(\cdot|c_n(t), \delta_n(t)) \nonumber\\
&~~~~~~~~~~~~~~~~~~~~~~~~~~~~~~~~~~~~~ - P_n(\cdot|c_n(t), \delta_n(t)) \bigg|_1 \bigg].
\end{align}
Because $P_n \in \mathcal{B}^k$ and $P_n^k \in \mathcal{B}^k$, both transition matrices lie within the confidence radius $\rho_n^k$ with center at the empirical estimate $\hat{P}_n^k$. Using the triangle inequality, we can write
\begin{align}
& \big|P_n^k(\cdot|c_n(t), \delta_n(t)) - P_n(\cdot|c_n(t), \delta_n(t)) \big|_1 \nonumber\\
\leq & \big|P_n^k(\cdot|c_n(t), \delta_n(t)) - \hat{P}_n^k(\cdot|c_n(t), \delta_n(t)) \big|_1 + \nonumber\\
& \big| \hat{P}_n^k(\cdot|c_n(t), \delta_n(t)) - P_n(\cdot|c_n(t), \delta_n(t)) \big|_1 \nonumber\\ 
\leq & 2\rho_n^k(c_n (t), \delta_n (t)). \label{value_difference_3}
\end{align}
Therefore, when the confidence bound holds, the regret for episode $k$ is bounded by
\begin{align} \label{per_episode_rho}
& \text{Reg}_{\text{comp}_1}(k) \nonumber\\
\leq & 2 V_{\max} \sum_{n=1}^{N} \mathbb{E}_{P_n, \pi_n ^k} \left[ \sum_{t=1}^{H} \mathbf{1}(a_n(t)=1) \rho_n^k(c_n (t), \delta_n (t)) \right].
\end{align}
Let $\gamma_n ^k (c, \delta)$ is a random variable, which represents the number of visits to state $(c, \delta) \in \mathcal{C} \times \mathcal{D}$ in episode $k$ for arm $n$ and the active action is taken by arm $n$ during episode $k$. Substituting $\rho_n^k(c, \delta)$ form \eqref{radius} to \eqref{per_episode_rho}, we get the total regret over $K$ episodes as follows
\begin{align} \label{ref_proof_1}
& \sum_{k=1}^K \text{Reg}_{\text{comp}_1}(k) \nonumber\\
\leq & 2 V_{\max} \!\!\sum_{n=1}^N \mathbb{E} \bigg[\sum_{k=1}^K \!\!\sum_{(c', \delta') \in \mathcal{C} \times \mathcal{D}} \!\!\!\!\!\gamma_n^k(c, \delta) \sqrt{\frac{2|\mathcal{C}|\log(2\Gamma|\mathcal{C}|N \frac{k^2}{\mu})}{\max(1, B_n^k(c, \delta))}} \bigg].
\end{align}
Next, by applying \cite[Lemma E.3]{wang2023optimistic} yields
\begin{align}
& \sum_{k=1}^{K} \sum_{(c,\delta) \in \mathcal{C} \times \mathcal{D}} \frac{\gamma_n ^{k} (c, \delta)}{\sqrt{\max({1, B_n ^{k} (c, \delta))}}} \nonumber\\
\leq & \bigg(\sqrt{H+1} + 1\bigg) \sum_{(c,\delta) \in \mathcal{C} \times \mathcal{D}} \sqrt{B_n ^{k} (c, \delta)} \nonumber\\
\leq & \bigg(\sqrt{H+1} + 1\bigg) \sqrt{\Gamma |\mathcal{C}| K H},
\end{align}
where the last inequality holds because the total number of active actions taken by arm $n$ over $K$ episodes is at most $K H$.
Substituting this into \eqref{ref_proof_1}, we get the cumulative regret
\begin{align}
& \sum_{k=1}^K \text{Reg}_{\text{comp}_1}(k) \nonumber\\
&\leq 2 V_{\max} \sqrt{2|\mathcal{C}|\log\left(2\Gamma|\mathcal{C}|N \frac{K^2}{\mu}\right)} \nonumber\\
&~~~\sum_{n=1}^{N} \left((\sqrt{H+1} + 1) \sqrt{\Gamma |\mathcal{C}| K H} \right) \nonumber\\
&= \mathcal{O}\left( V_{\max} \Gamma |\mathcal{C}| N H \sqrt{K \log K} \right). \label{final_good_event}
\end{align}

When the confidence bound fails (i.e, $\mathbf{P} \notin \mathcal{B}^k$), from Lemma \ref{lemma1}, we know $\text{Pr}(\mathbf{P} \notin \mathcal{B}^k) \leq \mu/k^2$. The maximum possible regret in any episode is bounded by $2 V_{\max} N H$. Summing over $K$ episodes, the cumulative regret when the confidence bound does not hold is given by
\begin{align}
\sum_{k=1}^K 2 V_{\max} N H \text{Pr} (\mathbf{P} \notin \mathcal{B}^k) & \leq 2 V_{\max} N H \sum_{k=1}^K \frac{\mu}{k^2} \nonumber\\
& \leq 2 V_{\max} N H \mu \frac{\pi^2}{6} \nonumber\\
& = \mathcal{O}(\mu).
\end{align}
Combining both cases, the total cumulative regret for $\text{comp}_1$ holds with probability $1-\mu$ and is given by
\begin{align} \label{final_reg_1}
\text{Reg}_{\text{comp}_1}(K) \leq& \mathcal{O}\left( V_{\max} \Gamma |\mathcal{C}| N H \sqrt{K \log K} \right) + \mathcal{O}(\mu) \nonumber\\
=& \mathcal{O}\left( V_{\max} \Gamma |\mathcal{C}| N H \sqrt{K \log K} \right).
\end{align}
This completes the proof.

\ignore{
we get
\begin{align}
& \text{Reg}_{\text{comp}_1} (k) \nonumber\\
\leq & \sum_{n=1}^{N} \mathbb{E}_{P_n, \pi_n ^{k}} \bigg[\sum_{t=1}^{H} \rho_n ^{k} (c_n (t), \delta_n (t)) + \nonumber\\
& \sum_{t=1}^{H} \bigg| P_n ^{k} (\cdot|c_n (t), \delta_n (t)) - P_n (\cdot| c_n (t), \delta_n (t)) \bigg| V_{\max}\bigg], \label{regret_proof}
\end{align}
where $\pi_n ^k$ is the policy in episode $k$ for arm $n$. The inequality \eqref{regret_proof} is obtained by using Lemma \ref{lemma2} and applying Lemma 10 of \cite{ju2023achieving}. Continuing from \eqref{regret_proof}, we can write
\begin{align}
& \text{Reg}_{\text{comp}_1} (k) \nonumber\\
\leq & \sum_{n=1}^{N} (1 + V_{\max}) \mathbb{E}_{P_n, \pi_n ^{k}} \bigg[\sum_{t=1}^{H} \rho_n ^{k} (c_n (t), \delta_n (t)) \bigg] \nonumber\\
=& \sum_{n=1}^{N} (1 + V_{\max}) \mathbb{E}_{P_n, \pi_n ^{k}} \bigg[\sum_{t=1}^{H}  \sqrt{\!\frac{2 |\mathcal{C}| \log (2 \tau |\mathcal{C}| N \frac{k^2}{\mu})}{\max(1, B_n ^{k} (c_{n} (t), \delta_{n} (t))}}\bigg] \nonumber\\
=& \sum_{n=1}^{N} (1 + V_{\max}) \sqrt{2 |\mathcal{C}| \log (2 \tau |\mathcal{C}| N \frac{k^2}{\mu}} \nonumber\\
& ~~~~~~\mathbb{E}_{P_n, \pi_n ^{k}} \bigg[\sum_{t=1}^{H} \frac{1}{{\max(1, B_n ^{k} (c_{n} (t), \delta_{n} (t))}}\bigg] \nonumber\\
\leq & \frac{1}{\mu} (1 + V_{\max}) \sqrt{2 |\mathcal{C}| \log (2 \tau |\mathcal{C}| N K} \nonumber\\
& ~~~~~~\mathbb{E}_{P_n, \pi_n ^{k}} \bigg[\sum_{(c, \delta) \in \mathcal{C} \times \mathcal{D}} \frac{\gamma_n ^{k} (c, \delta)}{\max(1, B_n ^{k} (c_{n} (1), \delta_{n} (1)))} \bigg],  \label{reg_proof_1}
\end{align}
where $\gamma_n ^k (c, \delta)$ is a random variable, which represents the number of visits to state $(c, \delta) \in \mathcal{C} \times \mathcal{D}$ in episode $k$ for arm $n$. Because Algorithm \ref{algo1} needs to estimate the probabilities associated with the active actions only, we do not need to consider actions in \eqref{reg_proof_1}. Furthermore, applying \cite[Lemma E.3]{wang2023optimistic} yields
\begin{align}
& \sum_{k=1}^{K} \sum_{(c,\delta) \in \mathcal{C} \times \mathcal{D}} \frac{\gamma_n ^{k} (c, \delta)}{\sqrt{\max({1, B_n ^{k} (c, \delta))}}} \nonumber\\
\leq & \bigg(\sqrt{H+1} + 1\bigg) \sum_{(c,\delta) \in \mathcal{C} \times \mathcal{D}} \sqrt{B_n ^{k} (c, \delta)} \nonumber\\
\leq & \bigg(\sqrt{H+1} + 1\bigg) \sqrt{\tau |\mathcal{C}| K H}
\end{align}
Next, by taking the sum over all $K$ episodes, we get the cumulative regret
\begin{align} \label{regret_1_result}
& \text{Reg}_{\text{comp}_1} (k) = \nonumber\\
& \frac{1}{\mu} (1 + V_{\max}) \sqrt{2 |\mathcal{C}| \log (2 \tau |\mathcal{C}| N K} \bigg(\sqrt{H+1} + 1\bigg) \sqrt{\tau |\mathcal{C}| K H} \nonumber\\
& \leq O\bigg( V_{\max} \tau |\mathcal{C}| N H \sqrt{K \log K}\bigg).
\end{align}

The regret outside the confidence ball vanishes with probability
\begin{align}
& \text{Pr}(\mathbf{P} \in \mathcal{B} ^{k}), \forall \sqrt{K} \leq k \leq K) \nonumber\\
=& 1 - \text{Pr}(\mathbf{P} \in \mathcal{B} ^{k}) \nonumber\\
\geq & 1 \!-\!\!\!\!\!\!\!\!\! \sum_{\sqrt{K} \leq k \leq K} \!\frac{\mu}{k^2} 
=\! 1 \!-\!\! \int_{\sqrt{K}}^{K} \frac{\mu}{k^2} \nonumber\\
=& 1 \!-\! \mu \bigg(\!\!\frac{1}{\sqrt{K}} \!-\! \frac{1}{K}\!\bigg).
\end{align}
Applying union bound for all possible $K \in \mathbb{N}$, it holds with high probability $1- O(\mu)$.}

\section{Proof of Lemma \ref{lemma_g}} \label{proof_lemma_g}

Following the Lagrangian bounding framework established in \cite{brown2020index}, the sub-optimality of the MGF policy is bounded by the expected sum of single-step action-value difference relative to the optimal actions $a_n^*(t)$ taken to solve the relaxed problem  evaluated at the optimal dual cost $\lambda^*$. 
For episode $k$, the cost due to this difference is bounded by
\begin{align} \label{brown_gap}
g^k (N) \leq \mathbb{E}_{\pi^{\text{MGF}}} \bigg[\sum_{t=1}^H \sum_{n=1}^N \bigg(& Q_{n,t}(c_n(t), \delta_n (t), a_n^*(t)) \nonumber\\
& - Q_{n,t} (c_n(t), \delta_n (t), a_n^{\text{MGF}}(t)) \bigg) \bigg],
\end{align}
where $a_n^*(t)$ denote the action chosen to solve the relaxed problem \eqref{relax_problem}-\eqref{constraint_relax}, and $a_n^{\text{MGF}}(t)$ denote the action chosen according to the MGF policy.

Using standard fluid approximations, \cite{brown2020index} bound this difference. By leveraging the gain index in \eqref{gain_idx}, we can directly compute this difference. From \eqref{gain_idx}, the optimal policy for the relaxed problem selects $a_n ^* = 1$ if $\alpha_n(c, \delta) > \lambda^*$, and the passive action otherwise. The MGF policy selects $M$ arms with the highest gain $\alpha_n(c, \delta)$. Because the channel state is finite, the maximum single-step cost is bounded by $\Delta_{\max}$, which depends entirely on the maximum CSI $C_{\max}$ and the maximum AoCSI bound $\tau$
\begin{align} \label{our_delta_bound}
\max_{n, c, \delta} |\alpha_n(c, \delta) - \lambda^*| \leq \Delta_{\max}.
\end{align}
Substituting \eqref{our_delta_bound} into \eqref{brown_gap}, we get
\begin{align}
g^k(N) \leq\Delta_{\max} \sum_{t=1}^H \mathbb{E}_{\pi^{\text{MGF}}} \left[\sum_{n=1}^N \mathbf{1}(a_n^*(t) \neq a_n^{\text{MGF}}(t)) \right].
\end{align}

Based on the fluid limit convergence for RMABs, as the network size $N \to \infty$, the empirical state distribution converges to a deterministic fluid limit \cite{weber1990index}. The deviation of the strictly feasible constraint ($M$ activations) from the relaxed threshold constraint ($\lambda^*$) behaves according to the Central Limit Theorem \cite{weber1990index, brown2020index}. Specifically, the expected total number of mismatched arms scales with the standard deviation of the empirical measure, bounded by $C_{v} \sqrt{N}$ (where $C_{v}$ is the variance constant) \cite{brown2020index}.

The expected mismatch for each user in each episode is given by
\begin{align}
\frac{1}{N} g^k (N) \leq \frac{1}{N} \Delta_{\max} H C_{v} \sqrt{N} = \frac{H C_{v} \Delta_{\max}}{\sqrt{N}}.
\end{align}
Next, by taking summation over the finite horizon of $K$ episodes, we get
\begin{align}
g(N) \leq \sum_{k=1}^K \frac{H C_{v} \Delta_{\max}}{\sqrt{N}} = \mathcal{O}\left( \frac{K}{\sqrt{N}} \right).
\end{align}
This completes the proof.

\section{Proof of Theorem \ref{theorem_regret_2}} \label{proof_theorem_regret_2}

Utilizing $g (N)$ in \eqref{g}, we  can decompose $\text{comp}_2$ as follows
\begin{align}
\text{comp}_2 =& \sum_{n=1}^{N} L_n (\pi_n ^{k}, P_n, \lambda^*) - J^k (\pi^k, \mathbf{P}) \nonumber\\
=& g(N) + \sum_{n=1}^{N} L_n(\pi_n ^{k}, P_n, \lambda^*) - \sum_{n=1}^{N} L_n(\pi_n ^{k}, P_n ^k, \lambda^*) \nonumber\\
& + J^k (\pi^{k}, \mathbf{P}^k) - J^k (\pi^k, \mathbf{P}). \label{reg_new_term2}
\end{align}
If the confidence bound holds, then
\begin{align} \label{regret_cumulative}
\sum_{n=1}^{N} L_n(\pi_n ^{k}, P_n, \lambda^*) -  L_n(\pi_n ^{k}, P_n ^k, \lambda^*) \leq 0.
\end{align}
This is because $\pi_n ^k$ is the optimal solution to the Lagrangian problem and $\mathbf{P}^k$ achieves the highest Lagrangian objective value.

Furthermore, following the value difference bound established in \eqref{value_difference_1}-\eqref{value_difference_3}, the difference in the expected cumulative throughput under the same policy $\pi^k$ but different transition probabilities is bounded by 
\begin{align}
& J^k(\pi^k, \mathbf{P}^k) - J^k(\pi^k, \mathbf{P}) \nonumber\\
\leq & \sum_{n=1}^N V_{\max} \mathbb{E}_{P_n, \pi_n ^k} \bigg[ \sum_{t=1}^H \mathbf{1}(a_n(t)=1) \big| P_n^k(\cdot|c_n(t),\delta_n(t))\nonumber\\ 
& ~~~~~~~~~~~~~~~~~~~~~~~~~~~~~~~~~~~~~~- P_n(\cdot|c_n(t), \delta_n(t)) \big|_1 \bigg] \nonumber \\
\leq & 2 V_{\max} \sum_{n=1}^N \mathbb{E}_{P_n, \pi_n ^k} \left[ \sum_{t=1}^H \mathbf{1}(a_n(t)=1) \rho_n^k(c_n(t), \delta_n(t)) \right]. \label{J_diff_bound}
\end{align}

By taking summation of  \eqref{J_diff_bound} over all $K$ episodes and following the similar procedure in Appendix \ref{proof_theorem_regret}, we get
\begin{align} \label{regret_2_result}
& J^k (\pi^{k}, \mathbf{P}^k) - J^k (\pi^k, \mathbf{P}) \nonumber\\
\leq & \mathcal{O} \bigg(2 V_{\max} \Gamma |\mathcal{C}| N H \sqrt{K \log K}\bigg).
\end{align}


When the confidence bound fails (i.e., $\mathbf{P} \notin \mathcal{B}^k$), which occurs with probability at most $\mu/k^2$, the maximum cost per episode is bounded by $2V_{\max}NH$. Taking summation over $K$ episodes yields the bound $\mathcal{O}(\mu)$.

Utilizing \eqref{regret_cumulative} and \eqref{regret_2_result} in \eqref{reg_new_term2} for all $K$ episodes, with probability $1-\mu$, the cumulative regret associated with $\text{comp}_2$ is given by
\begin{align} \label{regret_proof_3}
\text{Reg}_{\text{comp}_2} \leq g(N) K + \mathcal{O} \bigg(2 V_{\max} \Gamma |\mathcal{C}| N H \sqrt{K \log K}\bigg). 
\end{align}
Finally, utilizing Lemma \ref{lemma_g} and taking $N \to \infty$ yields \eqref{theorem_regret_eq1}. 

This completes the proof.

\section{Proof of Theorem \ref{theorem_regret_3}}
\label{proof_theorem3}

Using \eqref{final_reg_1} and \eqref{regret_proof_3}, we get the cumulative regret over all $K$ episodes as follows
\begin{align}
\text{Reg} (K) & \leq \text{Reg}_{\text{comp}_1} + \text{Reg}_{\text{comp}_2} \nonumber\\
& \leq g(N) K + O\bigg(3 V_{\max} \Gamma |\mathcal{C}| N H \sqrt{K \log K}\bigg),    
\end{align}
from which \eqref{theorem_regret_eq1} follows by using Lemma \ref{lemma_g} and taking $N \to \infty$.

This completes the proof.

\section{Proof of Theorem \ref{theorem_5}} \label{indexability_proof}


To prove indexability for ON/OFF channels, we consider the decoupled problem \eqref{decoupled} with a subsidy $\lambda$ for the passive action, i.e., when $a=0$ (equivalently, a cost $\lambda$ for the active action). We assume positively correlated channels, i.e., $P_n(1|1,\delta) > P_n(1|0,\delta)$. From Definition \ref{indexability_def}, an arm is indexable if the set $\Phi(\lambda)$ increases monotonically as the activation cost $\lambda$ increases. We establish this by analyzing the threshold policies for both observed states.

\emph{Case 1:} (The last observed state is ``ON" ($c=1$).) 
When $c=1$, the expected immediate reward for taking the active action is $P_n (1 | 1, \delta) - \lambda$. Because the channels are positively correlated, $P_n (1|1, \delta)$ is decreasing with $\delta$. Consequently, the active action becomes less beneficial as $\delta$ increases. the action-value function $Q_{n, t} (1, \delta, 1)$ for the active action decreases as $\delta$ increases. Therefore, the optimal policy forms a threshold structure: take the active action if $\delta \leq \delta_1^*$ and the passive action if $\delta > \delta_1^*$, where $\delta_1^*$ is the AoCSI threshold. As $\lambda$ increases, the net reward for taking the active action decreases with $\delta$ and the active action becomes less beneficial for all $\delta$. Consequently, the threshold $\delta_1^*$ also decreases or remains the same. A decreasing threshold implies that the set of states where the arm is active decreases as $\lambda$ increases. Therefore, the set of passive states $\Phi_1(\lambda) = \{(1, \delta) \mid \delta > \delta_1^*\}$ in \label{def_5} monotonically increases as $\lambda$ increases.



\emph{Case 2:} (The last observed state is ``OFF" ($c=0$).) 
When $c=0$, the expected immediate reward for taking the active action is $P_n(1|0,\delta) - \lambda$. The probability $P_n(1|0,\delta)$ monotonically increases with $\delta$ and growing towards the steady state probability distribution. This implies that the immediate reward for activation also increases as $\delta$ grows. Because the active action becomes more beneficial over time, the optimal policy forms a reversed-threshold structure: take the passive action if $\delta \leq \delta_0^*$ and the active action if $\delta > \delta_0^*$. As the penalty $\lambda$ increases, the immediate net reward of taking the active action decreases for all $\delta$. This delays the point at which activation becomes optimal and the threshold $\delta_0^*$ exhibit larger values. A non-decreasing threshold $\delta_0^*$ implies that the set of passive states $\Phi_0(\lambda) = \{(0, \delta) \mid \delta \le \delta_0^*\}$ monotonically expands as $\lambda$ increases. 

Since the passive sets for both $c=1$ and $c=0$ grow monotonically as the activation cost $\lambda$ increases, all arms are indexable by Definition \ref{indexability_def}. This completes the proof.

\section{Proof of Theorem \ref{theorem_6}} \label{proof_Whittle}

We will derive the closed-form Whittle index by finding the cost $\lambda$ for which the active and passive actions are equally beneficial at a given state $(c, \delta)$, i.e., $Q(c, \delta, 1;\lambda) = Q(c, \delta, 0;\lambda)$. 

\emph{Case 1:} (Whittle index for $c=1$.) To compute the Whittle index at state $(1, \delta)$, we will utilize the action-value functions for both active and passive actions, i.e., $Q (1, \delta, 1;\lambda) = Q (1, \delta, 0;\lambda)$.

If passive action is taken at $(1, \delta)$, then, AoCSI increases to $\delta+1$. Since $P_n (1|1, \delta)$ is decreasing with $\delta$, the optimal policy prefers to stay passive at $\delta+1$. Thus, the passive action leads to a trajectory of staying passive forever. Hence, the action-value function for the passive action relative to $\lambda$ will be 0, i.e., $Q (1, \delta, 0;\lambda) =0$. If the active action is taken, the AoCSI becomes 1. The next state will be $(1,1)$ with probability $P_n (1|1, \delta)$ and $(0,1)$ with probability $P_n (0|1, \delta)$. The action-value function for the active action is then given by
\begin{align} \label{whittle_active}
Q(1, \delta, 1;\lambda) = & P_n(1|1, \delta) - \lambda + P_n(1|1, \delta) V(1, 1;\lambda) + \nonumber\\
& (1 - P_n(1|1, \delta))V(0, 1;\lambda).
\end{align}
Because the channels are positively correlated, if the last observed CSI is ``OFF" (i.e, $c=0$), it is highly likely to remain at the ``OFF" state. This makes the probability $P_n (0|1, \delta)$ very small and yields
$V (0, 1;\lambda) =0$. Conversely, state $(1, 1)$ represents the highest possible probability $P_n (1|1, 1)$. Since $\delta \geq 1$, the state $(1, 1)$ strictly belongs to the active region. Then the value function $V(1,1;\lambda)$ is given by
\begin{align}
V (1, 1; \lambda) = & P_n (1|1, 1) -\lambda + P_n (1|1, 1) V(1, 1; \lambda) \nonumber\\
& (1-P_n (1|1,1)) V(0,1;\lambda).
\end{align}
which yields
\begin{align}
V (1, 1; \lambda) = \frac{P_n (1|1, 1) -\lambda}{1 - P_n (1|1, 1)},
\end{align}
which holds because $V(0,1) =0$.
Substituting the values of $V(1, 0; \lambda)$ and $V(1, 1; \lambda)$ in \eqref{whittle_active} and $Q(1, \delta, 1; \lambda) = Q(1, \delta, 0; \lambda) =0$, we get
\begin{align}
P_n (1|1, \delta) - \lambda + P_n (1|1, \delta) \bigg(\frac{P_n (1|1, 1) - \lambda}{1 - P_n (1|1, 1)} \bigg) = 0.
\end{align}
Solving for $\lambda$ yields the Whittle index $W_n (c, \delta)$ as follows
\begin{align}
W_n (c, \delta) = \frac{P_n (1|1, \delta)}{1 - (P_n (1|1, 1) - P_n (1|1, \delta) )}.
\end{align}

\emph{Case 2:} (Whittle index for $c=0$.) Due to the positive correlation, if the last observed state is ``OFF," the probability of the channel being ``ON" starts at its lowest possible value and slowly rises. Throughout this trajectory, the expected reward is consistently lower than the expected reward of an arm starting at $c=1$. 
Hence, it is never optimal to schedule and the Whittle index is 0.

Combining cases 1 and 2, we get \eqref{whittle_index}. This completes the proof.

\end{document}